\documentclass[a4paper,fleqn]{cas-sc}

\usepackage[authoryear,longnamesfirst]{natbib}
\usepackage{cleveref}
\usepackage{algorithm}
\usepackage{algorithmic}
\usepackage{enumerate}
\usepackage{enumitem}
\usepackage{subfig}

\newcommand{\bx}{\mathbf{x}}
\newcommand{\bn}{\mathbf{n}}
\newcommand{\by}{\mathbf{y}}
\newcommand{\bi}{\mathbf{i}}
\newcommand{\Ht}{H^{\tau}}

\newcommand{\Mat}[1]{\underline{\mathbf {#1}}}

\def\tsc#1{\csdef{#1}{\textsc{\lowercase{#1}}\xspace}}
\tsc{WGM}
\tsc{QE}
\newtheorem{theorem}{Theorem}
\newtheorem{lemma}[theorem]{Lemma}
\newtheorem{corollary}[theorem]{Corollary}
\newdefinition{rmk}{Remark}
\newproof{pf}{Proof}
\begin{document}
\let\WriteBookmarks\relax
\def\floatpagepagefraction{1}
\def\textpagefraction{.001}

% Short title
\shorttitle{Estimation and UQ for Seabed and Its Roughness}    

% Short author
\shortauthors{}  

% Main title of the paper
\title [mode = title]{Estimation and Uncertainty Quantification for Seabed and Its Roughness With Longitudinal Waves}  

% Title footnote mark
\tnotemark[1]

% Title footnote 1.
\tnotetext[1]{This work was funded by Villum Investigator grant (no.\ 25893) from the Villum Foundation, partially supported by Research Council of Finland project numbers 359186 and 353093, and by the FEDER/MICINN--AEI under grants no.\ PID2020-112796RB-C21 and no.\ PID2024-155528OB-C21.}

% First author
\author[1]{Babak Maboudi Afkham}

% Corresponding author indication
\cormark[1]

% Email id of the first author
\ead{babak.maboudi@oulu.fi}

% Credit authorship
\credit{Conceptualization, Methodology, Software, Formal analysis, Investigation, Visualization, Writing -- original draft, Writing -- review \& editing}

% Address/affiliation
\affiliation[1]{organization={Research Unit of Mathematical Sciences, University of Oulu},
            addressline={Pentti Kaiteran katu 1, Linnanmaa}, 
            city={Oulu},
            country={Finland}}

% Second author
\author[2]{Ana Carpio}

% Email id of the second author
\ead{ana_carpio@mat.ucm.es}

% Credit authorship
\credit{Conceptualization, Methodology, Formal analysis, Investigation, Writing -- review \& editing, Supervision}

% Address/affiliation
\affiliation[2]{organization={Departamento de Matem\'atica Aplicada, Universidad Complutense de Madrid},
            city={Madrid},
            postcode={28040},
            country={Spain}}

% Corresponding author text
\cortext[1]{Corresponding author}

% Here goes the abstract
\begin{abstract}
This paper introduces an infinite-dimensional Bayesian framework for acoustic seabed tomography, leveraging wave scattering to simultaneously estimate the seabed and its roughness. Tomography is considered an ill-posed problem where multiple seabed configurations can result in similar measurement patterns. We propose a novel approach focusing on the statistical isotropy of the seabed. Utilizing fractional differentiability to identify seabed roughness, the paper presents a robust numerical algorithm to estimate the seabed and quantify uncertainties. Extensive numerical experiments validate the effectiveness of this method, offering a promising avenue for large-scale seabed exploration.
\end{abstract}

% Use if graphical abstract is present
%\begin{graphicalabstract}
%\includegraphics{}
%\end{graphicalabstract}

% Research highlights
\begin{highlights}
\item Introduced hierarchical Bayesian inference for seabed roughness estimation
\item Seabed roughness identified via fractional differentiability of random fields
\item Posterior distribution proven well-posed in infinite dimensions
\item Developed sampling algorithms for nonsmooth inverse scattering posteriors
\item Accurate uncertainty-aware reconstruction demonstrated numerically
\end{highlights}

% Keywords
% Each keyword is seperated by \sep
\begin{keywords}
 seabed tomography\sep inverse scattering problem\sep Bayesian framework\sep fractional differentiability
\end{keywords}

\maketitle

% Main text
%\section{}\label{}

% Numbered list
% Use the style of numbering in square brackets.
% If nothing is used, default style will be taken.
%\begin{enumerate}[a)]
%\item 
%\item 
%\item 
%\end{enumerate}  

% Unnumbered list
%\begin{itemize}
%\item 
%\item 
%\item 
%\end{itemize}  

% Description list
%\begin{description}
%\item[]
%\item[] 
%\item[] 
%\end{description}  

% Uncomment and use as the case may be
%\begin{theorem} 
%\end{theorem}

% Uncomment and use as the case may be
%\begin{lemma} 
%\end{lemma}

%% The Appendices part is started with the command \appendix;
%% appendix sections are then done as normal sections
%% \appendix

\section{Introduction} \label{sec:intro}

Understanding the seabed is crucial for many offshore activities, particularly in renewable energy development and marine transportation. Accurate inference of seabed topographic characteristics, such as smoothness versus roughness, remains a challenging problem with significant practical implications \cite{WANG2025104383,anderson2008acoustic}. A notable example is acoustic backscatter ambiguity \cite{lurton2015backscatter,clarke2015multispectral}, where active sonar systems struggle to distinguish between smooth, hard seabeds and rough, soft ones, leading to misclassification. Another important challenge is acoustic reverberation \cite{jia2021underwater,jenserud2015measurements}, in which multiply scattered and reflected waves contaminate measurements; mischaracterizing seabed roughness can substantially bias interpretation and degrade imaging performance. This work presents a framework for estimating seabed roughness, characterized by fractional differentiability, together with uncertainty quantification, contributing to more reliable acoustic seabed characterization.

Seabed inversion problems aim to infer seabed properties as an alternative to costly drilling-based exploration by analyzing how acoustic or elastic waves interact with the seabed \cite{jackson2007high,tromp2020earth}. Classical formulations seek to estimate parameters such as sound speed, density, and attenuation of seabed layers, and include Yamamoto theory inversion \cite{yamamoto}, waveguide characteristic impedance methods \cite{waveguide}, multistep optimization techniques \cite{multistep}, and Bayesian approaches \cite{bayesian1,bayesian2}. Much of the existing literature focuses on shallow-water environments and the spatial variability of soft sediments. More recently, seabed tomography techniques that exploit the full wavefield have been developed \cite{tomography}. In these approaches, waves emitted from sea-surface or subsurface sources propagate through the medium and are scattered by the seabed, with the resulting signals recorded by sensors \cite{anderson2008acoustic,tromp2020earth}. The geometry and physical characteristics of the seabed, including biological structures and rock formations, are encoded in these measurements \cite{frederick2020seabed}. The choice between acoustic and elastic modeling depends on seabed properties: elastic models provide greater fidelity for complex, layered, or solid seabeds through explicit representation of elastic moduli, whereas simpler acoustic models are appropriate for fluid-like seabeds but may introduce inaccuracies in representing solid structures.

In this paper, we focus on characterizing deep-water seabed profiles by Bayesian tomographic techniques using elastic waves. 
More precisely, we propose computational techniques to estimate the seabed roughness.
The seabed roughness, also referred to as topographic variability, is a crucial parameter in marine biology, geophysics, as well as ocean engineering, in relation with applications such as submarine cabling or stability of floating wind turbine platforms, for instance, see, e.g., \cite{anderson2008acoustic,bjorno2017applied,frederick2020seabed,valentine2005classification} and the references therein.
There have been many attempts to quantify the statistical properties of roughness, e.g., 
using spectral statistics, auto-correlation, and fractal analysis \cite{bjorno2017applied}. In the context of 
seabed tomography, miss-identifying the roughness characteristics can lead to artifacts, e.g., reverberation \cite{bjorno2017applied}, which compromise the quality of estimations.
A similar situation is observed in Biomedicine when inferring the shape of the boundary of tumors \cite{carpio2023shear}.

%The seabed roughness, also referred to as topographic variability, is a crucial parameter in marine biology, geophysics, as well as ocean engineering, see, e.g., \cite{anderson2008acoustic,bjorno2017applied,frederick2020seabed,valentine2005classification}. There have been many attempts to quantify the statistical properties of roughness, e.g., 
%using spectral statistics, auto-correlation, and fractal analysis \cite{bjorno2017applied}. In the context of 
%seabed tomography, miss-identifying the roughness characteristics can lead to artifacts, e.g., reverberation \cite{bjorno2017applied}, which compromise the quality of estimations.

The problem of inferring an object from indirect measurements is an inverse problem. Such problems are often ill-posed, due to data incompleteness and measurement noise  \cite{hansen2021computed,kaipio2006statistical}. Seabed tomography,
i.e., estimating the seabed profile from scattered wave patterns on the sea-surface,
is an example of an ill-posed inverse problem.
Only partial and noisy measurements of the scattered wave are possible. Therefore, different seabed configurations can produce similar measurements \cite{frederick2020seabed,borcea2023data}.

The Bayesian approach to inverse problems is a promising method to handle ill-posed inverse problems. In this approach, model parameters, unknowns, and noise are modelled as random variables or fields. When the ill-posedness of an inverse problem is a result of an incomplete measurement, we incorporate our knowledge of the solution in the form of a prior distribution. The solution to a Bayesian inverse problem is the posterior distribution, i.e., the conditional distribution of the unknown given the measurement and the prior distribution. Therefore, in the Bayesian setting, credible solutions are characterized by their posterior probability.

We consider the seabed curve to be a random function. A classic approach to incorporate distributions of random functions, e.g., using Markov random fields \cite{koller2009probabilistic,suuronen2022cauchy,bardsley2012laplace}, into the Bayesian framework is by defining probability distributions on the discrete random functions. In this approach, the number of problem-unknowns often scales with discretization level. This leaves such methods impractical for large-scale problems. Infinite-dimensional Bayesian methods, is a recent approach for incorporating distributions of random functions, also referred to as \emph{random fields}, into the posterior distribution. These methods decouple statistical and spatial discretization of random functions. Therefore, resulting in converging statistical solutions for discrete problems with a fine spatial discretization.

In this paper we present an infinite-dimensional Bayesian framework for the seabed tomography problem. We take a goal-oriented approach by defining the unknown variable of the model to be the seabed interface. A central hypothesis in this paper is that the statistical description of the the seabed is isotropic, i.e., the seabed is a stationary and translation-invariant random function. This is a common assumption, e.g., see Chapter 8 in \cite{bjorno2017applied}. The implication of this assumption is that we can infer global properties of the seabed from local measurements.

In the context of scattered light from a surface, identifying global roughness from local measurement is done in the works of \cite{karamehmedovic2013efficient,schroder2011modeling}.
These works estimate the power of each discrete Fourier mode of a surface,
%in terms of its power spectral density
from the scattered light by a small region of the surface. Although these methods are effective in identifying the surface roughness for some range of frequencies, they are not suitable for inferring the surface profile. In addition, quantifying uncertainties
via such methods, e.g., due to model errors or incomplete measurements, is still an open direction of research.

In this paper we identify the roughness of an isotropic seabed by means of its level of fractional differentiability. To our knowledge, this is a novel method for inverse scattering problems. Unlike the above method which considers a finite number of frequencies, the fractional differentiability considers the \emph{decay} of the power spectral density function (i.e., its asymptotic behaviour).
It is shown in \cite{maboudi2024} that identifying the roughness via the fractional differentiability is robust to noise and incomplete measurements, for a wide range of problems.

We define the posterior distribution to be the joint probability distribution of the seabed's profile and roughness,
given the measurements. We show that the joint posterior distribution is well-posed, i.e., it is locally Lipschitz with respect to the measurement. We present a numerical algorithm which explores the posterior, provides an estimate of the roughness and the seabed, and quantifies the uncertainties as well. We carry extensive numerical experiments to test the method. These experiments suggest that our method gives an accurate estimation of the seabed. In addition, analyzing the uncertainty reveals the regions where the estimated seabed diverges from the true seabed. Therefore, this method is a promising approach for large-scale exploration of the seabed. While the primary focus is on seabed estimation, the underlying framework is broadly applicable to a wide range of inverse scattering problems.

While the proposed model adopts simplifying assumptions such as isotropy and homogeneity of seabed properties, these choices are made deliberately to isolate and study the effect of seabed roughness within a principled Bayesian framework. The primary contribution of this work is a methodology that enables direct inference of seabed roughness through fractional regularity, without relying on post-processing of reconstructed fields. Extending the approach to real-world applications will require further investigation of noise sensitivity, more complex ocean–ground interactions, and fully three-dimensional settings.

The paper is organized as follows. We describe the interaction of the emitted waves with the seabed in terms of   wave equations in \Cref{sec:problem}. Furthermore, we suggest a numerical procedure to solve this problem and formulate the seabed tomography problem as an inverse problem. In \Cref{sec:bayes}, we take a goal-oriented approach and reformulate the seabed tomography problem in Bayesian terms. We construct the posterior distribution following a Bayesian approach and provide a numerical algorithm for exploring this posterior. Numerical experiments are provided in \Cref{sec:results}, where we test the method for prior samples and out-of-prior samples. We present conclusive remarks in \Cref{sec:conclusion}.

\section{Problem Formulation} \label{sec:problem}
In this section we introduce the problem of inferring the seabed profile from surface measurements and discuss the associated challenges. 
%We first introduce the classical two-step method to inferring the seabed, i.e., first inferring the density and %elastic coefficients of the wave equation, and then inferring the seabed. 
We first introduce the model governing the interaction of the emitted waves with the seabed (the so-called forward model) and define the observation operator that relates the model to the data. For this, we choose a simplified wave equation to represent the propagation of longitudinal elastic waves. Then, we introduce the goal-oriented approach of directly inferring the seabed interface from data.

\subsection{The Forward Model:\ Scalar Wave Equation}
The propagation of longitudinal elastic waves
(sometimes referred to as p-waves)
emitted by a source can be modelled with the scalar wave equation. Let $\Omega\subset \mathbb R^2$ be a rectangular domain.
We use the following notation:\
$\bx=(x,y)\in \Omega$ are Cartesian coordinates with $x\in[-3,3]$ and $y\in [-1.5,1.5]$, $u:\mathbb R^2\times \mathbb R\to \mathbb R$ is the displacement, and $\rho, \alpha:\mathbb R^2\to \mathbb R$ are positive density and elastic coefficients, respectively.
Moreover, inspired by \cite{carpio2023shear,tsogka2002time}, we introduce
dimensionless source terms $f_i(t)g(\bx - \bx_{s_j})$ that
model point sources at locations $\bx_{s_j}$, $j=1,\dots,N_s$ with 
\begin{equation} \label{eq:freqs}
    f_i(t) = (1-2\pi^2(f^0_i)^2t^2)e^{-\pi^2(f_i^0)^2t^2}, \qquad i=1,\dots,N_f. 
\end{equation}
where $f^0_i>0$, $i=1,\dots,N_f$, are the central frequencies of the sources. Then the forward problem takes the form,
for $i=1,\dots,N_f, ~ j = 1,\dots,N_s$:
\begin{equation} \label{eq:wave-eq}
\left\{
    \begin{aligned}
        & \rho(\bx) u_{tt}(\bx,t) - \nabla \cdot( \alpha(\bx) \nabla u(\bx,t) )=  \qquad & \bx \in \Omega, t>0, \\
        & \qquad f_i(t)g(\bx - \bx_{s_j}), \\
        & \frac{\partial u(\bx,t)}{\partial \bn} = 0, & \bx \in \partial \Omega^+, t>0, \\
        & \frac{\partial u(\bx,t)}{\partial \bn} = -\frac{ u_t(\bx,t)}{c(\bx)},  & \bx \in \partial \Omega^-, t>0, \\
        & u(\bx,0) = u_t(\bx,0) \equiv 0,  & \bx \in \Omega. \\
    \end{aligned} 
\right.
\end{equation}
Here, $g(\mathbf s) = \exp(-\|\mathbf s\|^2/0.0025)$, with $\mathbf s=(s_1,s_2)$ and sources are uniformly distributed in the $x$-axis at the depth (y component) of $1.4$. The model has been non-dimensionalized using 1km and 1s as space and time reference magnitudes. Note that $c:\mathbb R^2 \to \mathbb R^+$, in \eqref{eq:wave-eq}, is the wave speed obtained from the relation $\alpha = \rho c^2$.

%The dimensionless source term, $f_i(t)g(\bx - \bx_{s_j})$, on the right-hand-side of \eqref{eq:wave-eq}, 
%models a point sources at locations $\bx_{s_j}$, $j=1,\dots,N_s$, and are inspired by \cite{tsogka2002time,carpio2023shear}, with 
%\begin{equation} \label{eq:freqs}
%    f_i(t) = (1-2\pi^2(f^0_i)^2t^2)e^{-\pi^2(f_i^0)^2t^2}, \qquad i=1,\dots,N_f. 
%\end{equation}
%where $f^0_i>0$, $i=1,\dots,N_f$, are the central frequencies of the sources. 

The top boundary of the domain, for the ocean surface,
is denote by $\partial \Omega^+$, while the rest is denoted by $\partial \Omega^-$, in such a way that $\partial \Omega^+\cup \partial \Omega^- = \partial \Omega$.
We impose a homogeneous Neumann boundary condition on $\partial \Omega^+$ and a non-reflective boundary condition for the rest of the boundaries to model the open sea condition. We define the solution operator $\mathcal W$ to be the mapping $(\rho, \alpha) \mapsto u$.

We collect  measurements from the ocean surface and model the measurement process with a set of observation operators $\mathcal O_i$ which for a given solution $u$, associated with coefficients 
$(\rho, \alpha)$, provides the discrete time-series of ocean surface  measurements
\begin{equation}
    \mathcal O_i(u) = u(\bx^{\text{sensor}}_k, t_{l}), \qquad k=1,\dots,N_{\text{sensor}}, ~ l = 1,\dots, N_{t},
\end{equation}
for $i=1,\dots,N_f$ frequencies, where $\bx^{\text{sensor}}_k$ stands for the location of $N_{\text{sensor}}$ ocean surface sensors whose exact geometry is discussed below. Note that the time period of interest is $t\in [0, t_{\text{max}}]$ with $t_{\text{max}} = t_{N_t}$. The measurement is often corrupted by noise. We consider a Gaussian additive noise and define the measurement to be
\begin{equation} \label{eq:inverse}
    \by^{\text{obs}}_i= \mathcal G_i(\rho, \alpha) + \varepsilon_i I, \qquad i=1,\dots,N_f,
\end{equation}
where $\mathcal G$ is the forward operator defined as $\mathcal G :=\mathcal O \circ \mathcal W$, and $\varepsilon\sim \mathcal N(0, \sigma^{\text{noise}}_i I)$,  $I$ being the identity matrix of size $N_{\text{sensor}}\times N_{t}$. The seabed tomography problem aims to identify the seabed, i.e., the interface between rock and water, from noisy measurements $\by^{\text{obs}}_i$, $i=1,\dots,N_f$.

It is sometimes useful to investigate time-snapshots of the measurement. To this end, we introduce the slicing notation $\by^{\text{obs}}_i[l]$ and $\mathcal G_i(\rho, \alpha)[l]$ which referes to the $l$th column of $\by^{\text{obs}}_i$ and the $l$th solution snapshot $u(\mathbf x,t_l)|_{\partial \Omega ^+}$, respectively.

We remark that there are alternative ways to model wave propagation, e.g., using the elastodynamics \cite{tsogka2002time} while our choice is due to its simplicity and versatility.

%\begin{remark}
%    The forward operator $\mathcal G$ is a nonlinear and non-differentiable forward operator due to the non-linearity in $\rho$ and $\alpha$ with respect to $h$.
%\end{remark}

The seabed tomography problem is inferring $(\rho,\alpha)$ from the measurement time sequences
$\by^{\text{obs}}_i$, for $i=1,\dots, N_f$. Since data are only available at the ocean surface, the inverse problem presented in \eqref{eq:inverse} is considered to have missing data. This makes the inverse problem severely ill-posed, i.e., it may lack a unique solution \cite{borcea2021reduced}.

Direct discretization of this problem with space dependent coefficients $\rho$ and $\alpha$ becomes rather costly. One particular challenge for discretizing a large-scale system is that a discrete description of $(\rho, \alpha)$, in terms of their values at each mesh point, scales with the discretization level. This results in discrete matrices with large condition numbers, small time steps for stability and unaffordable computational costs. Instead, in the following section we exploit the piece-wise constant nature of $(\rho, \alpha)$ to develop a discretization invariant inverse problem.

\subsection{Goal-Oriented Inverse Problem}
In this section we assume that $\rho$ and $\alpha$ are piece-wise smooth
in the water and piecewise constant in the sub-seabed domain, and that water and sub-seabed have known constant density and elastic coefficients. Here, we work in a layered geometry, i.e., each layer corresponds to a different material and their density and velocity profiles are known. Generally, the properties of water (e.g., density and elastic modulus) vary spatially due to changes in pressure, temperature, and salinity. In this study, depth dependence is incorporated through a linearized approximation of the seawater equation of state, representing a first-order fit to the pressure dependence of density and bulk modulus \cite{fofonoff1983algorithms}. This approximation is intended for simplicity and analytical tractability; more complex and fully nonlinear depth-dependent relations may be incorporated in the same framework if required. For simplicity, the density and elastic properties of the rock are assumed to be constant, although spatially varying rock properties and multi-layered rock structure could also be modeled in a similar manner. Therefore, we may identify $\rho$ and $\alpha$ by the interface function $h:\mathbb R \to \mathbb R$ defining the seabed profile
\begin{equation}
\rho((x,y); h) = \left\{
    \begin{aligned}
        &\rho^+(y), \qquad &y > h(x), \\
        &\rho^-, \qquad &y \leq h(x),
    \end{aligned} 
\right.
\quad
\alpha ((x,y); h) = \left\{
    \begin{aligned}
        &\lambda^+(y), \qquad &y > h(x), \\
        &\lambda^- + 2\mu^-, \qquad &y \leq h(x),
    \end{aligned} 
\right. \label{coefficients}
\end{equation}
where $\rho^+(y)= \rho_0(1 + 0.0044(1.5-y)), \lambda^+=\lambda_0(1+0.025(1.5-y))$  and $\rho^-, \lambda^-, \mu^-$ are density and Lam\'e moduli \cite{landau2012theory} of water and 
sub-seabed rock, respectively. In our simulations, we will typically use  material constants
 $\rho^-, \lambda^-, \mu^-$ characterizing basalt, as an example of a typical material of the seabed. The material parameters used in the numerical experiments correspond to a hard-rock seabed, which typically exhibits strong acoustic contrast and increased roughness relative to the overlying water column, whereas softer sedimentary seabeds tend to be smoother and display weaker contrasts \cite{jensen2011computational}.

We now introduce the goal-oriented forward operator $\mathcal G_i^{\text{goal}}$ to be the map $h \mapsto u$. The measurement model now takes the form
\begin{equation} \label{eq:inverse-goal}
    \by^{\text{obs}}_i= \mathcal G^{\text{goal}}_i(h) + \varepsilon_i, \qquad i=1,\dots,N_f.
\end{equation}
We remark that to eliminate computational artifacts of the boundaries, we require $h$ to meet the boundaries of the domain $\Omega$ in an orthogonal direction, i.e., the tangent to $h$ is normal to the boundary, at the location of the boundary.

\subsection{Finite Element Discretization} In this section we discuss the numerical approximation of $\mathcal W$, i.e., the approximate solution to \eqref{eq:wave-eq}. We use the finite-element method (FEM) to discretize
\eqref{eq:wave-eq} in space and finite differences (FD) in time. We provide the fully discrete solver at the end of this section. Let us introduce  the velocity $v := u_t$. This transforms \eqref{eq:wave-eq} to 
\begin{equation} \label{eq:wave-eq-1st-order}
\left\{
    \begin{aligned}
        & \rho(\bx; h) v_{t}(\bx,t; h) - \nabla \cdot( \alpha(\bx; h) \nabla u(\bx,t; h) ) -  \qquad & \bx \in \Omega, t>0, \\
        & \qquad f_i(t)g(\bx - \bx_{s_j}) = 0, \\
        & u_{t}(\bx,t; h) - v(\bx,t; h) = 0, & \bx \in \Omega, t>0, \\
        & \frac{\partial u(\bx,t; h)}{\partial \bn} = 0, & \bx \in \partial \Omega^+, t>0, \\
        & \frac{\partial u(\bx,t; h)}{\partial \bn} + \frac{ v(\bx,t; h)}{c(\bx; h)}=0,  & \bx \in \partial \Omega^-, t>0, \\
        & u(\bx,0; h) = v(\bx,0; h) \equiv 0,  & \bx \in \Omega. \\
    \end{aligned} 
\right.
\end{equation}
Let $w\in H^1(\Omega)$ and $q\in L^2(\Omega)$ \cite{Quarteroni2014} be test functions. Multiplying  \eqref{eq:wave-eq-1st-order} by the test functions, using Green's identity  and applying the boundary conditions we find
\begin{equation} \label{eq:weak-form}
\left\{
    \begin{aligned}
        &\int_\Omega \rho \, v_{t} w  \ d\mathbf x 
         + \int_\Omega \rho c^2  \, \nabla u \cdot \nabla w \ d\mathbf x
         + \int_{\partial \Omega^-}   \rho c \,  v w \ d s  
         - \int_\Omega   f_i g  w \ d\mathbf x = 0, \\
         &\int_\Omega u_t q \ d\mathbf x- \int_\Omega v q  \ d\mathbf x = 0.
    \end{aligned}
\right.
\end{equation}
Discretizing this equations on a uniform FEM mesh and applying the method of lines \cite{Edsberg2015-rk} yields the semi-discrete system of equations
\begin{equation} \label{eq:semi-discrete}
    \begin{aligned}
        &\Mat{R}  \frac{d}{dt} \mathbf v 
        + \Mat{D} \mathbf u +   \Mat{C}  \mathbf v - f_i(t) \mathbf b = 0 \\
        & \Mat{M}_1 \frac{d}{dt} \mathbf u - \Mat{M}_2 \mathbf v = 0,
    \end{aligned}
\end{equation}
where, $i = 1,\dots, N_f$. The components of matrices $\Mat R, \Mat D, \Mat C, \Mat M_1$ and $\Mat M_2$ are given by
\begin{equation} \label{eq:FEM_matrices}
    \begin{aligned}
     \relax[ \Mat{R} ]_{mn} &= \int_\Omega  \rho \, \phi_m \phi_n \ d\mathbf x, \qquad
     &[ \Mat{D} ]_{mn} &= \int_\Omega  \rho c^2 \, \nabla \phi_m \cdot \nabla \psi_n \ d\mathbf x,  \\
    [ \Mat C ]_{mn} &= \int_{\Gamma^-} \rho c \,  \phi_m \phi_n \ d s,
    &[ \mathbf b ]_{m} &=  \int_\Omega  g  \phi_m  \ d\mathbf x, \qquad \\
    %[ \mathbf b ]_{n} &=  \int_\Omega  g  \phi_k  \ d\mathbf x \\,
    [ \Mat M_1 ]_{mn} &=  \int_\Omega \psi_m \psi_n \ d\mathbf x,
    &[ \Mat M_2 ]_{mn} &=  \int_\Omega \psi_m \phi_n \ d\mathbf x, \qquad \\
    \end{aligned}
\end{equation}
for, $1\leq m,n \leq N_{\text{FEM}}$ where $[\cdot ]_{mn}$ represents the element of the matrix located at the $m$th row and $n$-th column, and $\{ \phi_1, \dots, \phi_{N_{\text{FEM}}}\}$ and $\{ \psi_1, \dots, \psi_{N_{\text{FEM}}}\}$ are first-order Lagrange polynomials approximating $W^{N_{\text{FEM}}}$ of $H^1(\Omega)$ and $Q^{N_{\text{FEM}}}$ of $L^2(\Omega)$ \cite{Quarteroni2014}, respectively. Furthermore, $\mathbf u$ and $\mathbf v$, in \eqref{eq:semi-discrete}, are the vectors of coefficients of $u$ and $v$ in these bases, respectively. Without loss of generality, we consider the spaces in which we expand $u$ and $v$ to be identical, i.e., $\phi_i=\psi_i$, for all $i$. This assumption makes all the matrices in \eqref{eq:FEM_matrices} symmetric and reduces the second equation in \eqref{eq:semi-discrete} to $ d\mathbf u/dt - \mathbf v = 0$. The vector $f_i(t)\mathbf b$ is the discrete source term associated with the $i$th frequency.

The choice of the time-integration method is crucial in effective seabed tomography and roughness estimation. Numerical dissipation or dispersion phase error introduced by the time integrator can bias the effective frequency response for the forward operator, especially at higher frequencies that carry most of the resolution. A wave equation, as formulated above, admits a Hamiltonian structure inside the domain. We therefor choose the second order symplectic St\"ormer-Verlet time-stepping scheme \cite{hairer2006structure} to preserve this Hamiltonian structure and avoids systematic numerical energy drift. Symplectic schemes exhibit bounded long-time energy error and do not introduce artificial damping of high-frequency modes in the conservative setting. The fully discrete model is given by
\begin{equation} \label{eq:fully-discrete}
    \begin{aligned}
        &\Mat R \frac{\mathbf v^{n+1/2} - \mathbf v^{n}}{\Delta t} +   \Mat D \mathbf u^{n} +  \Mat C  \mathbf v^{n} - f_i(t^n) \mathbf b  = 0, \\
        &\Mat M \frac{\mathbf u^{n+1} - \mathbf u^{n}}{\Delta t} - \Mat M \mathbf v^{n+1/2} = 0,\\
        &\Mat  R \frac{\mathbf v^{n+1} - \mathbf v^{n+1/2}}{\Delta t} +  \Mat D \mathbf u^{n+1} +   \Mat C  \mathbf v^{n+1/2} - f_i(t^{n+1/2}) \mathbf b = 0. \\
    \end{aligned}
\end{equation}

Evaluating the inverse of $\Mat R$ and $\Mat M$, e.g., to solve for $\mathbf v^{n+1/2}$ and $\mathbf v^{n+1}$ is computationally demanding for large-scale problems. Therefore, We use  conjugate-gradient iterations \cite{gatto2022mathematical} to obtain $\mathbf v^{n+1/2}$ and $\mathbf v^{n+1}$ from \eqref{eq:fully-discrete}.

In practice, for each source located at $\bx_j$, $j=1,\dots,N_s$ the corresponding  wave may be measured individually. However, the solution to the simultaneous emission of the source terms is the superposition of the individual solutions. Therefore, we perform simultaneous emissions in this paper.

Time snapshots of the solution $u$ to the discrete model \eqref{eq:fully-discrete}  for the emission of 5 sources with equal frequencies and some seabed $h$ can be found in \Cref{fig:snapshots}. Details of the implementation can be found below in \Cref{sec:results}.

\begin{figure}
    \centering
    \includegraphics[width=0.8\textwidth]{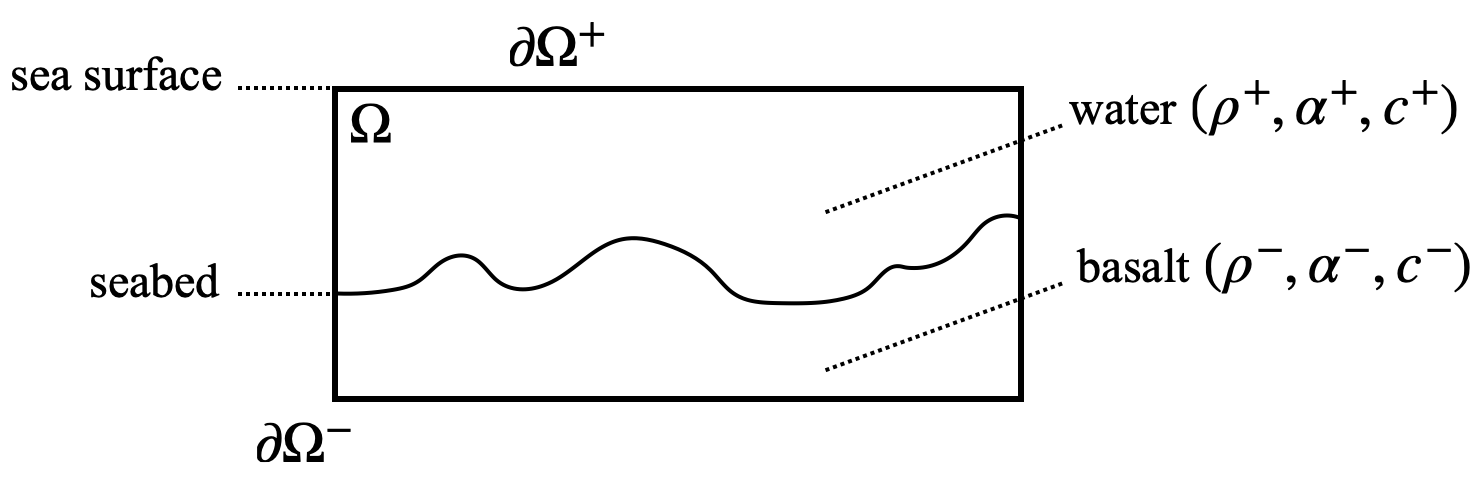}
    \caption{Information on the computational domain for seabed tomography.}
    \label{fig:seabed}
\end{figure}

We define the discrete solution operator $\mathcal W^{\delta}$, observation operator $\mathcal O_i^{\delta}$, and forward operator $\mathcal G_i^{\delta}:=\mathcal O_i^{\delta} \circ \mathcal W$, $i=1,\dots,N_f$, according to the discrete wave equation \eqref{eq:fully-discrete}. The discrete measurement model now takes the form
\begin{equation} \label{eq:inverse-discrete}
    \by^{\text{obs}}_i= \mathcal G^{\delta}_i(h) + \varepsilon_i, \qquad i=1,\dots,N_f.
\end{equation}
Here, by abusing the notation, we refer to $\by^{\text{obs}}_i$ to be the discrete measurement. Examples of the measurement is presented in \Cref{fig:measurement} which corresponds to the seabed $h$ illustrated in \Cref{fig:snapshots}.

\begin{figure}
    \begin{minipage}[b]{0.5\linewidth}
    \centering
    \includegraphics[width=\textwidth]{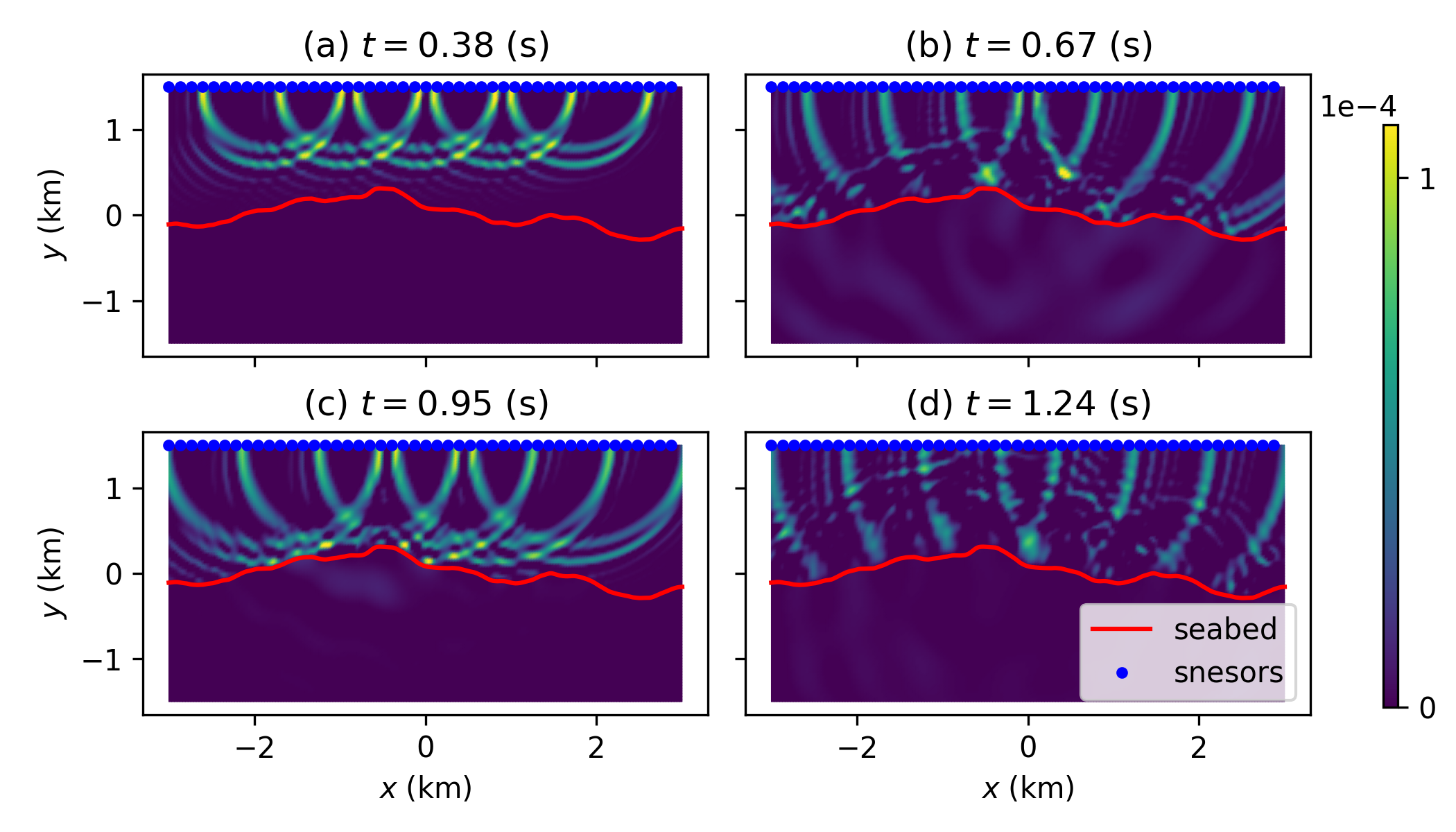}
    \caption{Time snapshots of the solution to the discrete problem \eqref{eq:fully-discrete}.}
    \label{fig:snapshots}
    \end{minipage}
    \begin{minipage}[b]{0.45\linewidth}
    \centering
    \includegraphics[width=\textwidth]{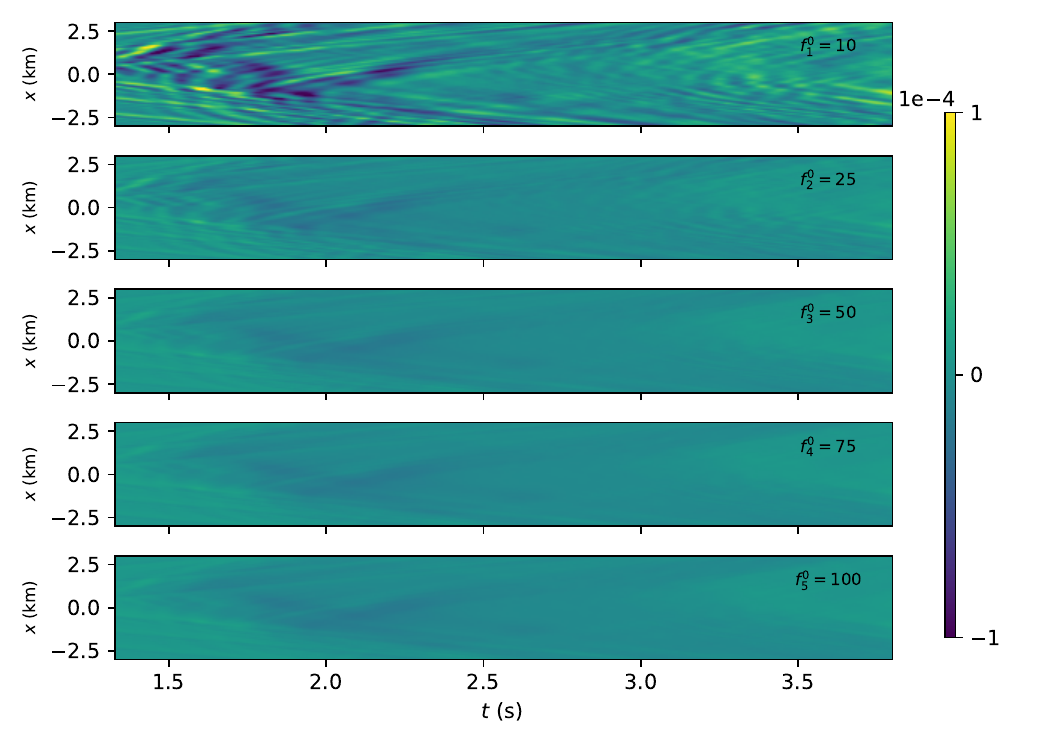}
    \caption{Information on the computational domain for seabed tomography.}
    \label{fig:measurement}
    \end{minipage}
\end{figure}

\section{Bayesian Formulation of Seabed Inference} \label{sec:bayes}
In this section we introduce the Bayesian approach to solving the inverse problem \eqref{eq:inverse-discrete}. 
In the Bayesian framework, we formulate quantities in the model, e.g., the unknowns and the measurement, in terms of random variables or fields. We encode our prior knowledge into this framework using infinite-dimensional probability distributions, i.e., probability measures. The solution to the Bayesian problem is then the conditional probability measure of the unknown (the seabed) given the measurement (displacements derived from the wave field at receivers).

The main components of a Bayesian inverse problem are the \emph{prior}, the probability distribution of the seabed in the absence of measurement, the \emph{likelihood} the probability distribution of the measurement for a given seabed, and the \emph{posterior}, the probability distribution of the seabed given measurement. In the following sections we will summarize these components. 

\subsection{Prior Modeling} \label{sec:prior}
In this section we discuss how to impose prior knowledge on the seabed based on its regularity. We first introduce the Hilbert scale spaces and discuss how it is associated with the regularity of functions. We then recall Gaussian distributions of Hilbert scale spaces.

Let $X=C([a,b])$ be the space of continuous functions and let $\{ e_j \}_{j=1}^{\infty}$ be an 
orthonormal basis, with respect to the $L^2([a,b])$ inner-product, for $X$. We define a norm for $f\in X$ as
\begin{equation} \label{eq:normHt}
    \| f \|^2_{H^\tau} :=\sum_{j=1}^{\infty} j^{2\tau}\left|  \langle f, e_j  \rangle \right|^2. 
\end{equation}
The Hilbert scale space $H^\tau$ is then defined by
\begin{equation}
    H^\tau([a,b]) := \{f\in X :  \| f \|_{H^\tau} < \infty \}.
\end{equation}
The following example, demonstrates the connection between the Hilbert scale parameter $s$ and the regularity, e.g., roughness vs. smoothness, of a function.

\paragraph{Example} Let $\{e_j\}_{j=1}^{\infty}$ represent the Fourier basis for $f \in X$, e.g., by periodic extension \cite{tolstov2012fourier} of the domain $[a,b]$ to the whole line. In this case $f$ has the Fourier expansion
\begin{equation}
    f = \sum_{j\in \mathbb Z} \hat f_j e_j,
\end{equation}
where $e_j = \exp({\bi 2 \pi \frac{j}{(b-a)}x})/z_j$, with  $\bi=\sqrt{-1}$, and $z_j\in \mathbb R^+$ is the normalization constant for $e_j$. Furthermore, $\hat f_j$ is the $L^2$ projection of $f$ onto $e_j$. We denote by $D^{\tau} f$  the fractional derivative of $f$ defined as
\begin{equation} \label{eq:fourier-series}
    (D^\tau f)(x) := \sum_{j\in \mathbb Z} (\bi 2 \pi \frac{ j}{b-a} )^\tau \hat f_j e_j.
\end{equation}
Note that this definition coincides with standard derivatives of $f$ for integer values of $\tau$. Therefore, $D^\tau f$ exists, if the \eqref{eq:fourier-series} is absolutely convergent, i.e., $\| f \|_{H^{\tau}}<\infty$. It is an easy exercise to verify that if $D^\tau f$ exists, then $D^s f$ exists for $s\leq \tau$. Therefore, $\|\cdot \|_{H^\tau}$ can be used to identify the regularity of $f$.

Now we introduce a Gaussian distribution for functions with regularity $\tau$. Let $(\Gamma, \mathcal A, \mathbb P)$ be a complete probability space \cite{ibragimov2012gaussian}, with $\Gamma$ a measurable space, $\mathcal A$ a $\sigma$-algebra defined on $\Gamma$, and $\mathbb P$ a probability measure defined on $\mathcal A$. We call $\eta$ to be an $H^{\tau}$-valued Gaussian random function, if $\langle \eta , \xi \rangle$ is a real-valued Gaussian random variable. The following theorem fully characterizes $\eta$ in terms of a mean function $m \in H^{\tau}$ and a trace-class, symmetric, and non-negative linear operator $\mathcal C:H^{\tau} \to H^{\tau}$.

\begin{theorem} \label{thm:guassian}
    \cite{ibragimov2012gaussian} Let $\eta$ be an $H^{\tau}$-valued Gaussian random function, then we can find $m \in H^{\tau}$ and trace-class, symmetric, and non-negative linear operator $\mathcal C:H^{\tau} \to H^{\tau}$, known as the covariance operator, such that
    \begin{equation}
        \begin{aligned}
            \langle m , \xi \rangle &= \mathbb E \langle \eta , \xi \rangle, \qquad &\forall \xi\in H^{\tau}, \\
            \langle \mathcal C \xi , \zeta \rangle &= \mathbb E \langle \eta - m , \xi \rangle \langle \eta - m , \zeta \rangle, \qquad &\forall \xi, \zeta \in H^{\tau},
        \end{aligned}
    \end{equation}
where $\mathbb E$ denotes expectation \cite{ibragimov2012gaussian}. We then define the push-forward probability measure $\mathcal N(m, \mathcal C) := \mathbb P \circ \eta^{-1}$. We denote $\eta \sim \mathcal N(m, \mathcal C)$ to refer that $\eta$ is a Gaussian random function define on the probability space $(H^{\tau}, \mathcal B (H^{\tau}), N(m, \mathcal C))$, where $\mathcal B (H^{\tau})$ is the Borel $\sigma$-algebra \cite{dudley2018real}. 
\end{theorem}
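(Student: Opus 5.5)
The plan is the standard construction of mean and covariance for a Gaussian measure on a separable Hilbert space, which is how this result is usually proved in \cite{ibragimov2012gaussian}. Throughout, $H:=H^{\tau}$ carries the inner product induced by \eqref{eq:normHt}. Since $\{e_j\}$ is countable, $H$ is separable, and $\eta$ is Borel measurable with Gaussian linear functionals $\langle \eta,\xi\rangle$.

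\textbf{Step 1: second moments.} I first show that $\mathbb E\|\eta\|_{H}^2<\infty$. This is the main obstacle, because the hypothesis only gives Gaussianity of each one-dimensional projection, not integrability of the norm. I would invoke Fernique's theorem: there is $\beta>0$ with $\mathbb E\exp(\beta\|\eta\|_H^2)<\infty$, so all moments of $\|\eta\|_H$ are finite.

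A more elementary route is also available. Choose $R$ with $\mathbb P(\|\eta\|_H\le R)\ge 3/4$. Then use the rotation trick on two independent copies $\eta_1,\eta_2$, namely $(\eta_1\pm\eta_2)/\sqrt 2$, which are again independent copies when $\eta$ is centred. This gives the tail estimate
\[
\mathbb P(\|\eta\|>t)\le \mathbb P(\|\eta\|\le R)\bigl(\mathbb P(\|\eta\|>(t-R)/\sqrt2)/\mathbb P(\|\eta\|\le R)\bigr)^2,
\]
which iterates to Gaussian tail decay. The non-centred case reduces to the centred one via symmetrization $\eta_1-\eta_2$.

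\textbf{Step 2: the mean.} The map $\xi\mapsto \mathbb E\langle \eta,\xi\rangle$ is linear. By Cauchy--Schwarz it satisfies $|\mathbb E\langle\eta,\xi\rangle|\le (\mathbb E\|\eta\|^2)^{1/2}\|\xi\|$, so it is bounded. The Riesz representation theorem then produces a unique $m\in H$ with $\langle m,\xi\rangle=\mathbb E\langle\eta,\xi\rangle$ for all $\xi\in H$.

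\textbf{Step 3: the covariance.} Define the bilinear form
\[
B(\xi,\zeta)=\mathbb E\langle\eta-m,\xi\rangle\langle\eta-m,\zeta\rangle .
\]
It is bounded by $\mathbb E\|\eta-m\|^2\,\|\xi\|\|\zeta\|$ and is symmetric. It is non-negative because $B(\xi,\xi)=\mathbb E\langle\eta-m,\xi\rangle^2\ge 0$. The Riesz/Lax--Milgram representation gives a bounded operator $\mathcal C$ with $\langle\mathcal C\xi,\zeta\rangle=B(\xi,\zeta)$, and $\mathcal C$ inherits symmetry and non-negativity from $B$.

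To show $\mathcal C$ is trace class, take an orthonormal basis $\{\tilde e_j\}$ of $H$. By monotone convergence and Parseval,
\[
\sum_j\langle\mathcal C\tilde e_j,\tilde e_j\rangle=\mathbb E\sum_j\langle\eta-m,\tilde e_j\rangle^2=\mathbb E\|\eta-m\|_H^2<\infty.
\]
For a non-negative operator, finite trace in one basis implies trace class.

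\textbf{Step 4: the push-forward.} The definition $\mathcal N(m,\mathcal C):=\mathbb P\circ\eta^{-1}$ on $\mathcal B(H)$ makes sense because $\eta$ is measurable. To see that this measure depends only on $(m,\mathcal C)$, compute its characteristic functional
\[
\mathbb E e^{\bi\langle\eta,\xi\rangle}=\exp\bigl(\bi\langle m,\xi\rangle-\tfrac12\langle\mathcal C\xi,\xi\rangle\bigr),
\]
which holds since each $\langle\eta,\xi\rangle$ is a scalar Gaussian with mean $\langle m,\xi\rangle$ and variance $\langle\mathcal C\xi,\xi\rangle$. Borel measures on a separable Hilbert space are determined by their characteristic functionals (Bochner/Minlos–Sazonov uniqueness), so the notation $\eta\sim\mathcal N(m,\mathcal C)$ is justified.
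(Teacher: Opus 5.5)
Your proof is correct. The paper gives no argument of its own for this statement; it is quoted from \cite{ibragimov2012gaussian}. What you wrote is the standard proof of that result:
\begin{enumerate}[label=(\roman*)]
\item integrability of $\|\eta\|_H^2$, via Fernique or the rotation and tail-iteration argument;
\item Riesz representation for $m$ and $\mathcal C$;
\item the trace identity $\sum_j\langle\mathcal C\tilde e_j,\tilde e_j\rangle=\mathbb E\|\eta-m\|_H^2$;
\item uniqueness of Borel measures on a separable Hilbert space given their characteristic functional.
\end{enumerate}

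If you keep the elementary route in Step 1, make two points explicit.
\begin{itemize}
\item The claim that $(\eta_1\pm\eta_2)/\sqrt2$ are again independent copies already needs characteristic-functional uniqueness, on $H\times H$. It also needs the parallelogram identity for the quadratic form $\xi\mapsto\mathbb E\langle\eta,\xi\rangle^2$. You only invoke this uniqueness in Step 4.
\item Returning from the symmetrized $\eta_1-\eta_2$ to $\eta$ needs a Fubini step. Pick $y_0$ with $\mathbb E\|\eta-y_0\|_H^2<\infty$; then $\mathbb E\|\eta\|_H^2\le 2\mathbb E\|\eta-y_0\|_H^2+2\|y_0\|_H^2$.
\end{itemize}

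Finally, Steps 2--3 use completeness of $H$. The paper literally defines $H^{\tau}$ as a subset of $C([a,b])$, and that does not guarantee completeness for small $\tau$. Reading $H^\tau$ as the completed Hilbert space, as you implicitly do, is the right interpretation.
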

The following theorem recalls the \emph{Karhunen-Lo\'eve} (KL) expansion, which enables us to express $\eta$ in terms of the spectral decomposition of the covariance operator $\mathcal C$.

\begin{theorem} \label{thm:kl}
    \cite{ibragimov2012gaussian} Let $m$ and $\mathcal C$ be the mean and a covariance operator as in \Cref{thm:guassian}. Furthermore, let $\{e_j\}_{j=1}^{\infty}$ be the eigenfunctions and $\{ \lambda_j \}_{j=1}^{\infty}$ be the corresponding eigenvalues, sorted in decreasing order of the eigenvalues. Then, $\eta\sim \mathcal N(m,\mathcal C)$ if and only if $\eta$ has the infinite expansion
    \begin{equation} \label{eq:kl}
        \eta = m + \sum_{j=1}^{\infty} \sqrt{\lambda_j} \beta_j e_j,
    \end{equation}
    where $\beta_j\sim \mathcal N(0,1)$, $j \geq 1$, are independent standard normal random variables. We interpret the infinite summation as $\mathbb E \| \eta \|^2_{H^{\tau}} < \infty$.
\end{theorem}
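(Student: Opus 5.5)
We prove the two directions of \Cref{thm:kl} separately. In both we work in the Hilbert space $H^{\tau}$ with the inner product inducing \eqref{eq:normHt}, and we use that $\mathcal C$ is trace-class, symmetric and non-negative (\Cref{thm:guassian}). By the spectral theorem for compact self-adjoint operators, $\mathcal C$ therefore has an orthonormal eigenbasis $\{e_j\}$ of $\overline{\mathrm{Range}(\mathcal C)}$, completed by a basis of $\ker\mathcal C$ if needed. The eigenvalues satisfy $\lambda_j\ge 0$ and $\sum_j\lambda_j=\operatorname{tr}\mathcal C<\infty$.

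\emph{Sufficiency.} Let $\beta_j$ be i.i.d.\ $\mathcal N(0,1)$, and set $S_N=m+\sum_{j\le N}\sqrt{\lambda_j}\beta_j e_j$.
\begin{enumerate}[label=(\roman*)]
\item By orthonormality and independence, $\mathbb E\|S_N-S_M\|^2_{H^\tau}=\sum_{j=M+1}^N\lambda_j\to0$. Hence $S_N$ is Cauchy in $L^2(\Gamma;H^\tau)$ and converges to a limit $\eta$ with $\mathbb E\|\eta\|^2_{H^\tau}\le 2\|m\|^2+2\sum_j\lambda_j<\infty$. This is exactly the stated interpretation of the series.
\item For fixed $\xi$, the variable $\langle S_N,\xi\rangle=\langle m,\xi\rangle+\sum_{j\le N}\sqrt{\lambda_j}\beta_j\langle e_j,\xi\rangle$ is Gaussian. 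Its $L^2$-limit $\langle\eta,\xi\rangle$ is therefore Gaussian as well, since $L^2$-limits of Gaussians are Gaussian (characteristic functions converge).
\item The same $L^2$ convergence gives $\mathbb E\langle\eta,\xi\rangle=\langle m,\xi\rangle$. It also gives
\[
\mathbb E\langle\eta-m,\xi\rangle\langle\eta-m,\zeta\rangle=\sum_j\lambda_j\langle e_j,\xi\rangle\langle e_j,\zeta\rangle=\langle\mathcal C\xi,\zeta\rangle .
\]
Since a Gaussian measure on a separable Hilbert space is determined by its mean and covariance (its characteristic functional is $\exp(\mathrm i\langle m,\xi\rangle-\tfrac12\langle\mathcal C\xi,\xi\rangle)$), we conclude $\eta\sim\mathcal N(m,\mathcal C)$.
\end{enumerate}

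\emph{Necessity.} Let $\eta\sim\mathcal N(m,\mathcal C)$. For $\lambda_j>0$ define $\beta_j:=\lambda_j^{-1/2}\langle\eta-m,e_j\rangle$.
\begin{enumerate}[label=(\roman*)]
\item The vector $(\langle\eta,\xi_1\rangle,\dots,\langle\eta,\xi_k\rangle)$ is jointly Gaussian, because every linear combination equals $\langle\eta,\sum c_i\xi_i\rangle$. So the $\beta_j$ are jointly Gaussian with mean zero and covariance $\mathbb E\beta_i\beta_j=(\lambda_i\lambda_j)^{-1/2}\langle\mathcal Ce_i,e_j\rangle=\delta_{ij}$. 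For jointly Gaussian variables, uncorrelated implies independent, so the $\beta_j$ are i.i.d.\ $\mathcal N(0,1)$.
\item For $e_j\in\ker\mathcal C$ we have $\mathbb E\langle\eta-m,e_j\rangle^2=0$, so $\langle \eta-m,e_j\rangle=0$ almost surely.
\item Expanding in the basis, $\eta-m=\sum_j\langle\eta-m,e_j\rangle e_j=\sum_j\sqrt{\lambda_j}\beta_je_j$ almost surely. The convergence is in $H^\tau$ pathwise and also in $L^2(\Gamma;H^\tau)$, because $\mathbb E\|\eta-m\|^2=\sum_j\lambda_j<\infty$.
\end{enumerate}

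\emph{Main obstacle.} The delicate points are measure-theoretic rather than computational:
\begin{itemize}
\item we must know that $\eta$ is Bochner square-integrable, so that $\mathcal C$ is genuinely trace class and the expansions converge;
\item we must justify that Gaussian measures on $H^{\tau}$ are characterized by $(m,\mathcal C)$.
\end{itemize}
For both we would invoke standard results: Fernique's theorem gives $\mathbb E\|\eta\|^2<\infty$, and Bochner--Minlos/Sazonov uniqueness of characteristic functionals gives the characterization. Both can be cited from \cite{ibragimov2012gaussian}.
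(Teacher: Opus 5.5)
Your proof is correct. The paper gives no proof of this statement; it only cites \cite{ibragimov2012gaussian}, and your proposal is the standard textbook argument that this citation stands for. You isolate the right two ingredients. For sufficiency, the partial sums are Cauchy in $L^2(\Gamma;H^\tau)$, and the limit is identified because a Gaussian law on a separable Hilbert space is fixed by its mean and covariance. For necessity, the coordinates $\langle\eta-m,e_j\rangle$ are jointly Gaussian and uncorrelated, hence independent.

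A few points should be tightened.

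(a) You tacitly use that $H^\tau$ is complete. As literally defined in the paper, as a subset of $C([a,b])$, it is not complete for small $\tau$; with the trigonometric basis used later, this fails for every $\tau\le 1/2$. So the limit in your sufficiency step need not lie in that set. You should work in the completion $\{f\in L^2:\sum_j j^{2\tau}|\langle f,e_j\rangle|^2<\infty\}$, which is presumably what ``Hilbert scale'' is meant to denote.

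(b) In the necessity direction you define $\beta_j$ only for $\lambda_j>0$. If $\ker\mathcal C\neq\{0\}$, you must adjoin independent $\mathcal N(0,1)$ variables for the remaining indices, in general after enlarging the probability space. This is harmless, since they are multiplied by $\sqrt{\lambda_j}=0$, and it is moot in the paper's setting where all $\lambda_j>0$.

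(c) Fernique's theorem is not needed here. Trace-class of $\mathcal C$ is given to you by \Cref{thm:guassian}, and Tonelli then yields $\mathbb E\|\eta-m\|^2_{H^\tau}=\sum_j\langle\mathcal C e_j,e_j\rangle=\mathrm{tr}\,\mathcal C$ directly, which is exactly what your necessity step (iii) uses. Likewise, the uniqueness you invoke is the elementary fact that finite-dimensional marginals determine a Borel probability measure on a separable Hilbert space, since cylinder sets generate the Borel $\sigma$-algebra. It is not the Minlos--Sazonov existence theorem.

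Finally, you read $\{e_j\}$ as orthonormal in $H^\tau$, which matches the hypotheses of \Cref{thm:guassian}. The paper later takes them $L^2$-orthonormal (the trigonometric basis). Under that reading your argument runs verbatim in $L^2$, but $\mathbb E\|\eta\|^2_{H^\tau}<\infty$ is no longer automatic; it is precisely what \Cref{thm:smoothness} is meant to supply.
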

In the seabed tomography problem, we reformulate the seabed, $h$ in \eqref{eq:wave-eq}, as a random function $\eta$ and impose a Gaussian prior distribution of the form $\eta \sim \mathcal N(m, \mathcal C)$. We choose $m$ and $\mathcal C$ such that the realizations of $\eta$ showcase the desired properties, e.g., the desired regularity. The KL-expansion \eqref{eq:kl} gives us the opportunity to construct $\mathcal C$ from its spectral decomposition, i.e., define the covariance $\mathcal C$ to be
\begin{equation} \label{eq:covariance-spectral}
    \mathcal C := \sum_{j=1}^{\infty} \lambda_j \langle e_j, \cdot \rangle e_j.
\end{equation}
This is a symmetric and linear operator. Choosing the sequence $\{\lambda_i\}_{i=1}^{\infty}$ such that $\lambda_i>0$, for all $i$, and $\sum_{i=1}^{\infty} \lambda_i <\infty$, ensures that $\mathcal C$ in \eqref{eq:covariance-spectral} is also trace-class and positive-definite. The following theorem characterizes the regularity of $\eta$ in terms of the decay rate of the eigenvalues $\lambda_j$.

\begin{theorem} \label{thm:smoothness}
    \cite{Dashti2017,dunlop2017hierarchical} Let there be constants $s,C^+,C^->0$ such that $\lambda_j \leq C^+ j^{-s }$, for integers $j\geq 1$, and define $\mathcal C_s$ to be the covariance operator following \eqref{eq:covariance-spectral}. Then $\mathbb E \| \eta - m \|_{H^{\tau}} <\infty$, with $\eta \sim \mathcal N(m, \mathcal C_s)$ and for all $\tau < s-1/2$.
\end{theorem}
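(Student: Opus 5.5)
The plan is to work directly with the Karhunen--Lo\`eve expansion of \Cref{thm:kl}. I will assume, as in the construction \eqref{eq:covariance-spectral}, that the eigenfunctions $\{e_j\}$ of $\mathcal C_s$ are the same orthonormal basis used to define the norm \eqref{eq:normHt}. Then $\eta - m = \sum_{j\ge 1}\sqrt{\lambda_j}\,\beta_j e_j$ with $\beta_j$ i.i.d.\ standard normal, and orthonormality gives $\langle \eta - m, e_j\rangle = \sqrt{\lambda_j}\,\beta_j$. Hence, pathwise,
\begin{equation*}
\| \eta - m \|^2_{H^\tau} = \sum_{j=1}^\infty j^{2\tau}\lambda_j \beta_j^2 .
\end{equation*}

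The first step is to bound the second moment rather than the first. The summands are nonnegative, so Tonelli's theorem lets me exchange expectation and summation. Using $\mathbb E\beta_j^2 = 1$ this gives
\begin{equation*}
\mathbb E\| \eta - m \|^2_{H^\tau} = \sum_{j=1}^\infty j^{2\tau}\lambda_j .
\end{equation*}
The second step inserts the decay hypothesis on $\lambda_j$ and compares with a $p$-series, which converges exactly when the exponent of $j$ is below $-1$. The third step is Jensen's (or Cauchy--Schwarz) inequality, $\mathbb E\|\eta-m\|_{H^\tau} \le (\mathbb E\|\eta-m\|^2_{H^\tau})^{1/2}$, which transfers finiteness to the first moment. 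This also shows $\eta - m \in H^\tau$ almost surely.

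The main obstacle is the exponent bookkeeping, which depends on the normalisation of $s$. Read literally, $\lambda_j \le C^+ j^{-s}$ yields the series $\sum_j j^{2\tau - s}$. That series is finite iff $\tau < (s-1)/2$, not $\tau < s - 1/2$.

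The stated threshold $\tau < s - 1/2$ is what one obtains with the convention of \cite{Dashti2017}, where the decay is imposed on the KL coefficients, $\sqrt{\lambda_j} \le C^+ j^{-s}$, i.e.\ $\lambda_j \le (C^+)^2 j^{-2s}$. Under that convention the series is $\sum_j j^{2\tau - 2s}$, which converges precisely when $2\tau - 2s < -1$. I would therefore state explicitly that the hypothesis is read in this sense. Alternatively, the conclusion can be rephrased as $\tau < (s-1)/2$ under the literal reading. The rest of the argument is identical in both cases.

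A secondary point is the basis alignment. If $\mathcal C_s$ were not diagonal in the basis defining $\|\cdot\|_{H^\tau}$, the computation above would no longer be a simple diagonal sum. Since \eqref{eq:covariance-spectral} is built from the same $\{e_j\}$, no extra work is needed here.
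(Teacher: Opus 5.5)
Your proof is correct and follows the same route as the paper's sketch: expand $\eta-m$ by the Karhunen--Lo\`eve expansion, compute $\mathbb E\|\eta-m\|_{H^\tau}^2=\sum_j j^{2\tau}\lambda_j$ using $\mathbb E\beta_j^2=1$, and compare with a $p$-series; your Jensen step from the second moment to the first moment fills in a detail the paper leaves implicit. Your exponent diagnosis is also right: the paper's sketch ends with $C^+\sum_j j^{2(\tau-s)}$, so it silently uses $\lambda_j\le C^+ j^{-2s}$ (consistent with its requirement $s>1/2$ for a trace-class prior), and under the literal hypothesis $\lambda_j\le C^+j^{-s}$ the correct threshold is $\tau<(s-1)/2$.
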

\begin{pf}
    The complete proof can be found in \cite{Dashti2017}, however, here we provide a sketch of the proof. It follows
    \begin{equation} \label{eq:proof}
        \mathbb E \| \eta -m \|_{\Ht}^2 = \sum_{j=1}^{\infty} j^{2\tau} \lambda_j \| e_j \|^2_{\Ht} \mathbb E\beta_j^2  \leq \sum_{j=1}^{\infty} j^{\tau} C^+ j^{-s} = C^+ \sum_{j=1}^{\infty} j^{2(\tau - s)}.
    \end{equation}
    Here, we used the KL-expansion of $\eta$ and the fact that $\mathbb E \beta_i^2 = \text{Var}(\beta_i) = 1$. The summation on the right-hand-side of \eqref{eq:proof} is finite, if and only if $\tau < s-\frac{1}{2}$.
\end{pf}
\Cref{thm:smoothness} indicates that samples of $\eta\sim \mathcal N(m,\mathcal C_s)$ will belong to $H^s$ almost surely, for all $\tau < s-1$.

To avoid computational artifacts, we require the seabed to meet the boundaries of $\Omega$ with an orthogonal angle. To this end we choose the cosine basis, i.e., $e_j = \text{Im}\left( \exp(\bi\pi \frac{j}{b-a}x) \right)/\tilde{z}_j$, with $\text{Im}$ representing the imaginary part and $\tilde z_j$ being the normalization constant, to expand $\eta$. This is known as the cosine-basis \cite{briggs1995dft}, therefore, the fast cosine transform \cite{briggs1995dft} enables us to efficiently assemble the KL-expansion \eqref{eq:kl}. We refer the reader to \cite{maboudi2024} for a detailed numerical method. Furthermore, we take the eigenvalues $\{\lambda_j\}_{j=1}^{\infty}$ for $\mathcal C_{s}$ to take the form
\begin{equation} \label{eq:eigenvalues}
    \lambda_j = \left( (\frac{1}{\ell})^2 + j^2 \right)^{2s}, \qquad j\geq 1,
\end{equation}
for some $\ell > 0$. This choice of eigenvalues is inspired by the Whittle-Mat\'ern covariance \cite{roininen2014whittle} where $\ell$ is the length scale, controlling the radius of correlation between points on $\eta$. A sample of $\eta\sim \mathcal N(0, \mathcal C_s)$ is presented in \Cref{fig:prior}. In this sample $e_j$, $\beta_j$ and $m$, in the KL-expansion \eqref{eq:kl} is fixed, while the regularity parameter $s$, in \eqref{eq:eigenvalues} is varied. We see as $s$ increases, the realization of the seabed becomes smoother.

In the seabed tomography problem, a good estimate for $s$ is often challenging. We propose to infer the regularity parameter $s$ alongside the seabed $h$. Let $S$ be a real-valued random variable representing the regularity of the seabed with the prior probability measure $\pi_S$. For example, we can choose $\pi_S$ to represent the Lebesgue probability measure on the half-line $Y = \{1/2<s\}$. Recall that the lower bound in this interval is to ensure that the resulting Gaussian prior through the KL expansion above yields a valid Gaussian probability measure \cite{dunlop2017hierarchical}. Furthermore, we assume that the joint probability measure of $\eta$ and $S$, i.e., the prior measure, is given by $\pi^{\text{prior}}_{H,S} = \mathcal N(0, \mathcal C_s) \pi_S$, i.e., the probability of an open set $U\otimes V\subset X\times Y$ is
\begin{equation} \label{eq:prior}
    \pi^{\text{prior}}_{H,S}(U\otimes V) = \int_V \left( \int_U\ N(d\eta|0, \mathcal C_s) \right)  \ \pi_S(ds).
\end{equation}

\begin{figure}
    \centering
    \includegraphics[width=0.75\textwidth]{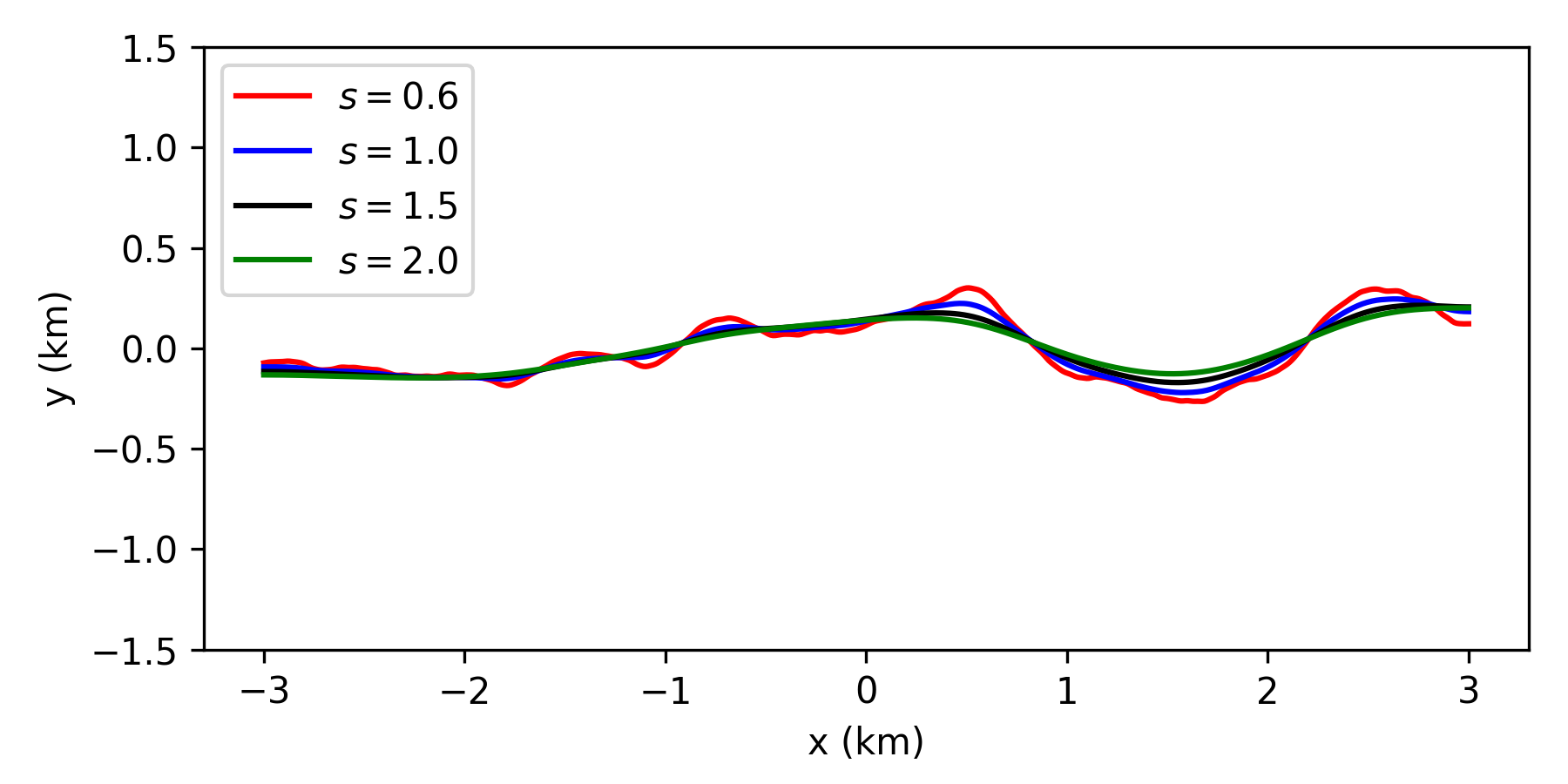}
    \caption{prior samples of $\eta$ for different regularity parameter $s$.}
    \label{fig:prior}
\end{figure}

\subsection{Likelihood} \label{sec:likelihood}
In this section, we reformulate the inverse problem introduced in \eqref{eq:inverse-discrete} in terms of random fields and variables. We let random vectors $\mathbf Y_i^{\text{obs}}, \eta$ and $\mathbf U$, represent the variables $\mathbf y_i^{\text{obs}}, h$ and $\mathbf u$, respectively. We also let $\mathbf y^{\text{obs}}$ and $\mathbf Y^{\text{obs}}$ be the concatenation of the vectors $\mathbf y_i^{\text{obs}}$ and $\mathbf Y_i^{\text{obs}}$, for $i=1,\dots,N_f$, respectively. The stochastic forward operator for the seabed tomography now takes the form
\begin{equation} \label{eq:inverse-stochastic}
    \mathbf Y^{\text{obs}}_i= \mathcal G^{\delta}_i(\eta, S ) + \varepsilon_i, \qquad i=1,\dots,N_f.
\end{equation}
Here, $S$ represents a real-valued random variable which represents the regularity of $\eta$, in the sense that $(\eta, S) \sim \pi_{H,S}^{\text{prior}} $. We assume that the observation random variables $\mathbf Y^{\text{obs}}_i$, $i=1,\dots,N_f$, are independent. Therefore, the likelihood function, i.e., the conditional distribution of the measurement for a given seabed configuration $\mathbf Y^{\text{obs}}| \mathbf H$, takes the product form
\begin{equation} \label{eq:joint-likelihood}
    \pi_{\mathbf Y^{\text{obs}}| H,S}(\by^{\text{obs}}) = \pi_{\mathbf Y^{\text{obs}}_1| H,S}(\by^{\text{obs}}_1) \dots \pi_{\mathbf Y^{\text{obs}}_{N_f}| H,S}(\by^{\text{obs}}_{N_f}).
\end{equation}
Furthermore, the relation \eqref{eq:inverse-stochastic} suggests the $i$th measurement follows the distribution 
\begin{equation}
    \left( \mathbf Y^{\text{obs}}_i - \mathcal G^{\delta}_i(H,S) \right) \sim \mathcal N(\mathbf 0, \sigma_i^{\text{noise}}I ).
\end{equation}
We can write the negative-log-likelihood (NLL) of \eqref{eq:joint-likelihood} in terms of the superposition
\begin{equation} \label{eq:negative-log-likelihood}
    \Phi( \mathbf y^{\text{obs}} ; \eta,s ) = \sum_{i=1}^{N_f} \Phi_i( \mathbf y^{\text{obs}}_i ; \eta,s ),
\end{equation}
where $\Phi_i$ is the NLL of $i$th measurement given as

\begin{equation} \label{eq:NLL-one-freq}
    \Phi_i( \mathbf y^{\text{obs}}_i ; \eta,s ) = \frac{1}{2\sigma_{i}^{\text{noise}}} \left( \sum_{1 \leq l \leq N_t} \Phi_i^l(\mathbf y^{\text{obs}}_i ; \eta,s)  \right), \qquad i=1,\dots, N_f.
\end{equation}
and
\begin{equation}
    \Phi_i^l(\mathbf y^{\text{obs}}_i ; \eta,s) =  \| [\mathbf y^{\text{obs}}_i]_l - [\mathcal G^{\delta}_i(\eta,s)]_l \|^2_{L^2(\partial \Omega^+)}, \qquad i=1,\dots, N_f,~ l=1,\dots,N_t.
\end{equation}
Note that $\|\cdot \|_{\partial \Omega^+}$ is norm of a function restricted to its boundary. We refer the reader to the \Cref{sec:appendix} for further details on this.

We remark that $\|\cdot \|_{\partial \Omega^+}$ in \eqref{eq:NLL-one-freq} implies a function-valued noise, whereas the noise model \eqref{eq:inverse-goal} is a point-wise noise model. In this work, we avoid a conflict between the two notations by integrating the noise over boundary cells when defining the FEM weak formulation of the measurement, i.e., noise is defined as
\begin{equation}
    \varepsilon := \sum^{N_{\text{FEM}}^{\text{bnd}}}_{i=1} [\varepsilon_j]_i \phi_{i}^{\text{bnd} },
\end{equation}
where $\phi_{i^{bnd}}$ is a boundary element, $N_{\text{FEM}}^{\text{bnd}}$ is the number of boundary elements on $\partial \Omega^+$ which contain a sensor, and $[\varepsilon_j]_i$ is the $i$th component of $j$th noise defined in \eqref{eq:inverse-goal}.

The following proposition we will show that the likelihood defined in \eqref{eq:negative-log-likelihood} fits in the wellposedness framework in \cite{stuart2010inverse}. In the rest of this article we may drop the superscript ``obs'' from the measurement vector and measurement random variables for simplification of expressions.
\begin{lemma} \label{thm:conditions}
    Let $(\Gamma, \mathcal A, \mathbb P)$ be the probability space associated with the priors introduced in \Cref{sec:prior}, and $\Phi^l_i( \mathbf y^{\text{obs}}_i ; \eta,s )$, for $i=1,\dots,N_t$, be the NLL of the $i$th frequency and $l$th time step. Furthermore, suppose that $\rho, \alpha, c  \in L^\infty(\Omega)$, with $\alpha_{\rm max} > \alpha > \alpha_{\rm min} > 0$, $\rho_{\rm max} > \rho > \rho_{\rm min} > 0$, $c > c_{\rm min} >0$. Then,
    \begin{enumerate}[label=(\roman*).]
        \item for any bounded measurement with $\|\mathbf y_i\|_{L^2(\partial \Omega^+)}\leq  r$, and $r>0$, we can find $C,\tilde C>0$ independent of $u$ such that
        \begin{equation} \label{eq:bounded1}
            0 \leq \Phi^l_i( \mathbf y^{\text{obs}}_i ; h(\eta,s) ) \leq C, ~ \text{and} ~
            0 \leq \Phi( \mathbf y^{\text{obs}}_i ; h(\eta,s) ) \leq \tilde C.
        \end{equation}
        \item For any fixed measurement $\mathbf y_i$ and bounded with respect to $\| \cdot \|_{L^2(\partial \Omega^+)}$, functions $\Phi^l_i( \mathbf y^{\text{obs}}_i ; h )$ and $\Phi( \mathbf y^{\text{obs}}_i ; h )$ are $\pi^{\text{prior}_{H,S}}$-a.s. continuous.
        \item For two measurements $\mathbf y_i^1$ and $\mathbf y_i^2$ with $\max \{ \| [\mathbf y_i^1]_l \|_{L^2} , \|[\mathbf y_i^2]_l\|_{L^2} \}_{l=1}^{N_t}<r$, with $r>0$, We can find $C>0$ depending on $r,\alpha_{\max},\rho_{\max}$, and $c_{\text{max}}$, such that 
        \begin{equation}
            | \Phi^l_i( \mathbf y^1_i ; h(\eta,s) ) - \Phi^l_i( \mathbf y^2_i ; h(\eta,s) ) |  \leq C \| [\mathbf y_i^1]_l - [\mathbf y_i^2]_l \|_{L^2},
        \end{equation}
        and
        \begin{equation} \label{eq:bounded2}
            | \Phi( \mathbf y^1_i ; h(\eta,s) ) - \Phi( \mathbf y^2_i ; h(\eta,s) ) |  \leq \tilde C \sum_{i,l} \| [\mathbf y_i^1]_l - [\mathbf y_i^2]_l \|_{L^2},
        \end{equation}      
    \end{enumerate}
\end{lemma}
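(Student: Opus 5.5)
The whole argument rests on a uniform a priori bound for the forward map, which I will establish first. The aim is to show that, under the assumed bounds on $\rho,\alpha,c$, the solution of \eqref{eq:wave-eq-1st-order} obeys a bound independent of $h$:
\begin{equation*}
\sup_{t\in[0,t_{\max}]}\big(\|\sqrt{\rho}\,v(t)\|^2_{L^2(\Omega)}+\|\sqrt{\alpha}\,\nabla u(t)\|^2_{L^2(\Omega)}\big)\le K.
\end{equation*}
Here $K$ depends only on $\rho_{\min},\alpha_{\min},c_{\min}$, $t_{\max}$ and the source data $f_i,g$.

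To get this, I take $w=v$ in \eqref{eq:weak-form}. This gives
\begin{equation*}
\tfrac12\tfrac{d}{dt}E(t)+\int_{\partial\Omega^-}\rho c\, v^2\,ds=\int_\Omega f_i g v\,d\mathbf x .
\end{equation*}
The boundary term is nonnegative, and the right-hand side is bounded by $\rho_{\min}^{-1/2}|f_i|\,\|g\|_{L^2}\sqrt{E}$. Gr\"onwall then bounds $E$. Poincar\'e-type control of $\|u\|_{L^2}$ follows from $u(t)=\int_0^t v$. Next, the trace inequality on $\partial\Omega^+$ bounds $\|u(t_l)\|_{L^2(\partial\Omega^+)}\le C_{\rm tr}\|u(t_l)\|_{H^1(\Omega)}\le M$, uniformly in $h$. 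For the discrete operator $\mathcal G^\delta_i$, I would run the same argument on the discrete energy of the St\"ormer--Verlet scheme \eqref{eq:fully-discrete}. This uses the symmetry and positivity of $\Mat R,\Mat D,\Mat C$, under a CFL condition. Since $h$ enters only through $\rho,\alpha,c$, the bound $M$ is independent of $(\eta,s)$.

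With $M$ in hand, (i) and (iii) are elementary. For (i),
\begin{equation*}
0\le\Phi^l_i\le(\|[\mathbf y_i]_l\|+\|[\mathcal G^\delta_i]_l\|)^2\le(r+M)^2=:C,
\end{equation*}
and summing over the finitely many $l$ and $i$ with weights $1/(2\sigma_i^{\rm noise})$ gives $\tilde C$. For (iii), the identity $\|a-g\|^2-\|b-g\|^2=\langle a-b,\,a+b-2g\rangle$ and Cauchy--Schwarz give
\begin{equation*}
|\Phi^l_i(\mathbf y^1_i)-\Phi^l_i(\mathbf y^2_i)|\le 2(r+M)\,\|[\mathbf y^1_i]_l-[\mathbf y^2_i]_l\|_{L^2}.
\end{equation*}
Summing with the same weights yields \eqref{eq:bounded2}.

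Part (ii), continuity in $(\eta,s)$ almost surely, is the main obstacle, because $h\mapsto(\rho,\alpha)$ is discontinuous in $L^\infty$. My plan is to use $L^1$ (equivalently $L^p$, $p<\infty$) convergence of coefficients instead. The map $(\eta,s)\mapsto h$ is continuous into $C([a,b])$ through the KL expansion, and $h_n\to h$ uniformly gives
\begin{equation*}
\rho(\cdot;h_n)\to\rho(\cdot;h)\ \text{and}\ \alpha(\cdot;h_n)\to\alpha(\cdot;h)\ \text{pointwise off the graph } \{y=h(x)\}.
\end{equation*}
The graph of a continuous function has Lebesgue measure zero, so this convergence holds a.e. For the discrete operator, the FEM matrices \eqref{eq:FEM_matrices} are integrals of the coefficients against fixed basis functions. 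They therefore converge by dominated convergence using the uniform bounds. The discrete solution after finitely many steps is a continuous function of these matrices, since $\Mat R$ stays uniformly invertible by $\rho_{\min}$, so $\Phi$ is continuous.

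If one insists on the continuous operator $\mathcal G_i$ instead, I would subtract the weak forms for $h_n$ and $h$. The difference $\delta u$ satisfies the same equation with source $-\nabla\!\cdot((\alpha_n-\alpha)\nabla u)+(\rho_n-\rho)v_t$. Its energy is then controlled by $\|(\alpha_n-\alpha)\nabla u\|_{L^2}$, which tends to zero by dominated convergence once $\nabla u\in L^2$. This step needs extra time regularity of $u$ (differentiate in $t$ and reuse the energy bound), and I expect that to be the delicate point.
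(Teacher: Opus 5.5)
Your treatment of (i) and (iii) matches the paper's. The paper takes the $h$-independent bound on $u$ from the appendix a priori estimate (\Cref{thm:A1}) plus the trace theorem, which you rederive directly by testing with $w=v$; (iii) is the same polarization and Cauchy--Schwarz step.

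The genuine difference is in (ii). The paper composes the misfit with the quantitative stability estimate of \Cref{thm:A2} and \Cref{thm:A3}. That estimate bounds $\|u_{h_1}-u_{h_2}\|_{L^2(\partial\Omega^+)}$ by norms of $\alpha_{h_1}-\alpha_{h_2}$, and hence by $\|h_1-h_2\|_\infty$. This gives a modulus of continuity in $h$, but at a cost:
\begin{itemize}
\item it needs an extra time derivative ($f\in C^2$);
\item it needs $L^{2q}$ integrability of $\nabla u_2$ and of its boundary traces, obtained via local $H^2$ regularity near $\partial\Omega$;
\item its difference equation only tracks the $h$-dependence of $\alpha$, with $\rho$ held fixed.
\end{itemize}
Your argument is purely qualitative: a.e.\ convergence of the coefficients off the null set formed by the graph of $h$, followed by dominated convergence. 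That is exactly what (ii) asks for, and it is all the measurability step of the posterior theorem uses. It needs only the basic energy bounds and covers the $h$-dependence of $\rho$ and of the absorbing coefficient $\rho c$ at no extra cost. Through continuity of the FEM matrices and of finitely many time steps, it also applies directly to $\mathcal G^\delta_i$, the operator that actually enters $\Phi^l_i$; the paper's proof argues only for the continuous $\mathcal G_i$.

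Two minor points. First, for the discrete bound you need neither a discrete energy identity nor a CFL condition. With the damping $\Mat C\mathbf v^n$ treated explicitly in \eqref{eq:fully-discrete}, such an energy inequality would in any case need its own step-size restriction. On a fixed mesh with fixed $\Delta t$ and finitely many steps, $\|\Mat R^{-1}\|\le\|\Mat M_1^{-1}\|/\rho_{\min}$, together with $h$-independent bounds on $\|\Mat D\|$ and $\|\Mat C\|$, already gives an $h$-independent bound on $\mathbf u^n$.

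Second, in your sketch for the continuous operator, the difference equation also carries the boundary term $\int_{\partial\Omega^-}(\rho_nc_n-\rho c)\,u_t\,w\,ds$. This is because the absorbing coefficient changes where the interface meets the lateral sides. It is handled the same way: a.e.\ convergence on $\partial\Omega^-$, the $L^2(0,T;L^2(\partial\Omega^-))$ bound on $u_t$ from \Cref{thm:A1}, and absorption into the dissipative boundary term.
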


\begin{pf}

\begin{enumerate} [label=(\roman*).]
    \item The expression is clearly bounded from below by 0. We show in Theorem 1 in Appendix 1 that $u(\mathbf x, \cdot ; h) \in H^1(\Omega)$. Therefore, according to the trace theorem (Theorem 5.36 in \cite{adams2003sobolev}) $[\mathcal G_i(\eta, s)]_l = u(\mathbf x, l \Delta t; h)|_{\partial \Omega^+} \in L^2(\partial \Omega^+)$. Therefore, there is $C\geq0$, independent of $u$ such that
    \begin{equation} \label{eq:trace}
        \| u(\mathbf x, \cdot ; h) \|_{L^2(\partial \Omega^+)} \leq C(T, \rho_{\text{min}},\|f\|_{C^2[0,T]}, \|g\|_{L^2(\Omega)}).
    \end{equation}
    It yields
    \begin{equation}
        \begin{aligned}
        \Phi^l_i( \mathbf y_i ; \eta,s ) &= \| [\mathbf y_i]_l - [\mathcal G_i(\eta,s)]_l \|^2_{L^2} \\
        & \leq \| [\mathbf y_i]_l \|^2_{L^2} + \| u_i(\mathbf x, l\Delta t ; h(\eta,s)) \|^2_{L^2(\partial \Omega ^+)} \leq r + C.
        \end{aligned}
    \end{equation}
    Inequality \eqref{eq:bounded1} follows from the fact that $\Phi$ is a finite linear combination of $\Phi^l_i$, for $i=1,\dots,N_f$ and $l=1,\dots,Nt$.
    \item It is sufficient to show the continuity result for $\Phi_i^l(\mathbf y_i^{\text{obs}};h)$, for a fixed $i$ and $l$, since $\Phi(\mathbf y_i^{\text{obs}};h)$ is a finite and linear combination of such log-likelihood functions.

    Recall that from \Cref{thm:A3} that we know that the map $h \mapsto u(h)|_{\partial \Omega ^+} $ is Lipschitz continuous in the sense that$\| u(h_1) - u(h_2) \|_{L^2(\partial \Omega^+)} \leq C \| h_1 - h_2 \|_{L^2}$. Furthermore, the map $v \mapsto \| v - \mathbf y_i^{\text{obs}} \|_{L^2}^2 $ is continuous (squared difference function). Therefore, the composite function, i.e., $\Phi_i^l(\mathbf y_i^{\text{obs}};h)$, is also continuous everywhere, therefore, it is $\pi^{\text{prior}}_{H,S}$-a.s. continuous.

    \item Fix $s$, and $\eta$ and consider $\mathbf y_i^1$ and $\mathbf y_i^2$ with $\max \{ \| \mathbf y_i^1 \|_{L^2} , \|\mathbf y_i^2\|_{L^2} \}<r$, and some $r>0$. It yields
    \begin{equation}
        \begin{aligned}
        |\Phi^l_i( \mathbf y^1_i ; \eta,s ) - \Phi^l_i( \mathbf y^2_i ; \eta,s ) | &\leq \frac{1}{2} |\langle [\mathbf y^1_i]_l + [\mathbf y^2_i]_l  - 2 [\mathcal G^\delta_i(\eta,s)]_l, [\mathbf y^1_i]_l - [\mathbf y^2_i]_l  \rangle_{L^2}| \\
        & \leq \frac{1}{2} ( \| [\mathbf y^1_i]_l \|_{L^2} + \| [\mathbf y^2_i]_l \|_{L^2} + 2\| [\mathcal G^\delta_i(\eta,s)]_l \|_{L^2} ) \\
        & \hspace{5cm} \|  [\mathbf y^1_i]_l - [\mathbf y^2_i]_l \|_{L^2} \\
        & \leq2 (r + \| [\mathcal G_i(\eta,s)]_l \|_{L^2} ) \| [\mathbf y^2_i]_l  - [\mathbf y^2_i]_l  \|_{L^2}.
        \end{aligned}
    \end{equation}
\end{enumerate}
Here, in the first inequality we reformulated the left-hand-side by adding and subtraction the term $\langle [\mathbf y^1_i]_l ,  [\mathbf y^2_i]_l \rangle_{L^2}$, and applying the Cauchy-Schwarz inequality in the second inequality and acknowledging that $[\mathcal G_i(\eta,s)]_l \in L^2(\partial \Omega^+)$ according to \eqref{eq:trace}. Similar to the argument in part $(i)$, \eqref{eq:bounded2} follows immediately.
\end{pf}

\subsection{Posterior}
In this section we use the Bayes' theorem to construct a posterior distribution, i.e., the goal-oriented distribution $\pi^{\text{post}}_{H,S|\mathbf y}$, for the seabed tomography.

\begin{theorem}
Let $(\Gamma, \mathcal A, \mathbb P)$ be the probability space defined in \Cref{sec:prior} associated with the prior distribution $\pi^{\text{prior}_{H,S}}$. Furthermore, suppose that $\Phi$ is the NLL defined in \Cref{sec:likelihood}. Then the posterior distribution $\pi^{\text{post}}_{H,S}$ (the conditional probability measure of the seabed given measurement) is absolutely continuous with respect to the prior distribution $\pi^{\text{prior}}_{H,S}$, i.e., $\pi^{\text{post}}_{H,S} \ll \pi^{\text{prior}}_{H,S}$, and is expressed as the Radon-Nikodym derivative
\begin{equation} \label{eq:Bayes}
    \frac{d \pi^{\text{post}_{H,S}}}{d \pi^{\text{prior}_{H,S}}} (\eta,s) = \frac{1}{Z} \exp(-\Phi(\mathbf y; \eta ,s)),
\end{equation}
with normalization constant
\begin{equation}
    Z = \int_{X}  \exp(-\Phi(\mathbf y; \eta,s) )  \pi^{\text{prior}}_{H,S}(d(\eta,s)).
\end{equation}
Furthermore, for two measurements $\mathbf y^1$ and $\mathbf y^2$ with $\max \{ \| [\mathbf y_i^1]_l \|_{L^2} , \|[\mathbf y_i^2]_l\|_{L^2} \}_{l=1}^{N_t}<r$, and $r>0$, there is $C>0$ independent of $u$, such that
\begin{equation}
    d_{\text{Hell}}( \pi^{\text{post}}_{H,S|\mathbf y^1}, \pi^{\text{post}}_{H,S|\mathbf y^2} ) \leq C \sum_{i,l} \| [\mathbf y^1_i]_l-[\mathbf y^2_i]_l \|_2.
\end{equation}
where, $d_{\text{Hell}}(\cdot,\cdot)$ is the Hellinger distance between probability measures \cite{le2000asymptotics}.
\end{theorem}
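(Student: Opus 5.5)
The plan is to follow the well-posedness framework of \cite{stuart2010inverse} (see also \cite{Dashti2017}), using \Cref{thm:conditions} to supply its hypotheses. There are three steps.

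\textbf{Step 1: the posterior as a reweighting of the prior.} By \Cref{thm:conditions}(ii), $\Phi(\mathbf y;\cdot,\cdot)$ is $\pi^{\text{prior}}_{H,S}$-a.s. continuous on $X\times Y$, hence measurable with respect to the product $\sigma$-algebra. The product $\sigma$-algebra comes from \eqref{eq:prior}: $\pi_S$ on $Y$ together with the Borel sets of $X$, where $s\mapsto\mathcal N(0,\mathcal C_s)$ is a measurable kernel. Under this joint measure, $\mathbf y\,|\,(\eta,s)$ has Lebesgue density proportional to $\exp(-\Phi(\mathbf y;\eta,s))$ by the Gaussian noise model \eqref{eq:inverse-stochastic}. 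The conditional-measure form of Bayes' theorem (Theorem 6.31 in \cite{stuart2010inverse}) then gives \eqref{eq:Bayes}, provided $0<Z<\infty$.

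\textbf{Step 2: bounds on $Z$.} By \Cref{thm:conditions}(i), $0\le \Phi\le \tilde C(r)$ for $\|\mathbf y\|\le r$. Hence
\[
e^{-\tilde C(r)}\le Z\le 1 .
\]
The lower bound is uniform over such $\mathbf y$, so absolute continuity and the density formula follow at once.

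\textbf{Step 3: the Hellinger estimate.} Write $Z_k$ and $\Phi_k$ for the normalizing constant and NLL associated with $\mathbf y^k$, $k=1,2$, and let $\delta:=\sum_{i,l}\|[\mathbf y^1_i]_l-[\mathbf y^2_i]_l\|_{L^2}$.

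First bound the difference of the normalizing constants. Using $|e^{-a}-e^{-b}|\le|a-b|$ for $a,b\ge0$ and \Cref{thm:conditions}(iii),
\[
|Z_1-Z_2|\le \int |\Phi_1-\Phi_2|\,d\pi^{\text{prior}}_{H,S}\le \tilde C\,\delta .
\]

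Next split the Hellinger distance in the standard way:
\[
2d_{\text{Hell}}^2\le I_1+I_2,\qquad I_1=\frac{2}{Z_1}\int\left(e^{-\Phi_1/2}-e^{-\Phi_2/2}\right)^2 d\pi^{\text{prior}}_{H,S},\qquad I_2=2\left|Z_1^{-1/2}-Z_2^{-1/2}\right|^2 Z_2 .
\]
For $I_1$, use $|e^{-a/2}-e^{-b/2}|\le\tfrac12|a-b|$ together with the lower bound $Z_1\ge e^{-\tilde C}$ from Step 2, giving $I_1\le C\delta^2$. For $I_2$, use $|Z_1^{-1/2}-Z_2^{-1/2}|\le C\max(Z_1^{-3},Z_2^{-3})|Z_1-Z_2|^2$, again controlled by the uniform lower bounds on $Z_1,Z_2$, giving $I_2\le C\delta^2$. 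Taking square roots yields the claim, with $C$ depending only on $r$ and the coefficient bounds.

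\textbf{Expected obstacle.} The delicate part is rigor in Step 1, not the estimates. The prior is hierarchical: the Gaussian measures $\mathcal N(0,\mathcal C_s)$ for different $s$ are mutually singular, so one must check that the kernel is measurable in $s$ and that $\Phi$ is jointly measurable in $(\eta,s)$. Note that $\Phi$ depends on $s$ only through $h$, so continuity in $h$ (via \Cref{thm:A3}) together with measurability of the KL map $(\beta,s)\mapsto\eta$ should suffice. A second subtlety is that the uniform bound in \Cref{thm:conditions}(i) relies on the coefficient bounds holding for every $h$. This is automatic for the two-phase coefficients \eqref{coefficients}, since $\rho$ and $\alpha$ take values between the water and rock constants independently of $h$.
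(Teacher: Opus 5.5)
Your proposal is correct and follows the paper's proof: measurability of $\Phi$ comes from the a.s.\ continuity in \Cref{thm:conditions}, the bounds $e^{-\tilde C}\le Z\le 1$ come from part (i), and then comes the standard Hellinger estimate, which the paper delegates to Theorem 2.2 of \cite{iglesias2016bayesian} whereas you write out the $I_1+I_2$ splitting explicitly. One small slip: the inequality used for $I_2$ should read $|Z_1^{-1/2}-Z_2^{-1/2}|^2\le C\max(Z_1^{-3},Z_2^{-3})\,|Z_1-Z_2|^2$ (the left-hand side must be squared, by the mean value theorem), which is the form your conclusion $I_2\le C\delta^2$ actually relies on.
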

\begin{pf}
Bayes' theorem tells us that the relation between the posterior and the prior measure follows \eqref{eq:Bayes}. Therefore, to show that $\pi^{\text{post}}_{H,S} \ll \pi^{\text{prior}}_{H,S}$ we show that the right-hand-side of \eqref{eq:Bayes} is well-defined, i.e.  $\Phi$ is measurable, and $Z$ is finite and positive.

In \Cref{thm:conditions} we showed that $\Phi(\mathbf y; \cdot)$ and $\Phi(\cdot; \eta,s)$, are a.s. continuous, and locally Lipschitz, respectively. Therefore, the mapping $\Phi(\cdot;\cdot)$ is jointly continuous, $\pi^{\text{prior}}_{H,s}$-a.s., and therefore, it is $\pi^{\text{prior}}_{H,S}$-measurable.% (see Lemma 6.1 in \cite{iglesias2016bayesian}).

To show that $Z>0$ recall that $\Phi$ is bounded from above according to \Cref{thm:conditions}$(i)$. Therefore,
\begin{equation}
    \int \exp(-\Phi(\mathbf y;\eta,s)\pi^{\text{prior}}_{H,S}(d(\eta,s)))\geq \exp(-\tilde C) \int \pi^{\text{prior}}_{H,S} (d(\eta,s))) = \exp(-\tilde C) > 0.    
\end{equation}
Showing that $Z$ is bounded from above is yielded by the fact that $\exp(-\Phi(\mathbf y;\eta,s) )\leq 1$. Therefore, Bayes' theorem applies and $\pi^{\text{post}}_{H,S} \ll \pi^{\text{prior}}_{H,S}$. Hellinger well-posedness can be derived identically to Theore 2.2 in \cite{iglesias2016bayesian}.
\end{pf}

\section{Numerical Algorithm to Explore The Posterior}

In the previous sections we showed that the goal-oriented posterior for the seabed tomography problem is a proper probability measure, and that it is locally continuous with respect to the measurement. In this section we discuss a numerical method to explore the posterior, estimate the average seabed and quantify the uncertainties in the estimation.

Sampling strategies, e.g., Markov chain Monte Carlo (MCMC) \cite{le2000asymptotics} methods, are popular tools to explore intractable posterior distributions. These methods extract a set of approximate realizations of the posterior distribution which can estimate moments of the posterior. We propose a Metropolis-within-Gibbs (MWG) \cite{le2000asymptotics} sampling strategy for exploring the posterior for the seabed tomography problem.

The MWG method is an MCMC method which repeatedly draws samples from conditional distributions $\pi_{H|S,\mathbf y}$ and $\pi_{S|H,\mathbf y}$. The procedure is stopped when the desired number of samples is obtained. This approach is suitable for this problem due to the hierarchical structure of the prior (see \eqref{eq:prior}) \cite{le2000asymptotics}. In the following, we propose suitable strategies to draw samples from each of these conditional distributions.

\subsection{Sampling from \texorpdfstring{$\pi_{H|S,\mathbf y}$}{Lg} } \label{sec:fixed-s} We can symbolically break down this conditional distribution as $\pi_{H|S,\mathbf y}\propto \pi_{\mathbf y|H, S} \pi_{H|S}$. Recalling that $\pi_{H|S} = \mathcal N(0, \mathcal C_s)$, the distribution $\pi_{H|S,\mathbf y}$ is a posterior with a Gaussian prior. Efficient sampling methods, e.g., the preconditioned Crank-Nicolson (pCN), are developed in \cite{e6c3f9b1-115e-37a9-b4d4-010ab9364de8} for such posteriors. A brief description of the pCN is as follows. Suppose at the $i$th step of the Markov chain we obtain the sample $h_i$. We then propose the sample
\begin{equation} \label{eq:pCN}
    h^\star = (1 - \beta_h^2)^{1/2} h^i + \beta_h \xi,
\end{equation}
where $\xi$ is a realization from the prior distribution $\mathcal N(0, \mathcal C_s)$, and $\beta_h>0$ is the step size. We then accept this sample with probability
\begin{equation}
    a_h(h^i, h^\star) := \min \left\{ 1,\exp(\Phi(\mathbf y; h^i, s) -  \Phi(\mathbf y; h^\star, s) ) \right\}.
\end{equation}
The new sample in the Markov chain, $h^{i+1}$, is $h^\star$, if the sample is accepted, and is $h^i$ if the sample is rejected.

\subsection{Sampling from \texorpdfstring{$\pi_{S|H,\mathbf y}$}{Lg} }
We may break down this distribution as $\pi_{S|H,\mathbf y} \propto \pi_{\mathbf y|S, H} \pi_{S|H}$. However, understanding the distribution $\pi_{S|H}$ is more involved.

To elaborate, suppose we have a sample $h$ and we want to compare two regularity parameters $s^1$ and $s^2$ (e.g., in an acceptance/rejection step) for $h$. Therefore, we must compare the probabilities of $h$, or more precisely small open set $dh$ around $h$, according to distributions $\mathcal N(0, \mathcal C_{s^1})$ and $\mathcal N(0, \mathcal C_{s^2})$. However, according to \cite{feldman1958equivalence}, the probability measures $\mathcal N(0, \mathcal C_{s^1})$ and $\mathcal N(0, \mathcal C_{s^2})$ are mutually singular \cite{schervish2012theory} when $s^1\not = s^2$, i.e., the probability of $dh$ is zero for almost all $s\in Y$. Therefore, a hierarchical Bayesian selection of $s$ is not possible in this setting. See \cite{schervish2012theory,dunlop2017hierarchical} for more detail.

Instead, we take a frequentist approach and choose $s$ solely on its likelihood probability, i.e., we approximate $\pi_{S|H,\mathbf y} \approx\pi_{\mathbf y|S, H}$. We then draw samples for $s$ following a Metropolis-Hasting \cite{kaipio2006statistical} method.

Let $q:Y\times Y \to [0,1]$ be a proposal distribution (in this paper we choose a Gaussian i.i.d. proposal distribution with a small variance), associated with a step size $\beta_s>0$, for $s$ and let $s_i$ be the $i$th sample of the Markov chain. We propose $s^{\star}$ from the proposal distribution $q(s_i,s^{\star})$. We then accept this sample with probability
\begin{equation}
    a_s(s^i, s^\star) := \min \left\{ 1,\exp(\Phi(\mathbf y; h, s^i) -  \Phi(\mathbf y; h, s^{\star}) ) \right\}.
\end{equation}
The new sample in the Markov chain, $s^{i+1}$, is $s^\star$, if the sample is accepted, and is $s^i$ if the sample is rejected.

%\textcolor{blue}{Ana:\ What proposal distribution q do you choose?}

The problem of estimating the regularity of an in-homogeneous medium, from a noisy likelihood, has been considered in the context of inverse scattering problems \cite{karamehmedovic2013efficient,schroder2011modeling}. Furthermore, the authors in \cite{sym13091702} discuss the relation between the emission frequencies, $f_i$ in \eqref{eq:freqs}, and detecting smallest variations in an in-homogeneous medium.% \babak{Ana, would you like to add something?}. 

It is demonstrated in \cite{maboudi2024} that the sampling framework discussed here is robust to noisy and partial data and suitable for various ill-posed inverse problems. The performance of this method is further discussed in the result section down below. We summarize the complete sampling algorithm in \Cref{alg:1}.

We note the availability of gradient-assisted sampling strategies, including unadjusted and Metropolis-adjusted Langevin algorithms \cite{roberts1998optimal} as well as Hamiltonian Monte Carlo \cite{duane1987hybrid} and its variants, which are widely used for sampling complex posterior distributions. In the present work, however, the parameterization of the seabed introduces non-smoothness at the water–seabed interface, arising from the piecewise-smooth structure of the prior on density and elastic properties. This leads to discontinuous or poorly defined gradients of the forward operator with respect to the unknown topography, which undermines the assumptions required for stable and efficient gradient-based sampling.

While one could introduce smooth approximations of the prior near the seabed interface to enable differentiability, such regularization would directly affect the small-scale roughness that constitutes the primary quantity of interest in this study, potentially biasing the inferred topography. For this reason, we restrict attention to non-gradient-based sampling strategies. We nevertheless include a comparison of the proposed approach with a functional ensemble sampling method \cite{coullon2021ensemble}, as modern technique that does not require gradient information of the log-likelihood.

\begin{algorithm} 
\caption{Drawing samples from the posterior distribution $\pi_{H,S|\mathbf y}$} \label{alg:1}
\label{alg:Gibbs}
\begin{algorithmic}[1]
\STATE{Input: initial states $h^0$ and $s^0$, step sizes $\beta_h$ and $\beta_s$, and flag $phase \in \{ \text{warm-up}, ~\text{online} \}$}
\WHILE{ $i<N_{\text{sample}}$ }
    \STATE $h \leftarrow h^i$
    \WHILE{ $j<N^h_{\text{inner}}$ }
        \STATE Propose $h^\star$ according to \eqref{eq:pCN}.
        \STATE Draw $u \sim \mathcal U([0,1])$
        \IF{$a_h(h,h^\star) > u$}
            \STATE $h \leftarrow h^\star$
        \ENDIF
        \IF{$phase$ is warm-up}
            \STATE tune $\beta_h$
        \ENDIF
    \ENDWHILE
    \STATE $h^{i+1} \leftarrow h$
    \STATE $s \leftarrow s^i$
    \WHILE{ $j<N^s_{\text{inner}}$ }
        \STATE Propose $s^\star$ according to the proposal distribution $q$.
        \STATE Draw $u \sim \mathcal U([0,1])$
        \IF{$a_s(s,s^\star) > u$}
            \STATE $s \leftarrow s^\star$
        \ENDIF
        \IF{$phase$ is warm-up}
            \STATE tune $\beta_s$
        \ENDIF
    \ENDWHILE
    \STATE $s^{i+1} \leftarrow s$
\ENDWHILE
\RETURN Set of samples $\{ (h^i,s^i) \}_{i=1}^{N_{\text{sample}}}$
\end{algorithmic}
\end{algorithm}

\section{Numerical Experiments and Results} \label{sec:results}
In this section we demonstrate the performance of the proposed method in a series of numerical experiments. We first discuss the numerical set up and the discretization details. The test cases in this section comprise a case where the regularity of the seabed is known but its location is unknown, and a case where both the regularity and the shape is unknown. Finally we perform the method on a seabed sample that is not a sample of the prior. %\textcolor{blue}{Ana:\ what do you mean by this?}

\subsection{Numerical Details}

In this section we summarize the numerical details of the forward evaluation, and the sampling procedure.

We discretize the domain $\Omega$ with a regular structured mesh of size $188\times 95$, resulting in cell sizes of size $h_x=h_y\approx 0.032$. We choose first-order Lagrangian elements to discretize functions on this mesh with approximately $18000$ degrees of freedom. To assemble \eqref{eq:fully-discrete} we choose a time-step $\Delta t = 0.0019$. We then solve \eqref{eq:fully-discrete} up until maximum time $t_{\text{max}}=3.95~\text{s}$ to obtain an approximate solution $\mathbf u^n$, at time steps $n=1,\dots,N_{t}$. Model parameters for elasticity and density in all simulations are $\rho_0=1$, $\rho^-=3$ $\lambda_0=1.5$, $\alpha_{\text{bottom}}: = \lambda^- + 2\mu^- =6.4$.

To obtain a noise-free observation vector, we record the values of $\mathbf u$ at the boundary $\partial \Omega^+$ at location of the sensors. We place 186 equidistant sensors across $\partial \Omega^+$. The computational mesh is designed to always contain nodes at the location of the sensors, even when the mesh is refined. This will avoid extra computational inaccuracy associated with interpolating solution values onto sensor locations. Furthermore, we assume that the measurement frequency is $0.0019$s. We present snapshots of the solution for one realization of the seabed and the corresponding measurement time series in \Cref{fig:snapshots,fig:measurement}, respectively.

We accelerate calculations by parallelizing forward evaluations. We first exploit the trivially parallel nature of the measurement models \eqref{eq:inverse-discrete} for different frequencies and perform forward evaluations in parallel. Furthermore, we perform parallel conjugate gradient (CG) iterations for solving \eqref{eq:fully-discrete}. We use the built-in CG routine in the FEniCS python package over 60 computational nodes.

The prior for the seabed tomography is constructed according to \Cref{sec:prior}, using the KL expansion \eqref{eq:kl} with cosine-basis and and eigenvalues in \eqref{eq:eigenvalues}. We truncate the KL expansion after $N_{\text{KL}}=256$ terms and discretize the basis functions $e_j$ on a uniform mesh of size 512. We construct the NLL and the goal-oriented posterior according to \eqref{eq:negative-log-likelihood} and \eqref{eq:Bayes}, respectively. 

To obtain the data, for all simulations in this paper, we perform forward estimations on a fine mesh of size $377\times189$, with approximately 70000 degrees of freedom. To add noise to the noise-free data we follow the forward model \eqref{eq:inverse-discrete} with noise standard deviation corresponding to
\begin{equation}
    \sigma_i^{\text{noise}} := 0.01 \times \text{max} \{ \| \mathbf y_i(t_1) \|_{2}, \dots , \| \mathbf y_i(t_{N_t}) \|_{2} \},
\end{equation}
where $\mathbf y_i(t_j)$, for $i=1,\dots,N_f$ and $j = 1,\dots,N_{t}$, are noise free signals. The inference methods are then performed with this data and the observation operators are evaluated with the coarse mesh.

\subsection{Fixed regularity \texorpdfstring{$s$}{Lg}} \label{sec:fixed-regularity}
In the first numerical experiment, we fix the regularity parameter to $s=0.75$. To create the data, we choose   random expansion coefficients in $\eqref{eq:kl}$, and perform a forward simulation on the fine grid. The true expansion coefficients will remain unknown to the inference method.

The inverse problem with a fixed regularity $s$ is a standard, nonlinear Bayesian inverse problem with a Gaussian prior. We apply the pCN MCMC method to draw samples from this posterior. To construct a sampling method for this problem see \Cref{sec:fixed-s} and we refer the reader to \cite{e6c3f9b1-115e-37a9-b4d4-010ab9364de8} for detailed derivations. As a comparison, we also investigate the performance of the functional ensemble sampler (FES) \cite{coullon2021ensemble}.

Affine invariant ensemble samplers \cite{goodman2010ensemble} are well suited to posterior distributions exhibiting strong anisotropy and poor scaling, as they are affine invariant and therefore insensitive to linear reparameterizations such as scaling and rotation of the parameter space. However, classical ensemble samplers are designed for finite-dimensional settings and are not directly applicable to infinite-dimensional Bayesian inverse problems.
To address this limitation, FES is employed within a hybrid Gibbs sampling framework. The unknown field is represented via a KL expansion, which is partitioned into a finite set of low-frequency coefficients (typically the first 5–10 modes and in this experiment we choose 10 first KL modes) and a complementary high-frequency component. The low-frequency coefficients are updated using an affine-invariant ensemble move, while the remaining coefficients are updated using a pCN scheme. These updates are alternated within a Gibbs sampling strategy.

Within the ensemble update, proposed states are generated through stretch moves followed by a Metropolis–Hastings type acceptance step. The stretch parameter controls the size of ensemble proposals (referred to as walkers) and is typically chosen in the range $[1.5,2.5]$ to ensure efficient exploration. When the stretch parameter approaches unity (e.g., $\ll 1.2$), ensemble proposals become highly localized, leading to strongly correlated samples and a loss of the affine-invariant advantage. In this regime, the sampler effectively degenerates to a slow random-walk behavior, which undermines reliable posterior estimation. In this experiment, we consider an ensemble of 40 walkers for the 10 first modes of the KL expansion following the recommendation in \cite{emcee}. We report that our experiments shows similar behaviour for fewer number of KL modes chosen for the FES method.

We divide the sampling procedure in 2 phases; the warm-up (burn-in) phase and the online phase. To warmup the pCN sampler, we apply the method discussed in \cite{sherlock2009optimal}, over 10K samples, to tune the step-size of the pCN method to obtain a tuned step-size $\beta_h \approx 0.007$. In the sampling phase, we fix the step size of the pCN method and draw 20K samples from the posterior. All numerical experiments in this manuscript were conducted on a workstation equipped with an AMD Ryzen Threadripper PRO 7975WX 32-core CPU. The total memory consumption for a complete sampling run was approximately 10.5~GB, with the majority attributable to the finite element simulations. The warm-up phase required approximately 1.5 days, followed by an additional 3 days for the sampling phase. We report that the pCN step size obtained after the warm-up phase is $\beta_h=0.007$. For the FES method, warm-up is performed following the procedure described in \cite{emcee}. During this phase, the adaptive tuning of the stretch parameter consistently drives it to a hard lower bound of 1.2, after which it is fixed for the remainder of the sampling phase. In both cases, samples generated during the warm-up stage are discarded prior to the evaluation of posterior statistics.

We note that the contraction of the stretch parameter in FES during the warm-up phase raises concerns regarding the effectiveness of the method for posterior exploration. The resulting small stretch parameter suggests that the anisotropy present in the posterior cannot be adequately mitigated by affine transformations alone, indicating a posterior geometry that is strongly influenced by nonlinear and nonsmooth features arising in the seabed tomography problem. We interpret this behavior as further evidence of the intrinsic complexity of the posterior, and as an indication that approaches relying on smooth parameterizations and gradient-assisted sampling may also encounter difficulties during warm-up and tuning in this setting. A systematic investigation of sampling strategies better suited to the nonlinear and nonsmooth posterior structure arising in this problem is left for future direction of research.

We present the trace plot of the first 5 coefficients $\beta_j$, in the KL expansion of $\eta$, for the pCN method in \Cref{fig:pCN-trace}. We notice that after the warm-up period, there is only a mild correlation between the successive samples. This means we can draw reliable estimates of the posterior from the samples. However, we see that some coefficients, e.g. $\beta_5$, show a stronger correlation between successive samples. This indicates that exploring the posterior with an MCMC method is a challenging task. We report that the performance of the pCN is similar for the rest of the coefficients. The trace plot for the FES method is shown in \Cref{fig:FES-trace}. Although the samples may appear to exhibit improved mixing, this apparent behavior arises from the presence of an ensemble of walkers rather than from efficient exploration of the posterior. The step-like behavior observed in the trace plot is consistent with poor interpolation between ensemble members, which is a known artifact associated with a small stretch parameter.

We show the posterior mean for the KL expansion coefficients, $\beta_j$, and the associated 95\% highest posterior density interval in \Cref{fig:pCN-HPD,fig:FES-HPD}, corresponding to the pCN and FES methods, respectively. We notice that pCN estimates few first coefficients are predicted with high accuracy and high certainty. Furthermore, the level of uncertainty (length of HPD) is comparable for the rest of the coefficients. Therefore, the method can estimate the all coefficient (associated with the frequencies of the seabed) with similar level of certainty. We emphasize that for higher frequencies, there is a discrepancy between the posterior mean and the true parameters. This is expected and reported in \cite{afkham2024bayesian}. Nonetheless, we can still recover the  density, $\rho$, and the elastic coefficients 
$\alpha$ that correspond to the true density and elastic coefficients. The predictions obtained using FES are less accurate than those produced by pCN, and the associated uncertainty estimates are substantially larger. In light of the discussion above, this elevated level of uncertainty is likely attributable to ineffective interpolation among ensemble walkers, rather than to genuine posterior variability.

\begin{figure}
    \centering
    % --- First row: pCN ---
    \subfloat[trace plot for $\beta_j$ (pCN)]{
        \label{fig:pCN-trace}
        \includegraphics[width=0.45\textwidth]{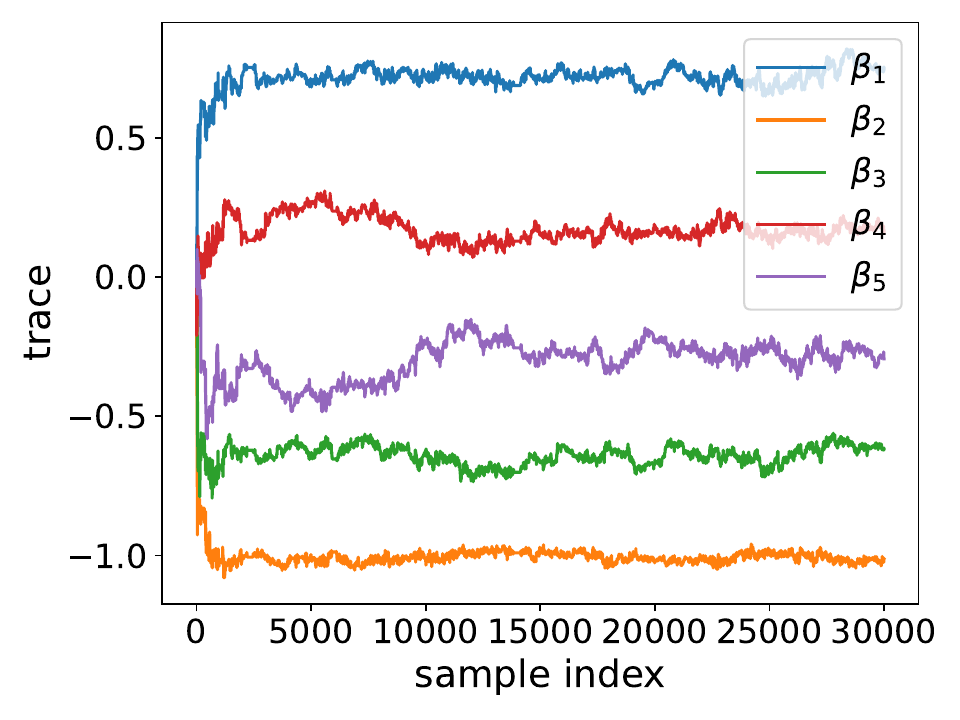}
    }
    \hfill
    \subfloat[estimation of the coefficients (pCN)]{
        \label{fig:pCN-HPD}
        \includegraphics[width=0.45\textwidth]{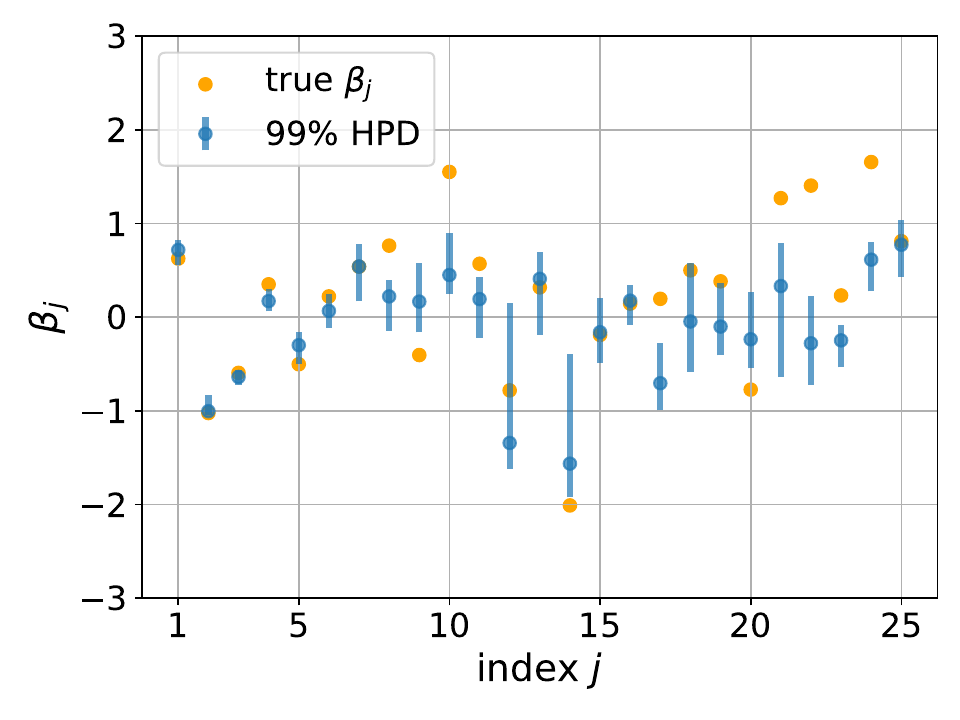}
    }\\[0.5em]
    % --- Second row: AIES ---
    \subfloat[trace plot for $\beta_j$ (FES)]{
        \label{fig:FES-trace}
        \includegraphics[width=0.45\textwidth]{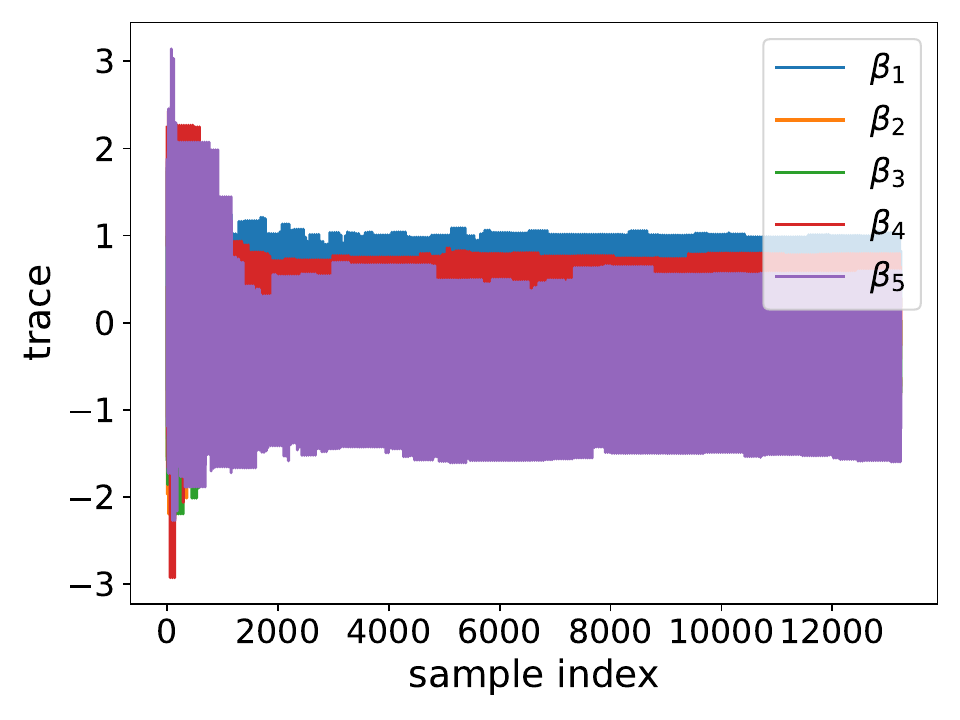}
    }
    \hfill
    \subfloat[estimation of the coefficients (FES)]{
        \label{fig:FES-HPD}
        \includegraphics[width=0.45\textwidth]{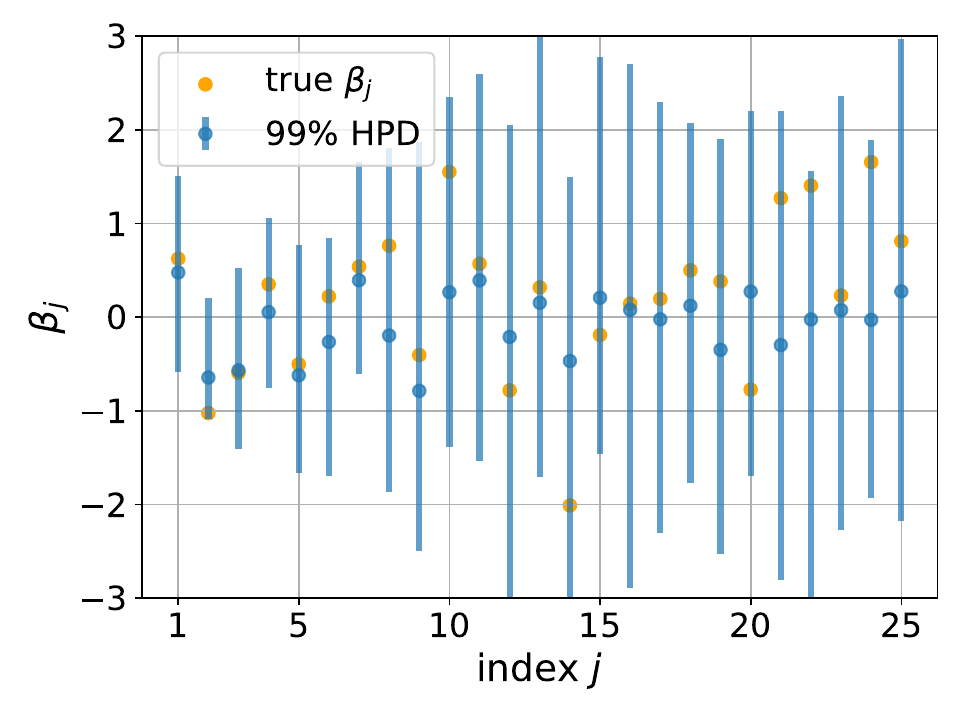}
    }
    \caption{Diagnostic plots for the pCN and r methods. Top row: pCN results — (a) trace plot of the first 5 coefficients $\beta_j$, (b) posterior mean and highest posterior density interval for each coefficient. Bottom row: FES results — (c) trace plot of the first 5 coefficients, (d) posterior mean and highest posterior density interval.}
    \label{fig:pcn-aies-diagnostics}
\end{figure}

The estimated seabed and its associated uncertainty is provided in \Cref{fig:pCN-mean-uq,fig:FES-mean-uq}. We estimate the mean seabed, by computing the mean coefficients $\bar{\beta}_j:= \mathbb E(\beta_j)$, $j=1,\dots,N_{\text{KL}}$, and then assembling the sum $\bar h := \sum_{j<N_{\text{KL}}} \sqrt{\lambda_j} \bar{\beta}_j e_j$. We present the level of uncertainty in the estimate in the form of a 99\% credibility band, which is obtain by interpolating point-wise 99\% credibility intervals (CIs).

We notice pCN provides an excellent match between the mean seabed and the true seabed, particularly in the domain interest $x\in[-2,2]$. Furthermore, the CI band expands outside the domain of interest, particularly where the true seabed diverges from the estimated seabed. The FES method captures the overall shape and qualitative behavior of the seabed, and the associated uncertainty band envelopes the true seabed location. However, the method fails to produce an accurate posterior mean estimate. We report that we repeated the FES experiment for 5 and 7 number of modes (instead of 10) and the same number of walkers and received similar behavior.

\begin{figure}
    \centering
    \subfloat[pCN]{
        \label{fig:pCN-mean-uq}
        \includegraphics[width=0.45\textwidth]{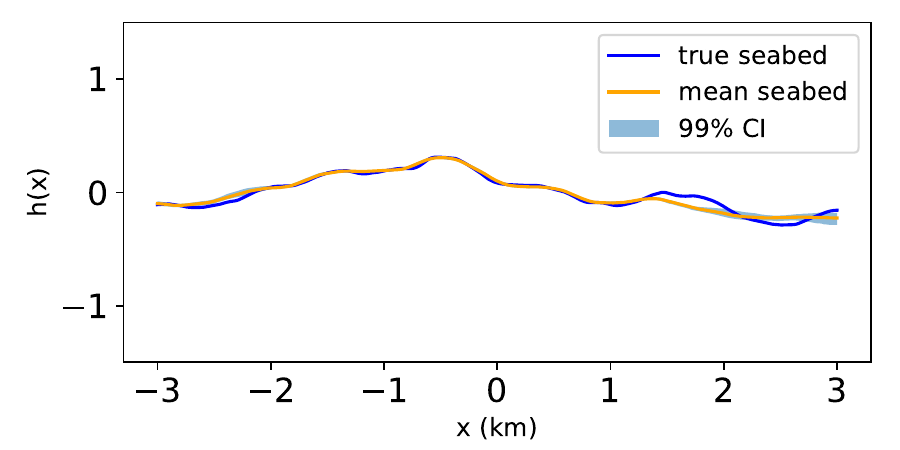}
    }
    \hfill
    \subfloat[FES]{
        \label{fig:FES-mean-uq}
        \includegraphics[width=0.45\textwidth]{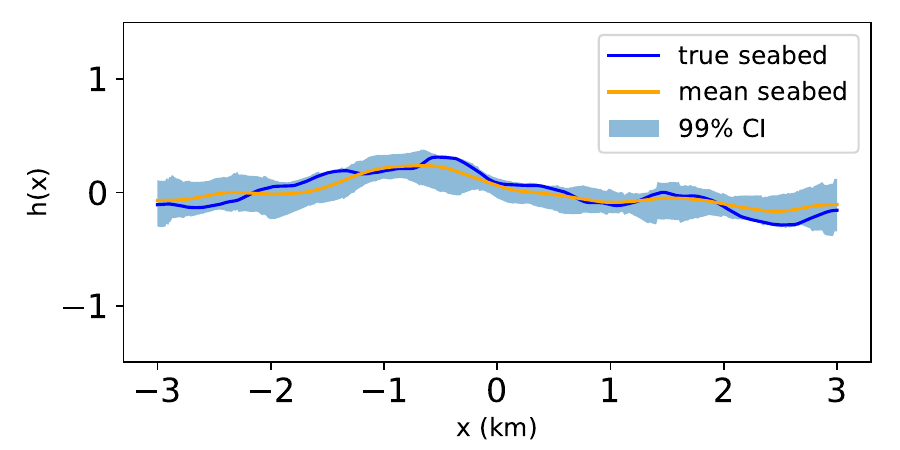}
    }
    \caption{Estimated seabed and quantified uncertainty with 99\% highest posterior density region. Left: pCN. Right: FES.}
    \label{fig:pcn-aies-mean-uq}
\end{figure}

To conclude this section, we remark that this example demonstrates that when the regularity parameter $s$ is known, we can estimate the seabed with relatively few pCN samples with high accuracy. The pCN method provides robust performance for this problem within an established framework for infinite-dimensional priors, such as the one considered here. In contrast, FES degrades in the presence of nonsmooth posterior geometry and exhibits poor practical mixing for this problem. Consequently, all remaining numerical experiments in this paper are carried out using the pCN method.

\subsection{Uncertain regularity \texorpdfstring{$s$}{Lg}} \label{sec:uncertain-regularity}

In this experiment we repeat the experiment from \Cref{sec:fixed-regularity} and generate data accordingly. However, we treat the regularity of the seabed as an unknown parameter of the problem.

To construct the goal-oriented posterior, we follow \Cref{sec:bayes}. We construct the joint prior as \eqref{eq:prior}. We choose a non-informative prior for the regularity and set $\pi_S = \mathcal U([0.5, 5])$, i.e., the uniform distribution on the interval $[0.5, 5]$. We note that the upper bound introduced here serves as a numerical convenience, and in practice the samples do not reach this bound. The primary challenge in estimating the regularity parameter lies in distinguishing rough seabeds, that is, cases with $s \ll 1.5$. In contrast, scenarios with $s>2$ are often visually indistinguishable.

We construct the likelihood function by forming the NLL as in \eqref{eq:negative-log-likelihood}. We then perform \Cref{alg:Gibbs} to draw samples from the posterior.

We divide the sampling procedure into 2 phases. In the warm-up phase, we tune the step sizes $\beta_h$ and $\beta_s$, in \Cref{alg:Gibbs}. We set $phase$ to be warm-up and choose the inner loop of size $N^h_{\text{inner}} = N^s_{\text{inner}} = 250$. In total, we choose $N_{\text{sample}} = 20$ warm-up samples. Together, these choices result in a total of 10k warm-up samples with tuned step-sizes $\beta_h\approx 0.003$ and $\beta_s \approx 0.017$. The warm-up method for this experiment follows from \cite{sherlock2009optimal}.

In the online phase, we fix the step sizes $\beta_h$ and $\beta_s$, in \Cref{alg:Gibbs} and we set $phase$ to be online. We then choose $N^h_{\text{inner}} = N^s_{\text{inner}} = 20$ and draw $N_{\text{sample}} = 1000$, resulting 40k total samples in the second phase. To estimate posterior statistics, we discards the samples obtained in the warm-up phase.

We show the trace plot of the first 5 expansion coefficients of $h$ in \Cref{fig:Gibbs-trace} after the warm-up phase. We observe that the traces show a similar pattern to the previous section. The HPD intervals of the coefficients is presented in \Cref{fig:Gibbs-HPD}. We notice that this plot is comparable with the case in \Cref{sec:fixed-regularity}, showing larger uncertainty for higher frequency coefficients. This indicates that the Metropolis-within-Gibbs approach in this paper is effective in reconstruction of $h$.

\begin{figure}
    \centering 
    \subfloat[trace plot for $\beta_j$]{\label{fig:Gibbs-trace}\includegraphics[width=0.45\textwidth]{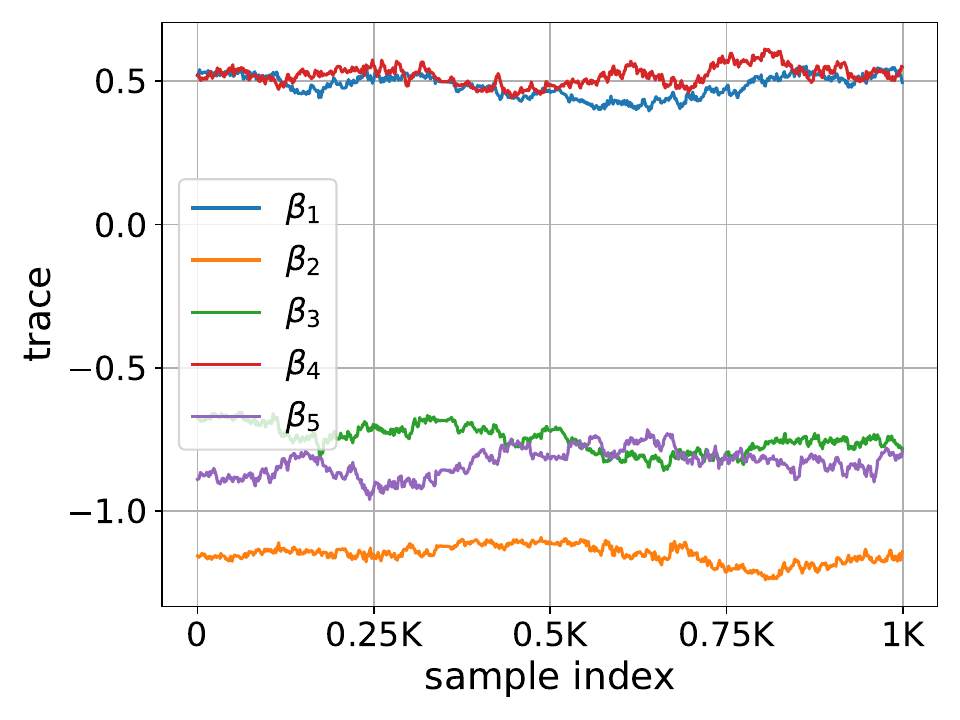}} 
    \subfloat[coefficients]{\label{fig:Gibbs-HPD}\includegraphics[width=0.45\textwidth]{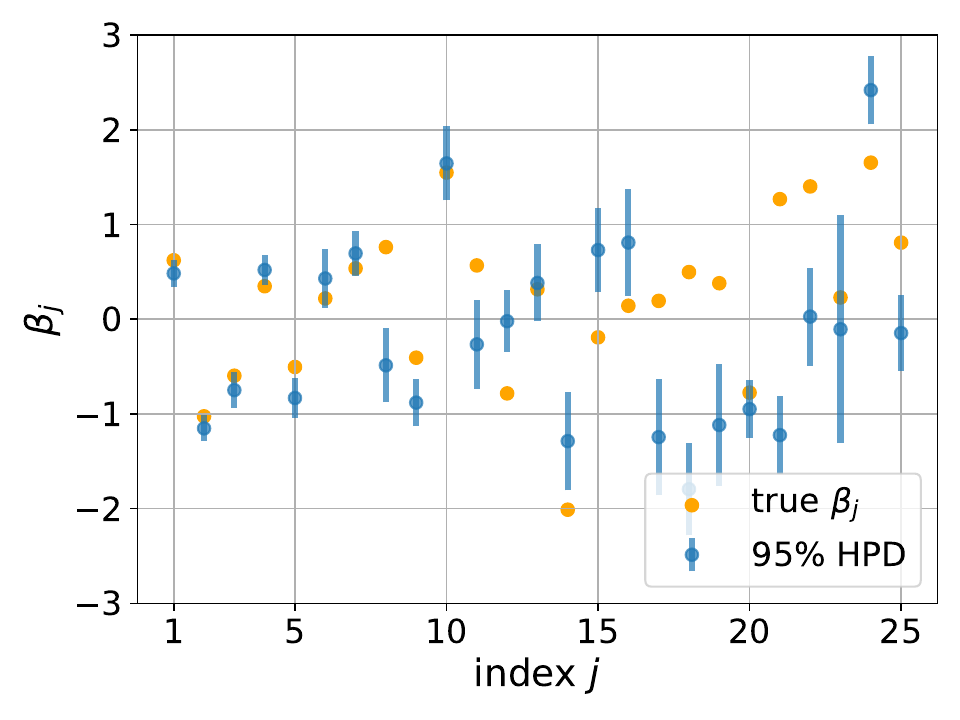}}
    \caption{Diagnostic plots for the Gibbs method. (a) trace plot of the first 5 coefficients $\beta_j$, (b) mean and the highest posterior density interval for each coefficient $\beta_j$.}
    \label{fig:Gibbs-diagnostics}
\end{figure}

The diagnostic plots for the samples of the regularity parameter $s$ is presented in \Cref{fig:Gibbs-diagnostics-s}. We notice that the trace of $s$, in \Cref{fig:Gibbs-trace-s}, shows some correlation between the successive samples, after the warm-up phase. We report that the effective sample size \cite{robert1999monte}, estimated using the ArviZ Python package \cite{arviz_2019}, is approximately 10 for this chain, indicating that roughly 10 effectively independent samples can be extracted from the Markov chain. While this is generally considered a low effective sample size for reliable uncertainty quantification, it may also reflect a complex or potentially multimodal posterior structure. We provide a kernel density estimation (KDE) \cite{fan1996study} of the posterior distribution of $s$ in \Cref{fig:Gibbs-KDE-s}, together with the true value of $s$. We see that the credible values of $s$ are an upper-bound for the true $s$. Recall that $h\in H^{\tau}$ for all $\tau < s-1/2$. Therefore, an estimate of $s$ is in the form of an upper-bound for the true regularity $s$.

\begin{figure}
    \centering 
    \subfloat[trace plot for $s$]{\label{fig:Gibbs-trace-s}\includegraphics[width=0.45\textwidth]{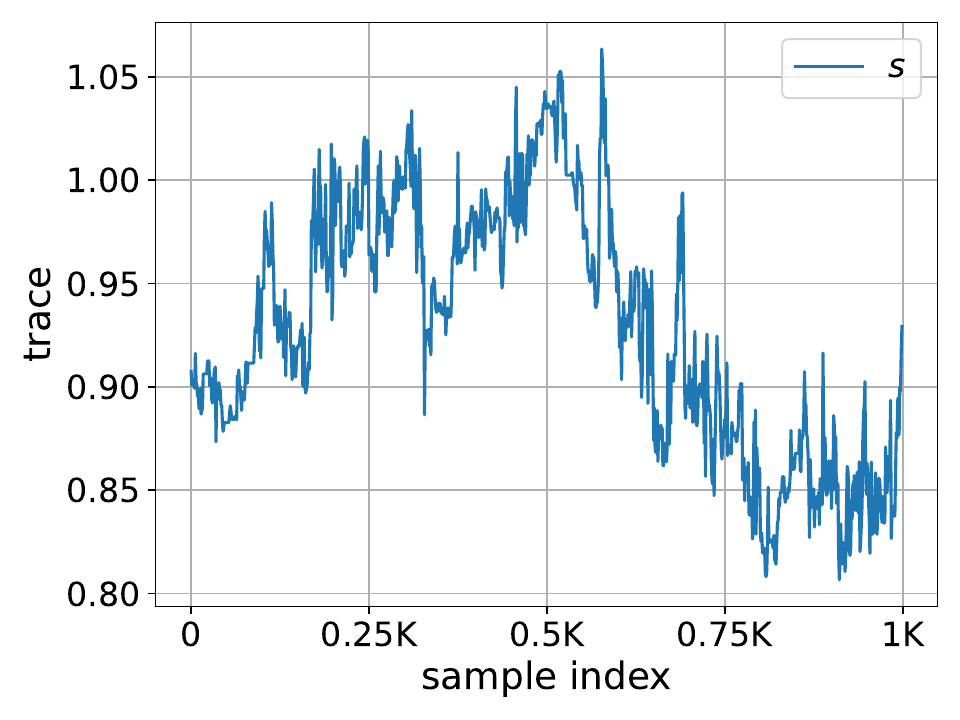}} 
    \subfloat[KDE for $s$]{\label{fig:Gibbs-KDE-s}\includegraphics[width=0.45\textwidth]{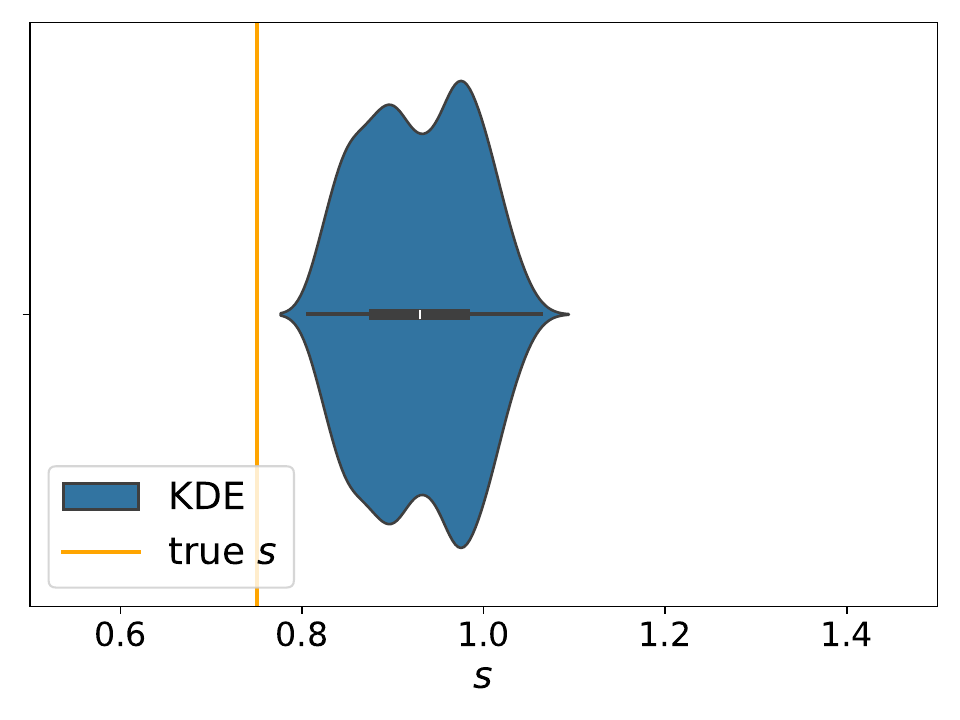}}
    \caption{Diagnostic plots for the Gibbs method. (a) trace plot of the regularity parameter $s$, (b) the kernel density estimation of the posterior samples of $s$ together with the true regularity parameter.}
    \label{fig:Gibbs-diagnostics-s}
\end{figure}

To compute the posterior mean of the seabed, we first fix the regularity parameter to the mean regularity $\bar s := \mathbb E(s)$. We then estimate the seabed following the method discussed in \Cref{sec:fixed-regularity}. The estimated seabed is presented in \Cref{fig:Gibbs-mean-uq} with the quantified uncertainty in the form of a CI band. We notice that estimation is a good match to the true seabed and when the estimated seabed diverges from the true seabed, the width of the CI band increases, indicating the presence of higher uncertainty. However, this width is smaller than the case with a fixed regularity, in \Cref{sec:fixed-regularity}.

\begin{figure}
    \centering 
    \includegraphics[width=0.8\textwidth]{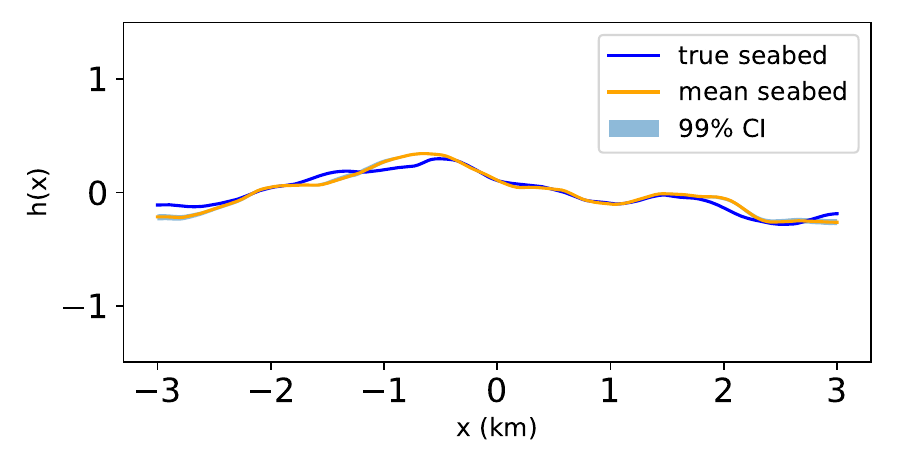}
    \caption{Estimated seabed and the quantified uncertainty with 99\% highest posterior density region.}
    \label{fig:Gibbs-mean-uq}
\end{figure}

\subsection{Uncertain regularity \texorpdfstring{$s$}{Lg} together with an out of prior seabed}

In this section we estimate a seabed that is not a sample from the prior distribution. To create such sample that exhibit homogeneous and isotropic behavior, we follow the idea of constructing a Gaussian random field by filtering (convolving) Gaussian white noise with a smoothing kernel \cite{gelbaum2012white}. We first draw a realization of standard normal random vector $\mathbf p$ of size 64. We then map $\mathbf p$ to the fine grid of size 512 by a periodic linear interpolation operator $I_{\text{inter}}$. We then define periodic Gaussian kernel $k_{\text{ker}}(s) = \frac{1}{z} \exp{(-(s/0.25)^2/2)}$, where $z\in\mathbb R$ is the normalization constant to ensure that the kernel integrates to 1. The mean-zero convolution operation is then consturcted as
\[
    \mathbf h_{\text{true}} = \mathcal B\left( k_{\text{ker}}*(I_{\text{inter}}\mathbf p) - m \right),
\]
where $m\in \mathbb R$ is the average value in $k_{\text{ker}}*(I_{\text{inter}}\mathbf p)$. The operator $\mathcal B$ is a bounding operator, rescaling the signal ensure the range between values [-0.25,0.25]. We report that we apply the convolution using the fast Fourier transform method \cite{brigham1988fast}. The noise vector, interpolated function and smoothened seabed used in this experiment is provided in \Cref{fig:custom-sginal}.

The posterior is constructed and the sampling is performed identically to \Cref{sec:uncertain-regularity}, with $N^h_{\text{inner}}=N^s_{\text{inner}}=250$ and $N_{\text{sample}}=10$ during the warm-up phase, and $N^h_{\text{inner}}=N^s_{\text{inner}}=20$ and $N_{\text{sample}}=500$ after warm-up. We report that step-sizes were tuned to $\beta_h\approx 0.006$ and $\beta_s\approx 0.025$ after the warm-up phase. 

We show the trace plots for the coefficients $\beta_j$ and the regularity parameter $s$ in \Cref{fig:custom-Gibbs-trace,fig:custom-Gibbs-trace-s}, respectively. We see that the chains behave similar to the previous sections. The chain for $s$ shows mild correlation between samples after the warm-up phase. We report that the effective sample size for $s$ in this experiment is approximately 10.

We show the kernel density estimate of the regularity parameter $s$ in \Cref{fig:custom-Gibbs-KDE-s}. We report that the method estimates the mean regularity $\bar s := \mathbb E(s) \approx 0.825 $. We notice that the KDE in \Cref{fig:custom-Gibbs-KDE-s} suggests a tight distribution around the mean.

\begin{figure}
    \centering 
    \subfloat[trace plot]{\label{fig:custom-Gibbs-trace}\includegraphics[width=0.3\textwidth]{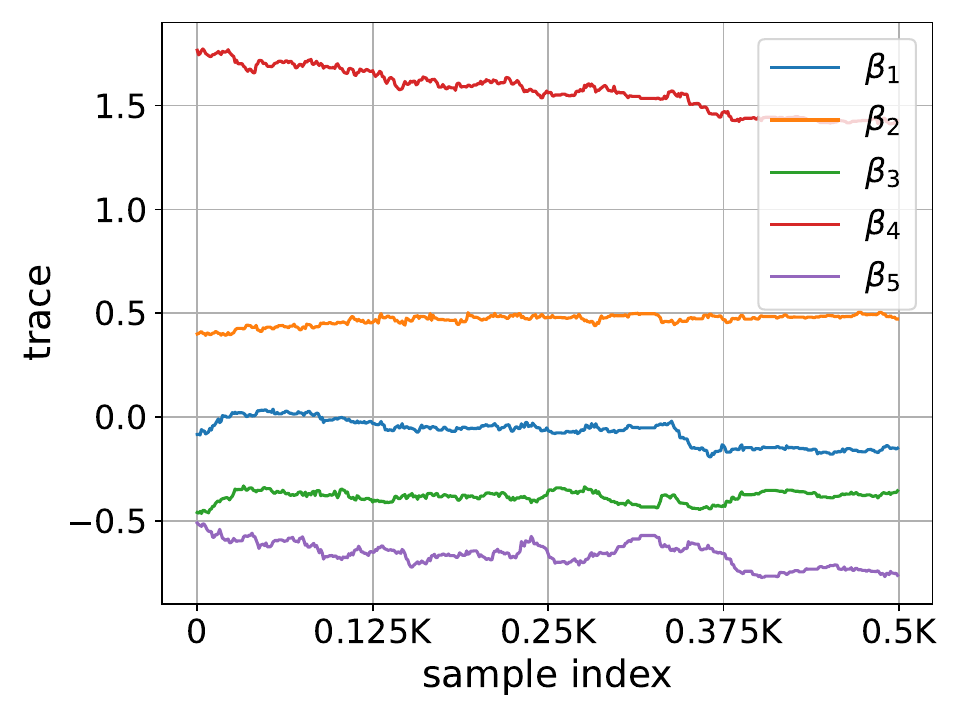}} 
    \subfloat[trace plot]{\label{fig:custom-Gibbs-trace-s}\includegraphics[width=0.3\textwidth]{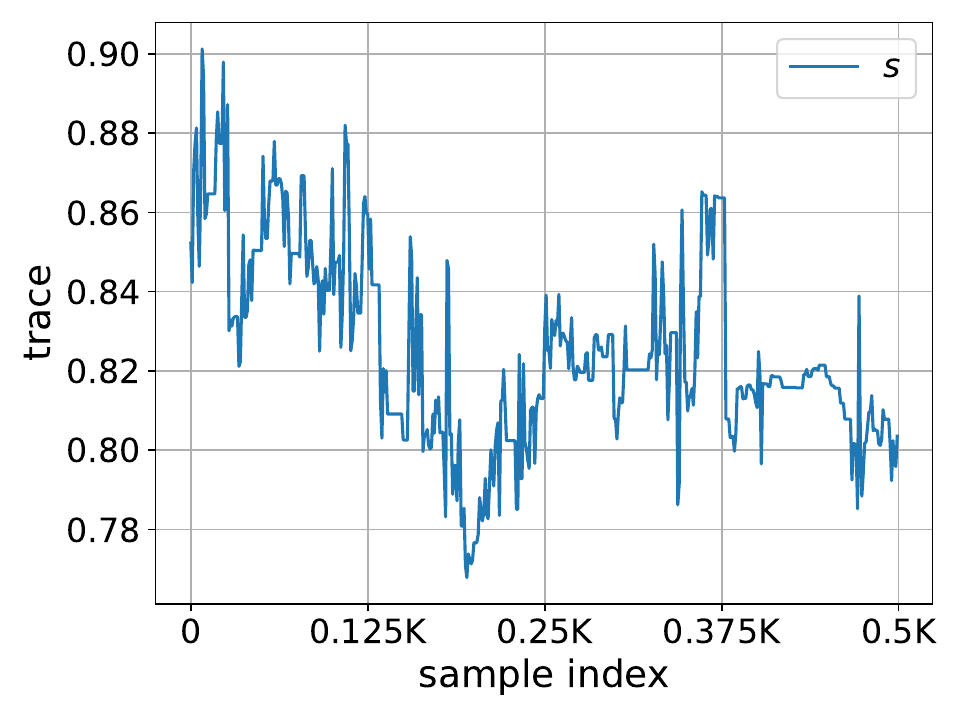}} 
    \subfloat[coefficients]{\label{fig:custom-Gibbs-KDE-s}\includegraphics[width=0.3\textwidth]{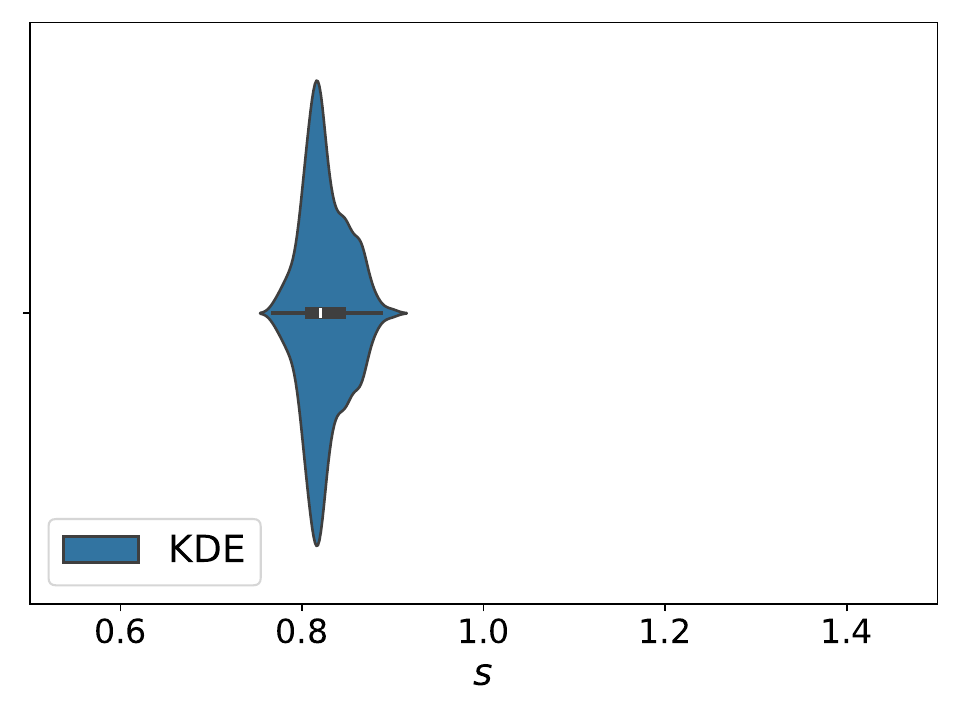}}
    \caption{Diagnostic plots for the Gibbs method. (a) trace plot of the first 5 coefficients $\beta_j$, (b) mean and the highest posterior density interval for each coefficient $\beta_j$.}
    \label{fig:custom-Gibbs-diagnostics}
\end{figure}

We show the estimated mean seabed in \Cref{fig:custom-Gibbs-mean-uq}. We evaluate this estimate following the method discussed in \Cref{sec:uncertain-regularity}. The method can clearly identify the existence of convexity and concavities in the seabed. We notice larger uncertainty at the ends of the seabed, where the estimated seabed diverges from the true seabed. The larger uncertainties are due to the fact that the emitted pressure waves are reflected outside the domain, and therefore, data is incomplete regarding these areas. We remind the reader that the domain of interest in the seabed is roughly the interval $[-2,2]$ where the method provides a mean estimate with a remarkable match to the true seabed.

\begin{figure}
    \centering
    \subfloat[noise realization]{
        \label{fig:custom-sginal}
        \includegraphics[width=0.45\textwidth]{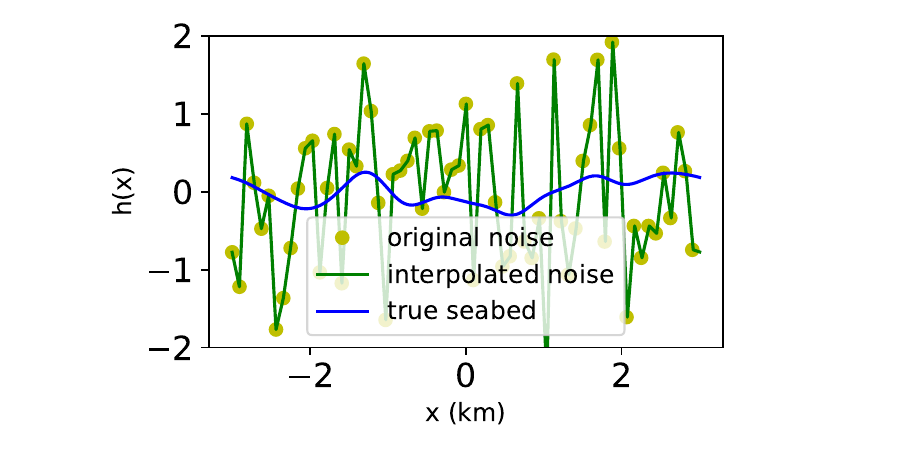}
    }
    \hfill
    \subfloat[seabed estimation]{
        \label{fig:custom-Gibbs-mean-uq}
        \includegraphics[width=0.45\textwidth]{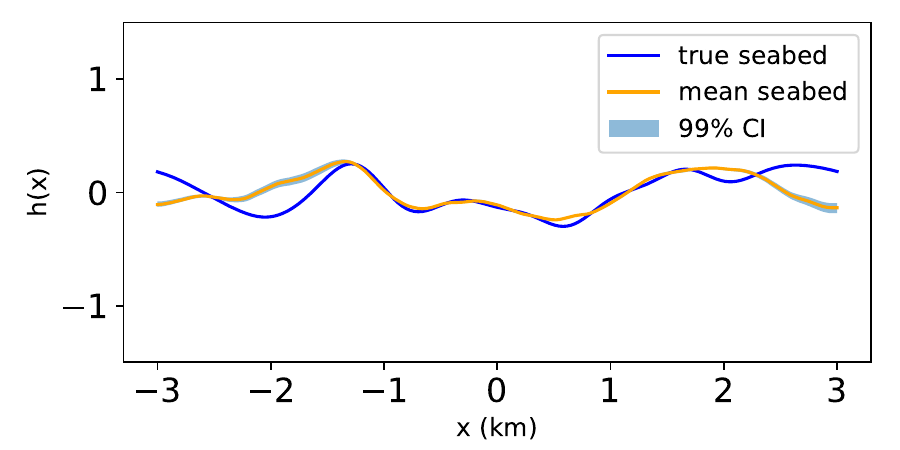}
    }
    \caption{Left: out of prior sample for the seabed $h$. Right: Estimated seabed and quantified uncertainty with 99\% highest posterior density region.}
    \label{fig:custom}
\end{figure}

\section{Conclusions} \label{sec:conclusion}
 
In this paper, we developed a novel goal-oriented Bayesian framework for the inverse problem of seabed tomography. By modeling the seabed as a random function and employing a hierarchical Bayesian approach, we were able to jointly infer both the seabed profile and its regularity from noisy and incomplete measurements. The key innovation lies in interpreting the seabed's roughness in terms of its fractional differentiability, allowing for a more robust and generalizable characterization of surface features beyond the limitations of fixed-frequency or discrete-mode analyses. Our framework utilizes infinite-dimensional priors to construct discretization-invariant models, making it scalable and suitable for large-scale geophysical applications.

We rigorously demonstrated the well-posedness of the resulting posterior distribution, proving its measurability, continuity, and local Lipschitz dependence on data. Additionally, we introduced a Metropolis-within-Gibbs sampling scheme that enables efficient exploration of the joint posterior over the seabed and its regularity. Numerical experiments showed that the method accurately estimates both the shape and the roughness of the seabed, even when the ground truth lies outside the support of the prior distribution. Importantly, the uncertainty quantification provided by the posterior not only highlights regions of high confidence but also identifies areas where data limitations or model mismatch may affect estimation reliability.

These results indicate that the proposed approach provides a promising and scalable framework for interpretable seabed inference with quantified uncertainty. While the present study adopts simplifying assumptions to isolate the role of seabed roughness, it establishes a principled Bayesian methodology for its direct estimation. Future work will focus on extending the framework to more realistic settings, including complex ocean–ground interactions, full elastodynamic forward models, and three-dimensional geometries, as well as on improving prior modeling and developing optimal strategies for uncertainty mitigation. Ultimately, this approach offers a flexible foundation for high-fidelity seabed mapping, with important implications for environmental monitoring, offshore engineering, and hazard mitigation.

\section*{Acknowledgment}

We would like to thank Prof. Per Christian Hansen for his insightful feedback,  careful reading and refinement of the text.

\appendix
\section{Wellposedness}\label{sec:appendix} We detail here a number of theoretical results about well posedness of the forward problem (\ref{eq:wave-eq}) and continuous dependence of the seabed
profile $h$.

Let $\Omega\subset \mathbb R^2$ be a bounded rectangular domain. 
Let us consider the wave model (\ref{eq:wave-eq}) where $c = \sqrt{\alpha/\rho}$
and $y=h(x) \in C([a,b])$ defines a curve contained in $\Omega$
such that $\rho$ and $\alpha$ take different constant values above and below it. We assume
that $h(x)$ meets the edges of $\Omega$ horizontally and that $(x,-1.5)$, $x\in [a,b]$, is the 
bottom boundary of $\Omega$.
%$\partial \Omega^+$ is the region of the boundary where we enforce zero Neumann conditions 
%and $\partial \Omega^-$ the region where we enforce artificial non-reflecting boundary conditions. 

We assume 
\begin{itemize}
\item $\rho, \alpha, c  \in L^\infty(\Omega)$, with $\alpha_{\rm max} > \alpha > \alpha_{\rm min} > 0$, 
$\rho_{\rm max} > \rho > \rho_{\rm min} > 0$, $c > c_{\rm min} >0$,
\item $f(t) \in C^1([0,T])$ and $g(\mathbf x) \in L^2(\Omega)$.
\end{itemize}
% They will never be H^2 Cannot use eigenfunction
We denote by $H^m(\Omega)$ the standard Sobolev spaces, and by $H^s(\Gamma)$,
$\Gamma \subset \partial \Omega$ the spaces of traces, $m,s \in \mathbb R$, see
\cite{lionsmagenes}.
We can construct weak solutions of (\ref{eq:wave-eq}) by means of the variational 
problem: Find $u \in C([0,T]; H^1(\Omega)) \cap C^1([0,T]; L^2(\Omega))$
such that
\begin{equation} \label{eq:varw}
    \begin{aligned}
        \frac{d^2}{dt^2} \int_{\Omega}  \rho u(t)  w  \, d \mathbf x &+ \int_{\Omega}  \alpha \,\nabla  u(t)  \nabla  w  \, d \mathbf x \\
    &+  \frac{d}{dt} \int_{\partial \Omega^-} \frac{\alpha}{c} u(t)  w \, d s =  f(t)  \int_{\Omega}  \rho g  w \,  d \mathbf x, \quad t>0, \\
    \end{aligned}
\end{equation}
with initial conditions $u(\mathbf x, 0) =  0$ and $u_t(\mathbf x, 0) =  0$ and
%\begin{eqnarray} 
%\begin{array}{l} \displaystyle
%\frac{d^2}{dt^2} \int_{\Omega}  \rho u(t)  w  \, d \mathbf x +
%\int_{\Omega}  \alpha \,\nabla  u(t)  \nabla  w  \, d \mathbf x + 
%\frac{d}{dt} \int_{\partial \Omega^-} \frac{\alpha}{c} u(t)  w \, d S_{\mathbf x} 
%\\[2ex] \displaystyle  
%\hfill =  f(t)  \int_{\Omega}  \rho g  w \,  d \mathbf x, \quad t>0,
%\\[2ex] \displaystyle 
%u(0) =  0, \quad u_t(0) =  0, 
%\end{array}  \label{eq:varw}
%\end{eqnarray}
for all $w \in  H^1(\Omega)$ and $t \in (0,T]$.

In the sequel, we denote by $C^m([0,T])$ the spaces of continuous functions with continuous
derivatives up to the order $m$ in the interval $[0,T]$. We use $L^\infty(0,T; B)$ and
$C^m([0,T];B)$ the standard spaces of bounded and continuous functions with continuous
derivatives up to the order $m$ in an interval with values in Banach (or Hilbert) spaces
$B$, in accordance with \cite{lionsmagenes}.

\begin{theorem} \label{thm:A1}
\it Let us assume that 
$\rho, \alpha, c  \in L^\infty(\Omega)$, 
$\alpha > \alpha_{\rm min} > 0$, 
$\rho > \rho_{\rm min} > 0$, 
$c > c_{\rm min} >0$,   
$f(t) \in C^1([0,T])$ and $g(\mathbf x) \in L^2(\Omega)$.
% such that $ g/rho  \in L^2(\Omega)$.
%\footnote{typically the right hand side in the wave equation 
%is $\rho f(\mathbf x,t)$, so it should be $\rho(\mathbf x) g(\mathbf x)$.}.
Here $\alpha_{\text{min}}$, $\rho_{\text{min}}$ and $c_{\text{min}}$ define scalar lower bounds for $\alpha$, $\rho$ and $c$, respectively. Then, there is a unique solution 
$u \in C^1([0,T]; H^1(\Omega)) \cap C^2([0,T]; L^2(\Omega))$
of the variational problem (\ref{eq:varw}).
%{\bf Remark: $u_t \in L^\infty(0,T;L^2(\partial \Omega^-))$} by construction !!!
Moreover,
\begin{equation}
    \begin{aligned}
    \|u_t\|_{L^\infty(0,T;L^2(\Omega))} &\leq
C(T,\rho_{\rm min}, \|f\|_{C([0,T])}, \| g \|_{L^2(\Omega)}) \\
         \| \nabla u \|_{L^\infty(0,T;L^2(\Omega))} &\leq
C(T,\rho_{\rm min}, \alpha_{\rm min}, c_{\rm min}, \|f\|_{C([0,T])}, \| g \|_{L^2(\Omega)}), \\
        \| u_t \|_{L^2(0,T;L^2(\partial \Omega^-))} &\leq 
C(T,\rho_{\rm min}, \alpha_{\rm min}, c_{\rm min},
\|f\|_{L^\infty(0,T)}, \| g \|_{L^2(\Omega)}), \\
        \|  u \|_{L^\infty(0,T;L^2(\Omega))} &\leq C(T, \rho_{\rm min}, \alpha_{\rm min}, c_{\rm min}, 
\mu, \|f\|_{C([0,T])}, \| g \|_{L^\infty(0,T;L^2(\Omega))}), \\
        \| u_{tt}\|_{L^\infty(0,T;L^2(\Omega))} &\leq
C(T,\rho_{\rm min}, \|f\|_{C^1([0,T])}, \| g \|_{L^2(\Omega)}), \\
        \| \nabla u_t \|_{L^\infty(0,T;L^2(\Omega))} &\leq
C(T, \rho_{\rm min}, \alpha_{\rm min}, c_{\rm min}, \|f\|_{C^2([0,T])}, \| g \|_{L^2(\Omega)}), \\
        \| u_{tt} \|_{L^2(0,T;L^2(\partial \Omega^-))} &\leq C(T,\rho_{\rm min}, \alpha_{\rm min}, 
c_{\rm min}, \|f\|_{C^2([0,T])}, \| g \|_{L^2(\Omega)}),
    \end{aligned}
\end{equation}
%\[\| u_t\|_{L^\infty(0,T;L^2(\Omega))} \leq
%C(T,\rho_{\rm min}, \|f\|_{C([0,T])}, \| g \|_{L^2(\Omega)}),\]
%\[ \| \nabla u \|_{L^\infty(0,T;L^2(\Omega))} \leq
%C(T,\rho_{\rm min}, \alpha_{\rm min}, c_{\rm min}, %\|f\|_{C([0,T])}, \| g \|_{L^2(\Omega)}),\]
%\[ \| u_t \|_{L^2(0,T;L^2(\partial \Omega^-))} \leq 
%C(T,\rho_{\rm min}, \alpha_{\rm min}, c_{\rm min},
%\|f\|_{L^\infty(0,T)}, \| g \|_{L^2(\Omega)}),\]
%\[ \|  u \|_{L^\infty(0,T;L^2(\Omega))} \leq C(T, \rho_{\rm min}, \alpha_{\rm min}, c_{\rm min}, 
%\mu, \|f\|_{C([0,T])}, \| g \|_{L^\infty(0,T;L^2(\Omega))}),\]
%\[\| u_{tt}\|_{L^\infty(0,T;L^2(\Omega))} \leq
%C(T,\rho_{\rm min}, \|f\|_{C^1([0,T])}, \| g \|_{L^2(\Omega)}),\]
%\[ \| \nabla u_t \|_{L^\infty(0,T;L^2(\Omega))} \leq
%C(T, \rho_{\rm min}, \alpha_{\rm min}, c_{\rm min}, %\|f\|_{C^2([0,T])}, \| g \|_{L^2(\Omega)}),\]
%\[ \| u_{tt} \|_{L^2(0,T;L^2(\partial \Omega^-))} \leq %C(T,\rho_{\rm min}, \alpha_{\rm min}, 
%c_{\rm min}, \|f\|_{C^2([0,T])}, \| g \|_{L^2(\Omega)}),\]
for some $\mu>0$, that is, the norms of the solutions depend continuously
on $f$, $g$, $\rho_{\rm min}$ and $\alpha_{\rm min}$.
\end{theorem}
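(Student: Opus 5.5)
We prove existence by a Faedo--Galerkin scheme combined with the energy method, following the classical approach for second-order hyperbolic problems with damping boundary terms \cite{lionsmagenes}. Fix a basis $\{w_k\}$ of $H^1(\Omega)$, orthonormal in the weighted space $L^2(\Omega;\rho\,d\mathbf x)$; this is an admissible inner product since $\rho_{\min}<\rho<\rho_{\max}$. We seek $u_N(t)=\sum_{k\le N} d_k(t) w_k$ satisfying \eqref{eq:varw} for $w=w_1,\dots,w_N$. This gives a linear ODE system
\[
d'' + B d' + A d = f(t)\,\mathbf g,
\]
where $A$ is the stiffness matrix with weight $\alpha$, $B$ is the boundary matrix with weight $\alpha/c=\rho c\ge 0$, and the data are zero. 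Since $f\in C^1$, this system has a unique $C^3$ solution on $[0,T]$.

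\textbf{First energy estimate.} Take $w=u_{N,t}$ and define
\[
E(t)=\tfrac12\int_\Omega \rho\, u_{N,t}^2 + \tfrac12\int_\Omega \alpha|\nabla u_N|^2 .
\]
Then
\[
E'(t)+\int_{\partial\Omega^-}\frac{\alpha}{c}\,u_{N,t}^2\,ds = f(t)\int_\Omega \rho g\, u_{N,t}.
\]
Bound the right-hand side by Cauchy--Schwarz as $\|f\|_{C}\,\|\rho^{1/2}g\|\,E^{1/2}$, and integrate. Gronwall's inequality then gives uniform bounds on $\|u_{N,t}\|_{L^\infty L^2}$ and $\|\nabla u_N\|_{L^\infty L^2}$. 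Integrating the dissipated boundary term over $(0,T)$ gives $\|u_{N,t}\|_{L^2(0,T;L^2(\partial\Omega^-))}$, using $\alpha/c=\rho c\ge \rho_{\min}c_{\min}$. The $L^\infty L^2$ bound on $u_N$ itself follows from $u_N(t)=\int_0^t u_{N,t}$; this is where the constant $\mu$ can enter if one instead uses an equivalent $H^1$ norm.

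\textbf{Second energy estimate.} Differentiate the Galerkin system in time; this is legitimate since $f\in C^1$. Then test with $u_{N,tt}$ to obtain the same energy identity for $u_{N,t}$, with forcing $f'(t)\rho g$. The required initial data are $u_{N,t}(0)=0$ and $u_{N,tt}(0)$, which we read off from the equation at $t=0$: there $\rho\,u_{tt}(0)=f(0)\rho g$, so $\|u_{N,tt}(0)\|$ is controlled by $|f(0)|\,\|g\|$. This yields the bounds on $u_{tt}$, $\nabla u_t$ and $u_{tt}|_{\partial\Omega^-}$. These estimates are uniform in $N$.

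\textbf{Passage to the limit and uniqueness.} Extract weak-$*$ convergent subsequences in $L^\infty(0,T;H^1)$ for $u_N$ and $u_{N,t}$, and in $L^\infty(0,T;L^2)$ for $u_{N,tt}$, and pass to the limit in the weak form. The boundary terms are handled through weak convergence in $L^2(0,T;L^2(\partial\Omega^-))$. The estimates persist by weak lower semicontinuity. For uniqueness, take the difference of two solutions, which has zero data, and test with its time derivative. This is justified because $u_t\in H^1$, which we already have at this regularity. The energy identity then forces the difference to vanish.

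\textbf{Main obstacle.} The upgrade from $L^\infty$-in-time to continuity, $u\in C^1([0,T];H^1)\cap C^2([0,T];L^2)$, is the delicate step, because the coefficients are only $L^\infty$ and no elliptic regularity is available. I would first obtain weak continuity via the Lions--Magenes lemma. Strong continuity would then follow from the energy identity holding for the limit, namely continuity of the energy norm combined with weak continuity. I would have to be careful here and may need to settle for the slightly weaker class stated as $C([0,T];H^1)\cap C^1([0,T];L^2)$ in the variational formulation.
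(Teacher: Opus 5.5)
Your route is the paper's route. The paper's proof is only a pointer to the Faedo--Galerkin construction of Lions--Magenes, Ch.~3, adapted to the absorbing boundary term, and you actually carry that construction out. Your $\rho$-orthonormal basis and both energy estimates are sound. This includes $u_{N,tt}(0)=f(0)P_Ng$ with $P_N$ the $\rho$-weighted projection, so only $\|f\|_{C^1}$ enters, consistent with the hypothesis $f\in C^1$. The passage to the limit and the uniqueness argument are also sound.

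The genuine gap is the step you flag yourself, and your fallback does not repair it. The theorem asserts $u\in C^1([0,T];H^1(\Omega))\cap C^2([0,T];L^2(\Omega))$. Your limit only gives $u_t\in L^\infty(0,T;H^1(\Omega))$ and $u_{tt}\in L^\infty(0,T;L^2(\Omega))$. The class $C([0,T];H^1)\cap C^1([0,T];L^2)$ you would ``settle for'' is strictly weaker than what is stated. It is also already automatic from your bounds, since $W^{1,\infty}$ in time means Lipschitz, so no Lions--Magenes lemma is needed there. What is missing is strong continuity one time derivative higher.

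To close the gap, run your ``weak continuity plus energy equality'' argument on $v=u_t$ rather than on $u$. Differentiating \eqref{eq:varw} in $t$ shows that $v$ solves the same damped problem with $f$ replaced by $f'$, and with data $v(0)=0$, $v_t(0)=f(0)g$ (read off from the equation at $t=0$). The Lions--Magenes lemma gives $v\in C_w([0,T];H^1)$ and $v_t\in C_w([0,T];L^2)$. You cannot test with $v_t=u_{tt}$, which is not in $H^1$. Instead, mollify in time: $v^\epsilon=\theta_\epsilon*v$ solves the same equation because the coefficients are time-independent, and $v^\epsilon_t$ is an admissible test function. Letting $\epsilon\to0$ as in Lions--Magenes gives, for $t\in[0,T]$,
\[
\tfrac12\!\int_\Omega\rho|v_t(t)|^2+\tfrac12\!\int_\Omega\alpha|\nabla v(t)|^2+\int_0^t\!\!\int_{\partial\Omega^-}\tfrac{\alpha}{c}|v_t|^2=\tfrac12 f(0)^2\!\int_\Omega\rho g^2+\int_0^t\! f'(\tau)\!\int_\Omega\rho g\,v_t .
\]
The dissipative boundary term converges precisely because your second estimate yields $u_{tt}|_{\partial\Omega^-}\in L^2(0,T;L^2(\partial\Omega^-))$.

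The right-hand side and the dissipation integral are continuous in $t$, and so is $\|v(t)\|_{L^2}$. Hence an equivalent squared norm of $(v,v_t)$ in $H^1\times L^2$ is continuous in $t$. Together with weak continuity, this gives $(u_t,u_{tt})\in C([0,T];H^1\times L^2)$, which is exactly the stated class.

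Alternatively, write the problem as $U'=\mathcal A U+(0,f(t)g)$ on $H^1\times L^2$, where $\mathcal A$ is quasi-m-dissipative (Lax--Milgram for the range condition; the boundary term has a good sign). The forcing is $C^1$ in time and $U(0)=0\in D(\mathcal A)$, so the classical-solution theorem for $C_0$-semigroups gives $(u,u_t)\in C^1([0,T];H^1\times L^2)$ directly, again needing only $f\in C^1$.
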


\begin{pf}
The construction method uses countable bases of $H^1(\Omega)$ and follows 
\cite{lionsmagenes}, Chapter 3, for standard wave equations with a number of modifications in 
view of the nonstandard boundary condition $\frac{\partial u}{\partial \mathbf n} 
= - u_t$, see \cite{ABUGATTAS2025129453} for details.
Notice that $H^1(\Omega)$ is separable, thus we can always find a set 
$\{\phi_1, \ldots, \phi_k, \ldots \}  \subset H^1(\Omega)$ whose elements
are linearly independent  while their linear combinations with real coefficients are 
dense  in $H^1(\Omega)$, see \cite{lionsmagenes}.
\end{pf}

We next establish continuous dependence of the solutions on the seafloor profile 
$h$ for coefficients $\rho$ and $\alpha$ given by (\ref{coefficients}),

\begin{theorem} \label{thm:A2}
\it Under the hypotheses of Theorem 1,
let us further  assume that $f(t) \in C^2([0,T])$. 
Then, the solution to (\ref{eq:wave-eq}) has regularity
$C^2([0,T]; H^1(\Omega)) \cap C^3([0,T]; L^2(\Omega))$ and
satisfies the additional estimates
\begin{eqnarray} \begin{array}{l}
\| u_{ttt}\|_{L^\infty(0,T;L^2(\Omega))} \leq
C(T,\rho_{\rm min}, \|f\|_{C^2([0,T])}, \| g \|_{L^2(\Omega)}), 
\\[2ex] \displaystyle
\| \nabla u_{tt} \|_{L^\infty(0,T;L^2(\Omega))} \leq
C(T,\rho_{\rm min}, \alpha_{\rm min}, c_{\rm min}, \|f\|_{C^2([0,T])}, 
\| g \|_{L^2(\Omega)}), 
 \\[2ex] \displaystyle
\| u_{ttt} \|_{L^2(0,T;L^2(\partial \Omega^-))} \leq C(T,\rho_{\rm min}, 
\alpha_{\rm min}, c_{\rm min}, \|f\|_{C^2([0,T])}, \| g \|_{L^2(\Omega)} ).
\end{array} \label{est_2} \end{eqnarray}
Moreover, the solutions of (\ref{eq:wave-eq}) depend continuously
on the curve $h$ defining $\alpha$, with respect to the 
$L^2$ norms.  More precisely, given continuous curves
$y=h_1(x)$ and $y=h_2(x)$, the corresponding solutions
$u_{h_1}$ and $u_{h_2}$ satisfy that
$\|u_{h_1} - u_{h_2} \|_{H^1(\Omega)}^2$, 
$\|u_{h_1,t} - u_{h_2,t} \|_{L^2(\Omega)}^2$ and
$\|u_{h_1} - u_{h_2} \|_{L^2(\partial \Omega^+)}$
are bounded from above
by $\| \alpha_{h_1}  -  \alpha_{h_2} \|_{L^2(\Omega)} 
+\| \alpha_{h_1}  -  \alpha_{h_2} \|_{L^2(\partial \Omega^-)}$
multiplied by constants not depending on $h_1$ and $h_2$. 
\end{theorem}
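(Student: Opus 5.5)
The plan has two parts: a higher-regularity argument obtained by differentiating in time, and an energy estimate for the difference of two solutions.

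\emph{Higher regularity.} Formally differentiate (\ref{eq:wave-eq}) once in time. Because $\rho$, $\alpha$ and $c$ do not depend on $t$, the function $\tilde u := u_t$ solves the same problem with source $f'(t)g$. Its initial data are $\tilde u(0)=u_t(0)=0$ and $\tilde u_t(0)=u_{tt}(0)$, and the equation at $t=0$ gives $\rho\,u_{tt}(0)=f(0)g$.

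In the Galerkin construction of Theorem~\ref{thm:A1} I will carry out this differentiation on the finite-dimensional approximations. I then repeat the energy estimate: test with $\tilde u_{tt}$, use $\rho\ge\rho_{\min}$ and $\alpha\ge\alpha_{\min}$, note that the boundary term $\int_{\partial\Omega^-}(\alpha/c)\tilde u_t^2$ has a good sign, and close with Gronwall. This yields bounds on $u_{ttt}$ in $L^\infty(L^2(\Omega))$, on $\nabla u_{tt}$ in $L^\infty(L^2(\Omega))$, and on $u_{ttt}$ in $L^2(L^2(\partial\Omega^-))$, with constants depending on $\|f\|_{C^2}$.

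The nonzero initial velocity $u_{tt}(0)=f(0)g/\rho$ is handled as follows. It lies in $L^2(\Omega)$, which is enough for the energy estimate. The initial gradient term vanishes because $u_t(0)=0$. Passing to the limit in the Galerkin scheme then gives $u\in C^2([0,T];H^1(\Omega))\cap C^3([0,T];L^2(\Omega))$.

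\emph{Continuous dependence on $h$.} Let $e:=u_{h_1}-u_{h_2}$, with $\rho_k=\rho_{h_k}$ and $\alpha_k=\alpha_{h_k}$. Subtracting the two variational problems (\ref{eq:varw}) gives, for all $w\in H^1(\Omega)$,
\[
\int_\Omega \rho_1 e_{tt} w + \int_\Omega \alpha_1 \nabla e\cdot\nabla w + \int_{\partial\Omega^-}\tfrac{\alpha_1}{c_1} e_t w
= -\int_\Omega(\rho_1-\rho_2)u_{h_2,tt}w - \int_\Omega(\alpha_1-\alpha_2)\nabla u_{h_2}\cdot\nabla w - \int_{\partial\Omega^-}\bigl(\tfrac{\alpha_1}{c_1}-\tfrac{\alpha_2}{c_2}\bigr)u_{h_2,t}w + \text{(source difference)}.
\]
Take $w=e_t$ and integrate over $(0,t)$. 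Since $\rho$ and $\alpha$ are piecewise determined by the same interface, $|\rho_1-\rho_2|$ and $|\alpha_1/c_1-\alpha_2/c_2|$ are bounded by constant multiples of $|\alpha_1-\alpha_2|$ pointwise. The source difference is handled the same way, since $\rho g$ changes with $h$.

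The left-hand side controls $\rho_{\min}\|e_t\|^2+\alpha_{\min}\|\nabla e\|^2$.

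\emph{Main obstacle.} The gradient term $\int(\alpha_1-\alpha_2)\nabla u_{h_2}\cdot\nabla e_t$ is the difficult one: it cannot be bounded by $\|\alpha_1-\alpha_2\|_{L^2}$ using only $\nabla u\in L^2$. I would first integrate by parts in time, moving the derivative off $e_t$ onto $\nabla u_{h_2,t}$. This leaves terms $\int(\alpha_1-\alpha_2)\nabla u_{h_2,t}\cdot\nabla e$ plus an endpoint term. Both are controlled by Cauchy--Schwarz as
\[
\|\alpha_1-\alpha_2\|_{L^2}\,\|\nabla u_{h_2,t}\|_{L^\infty}\,\|\nabla e\|,
\]
and this needs $\nabla u_{h_2,t}$ and $\nabla u_{h_2}$ in $L^\infty$ in space. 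That is exactly the gap.

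My fallback is to use that $|\alpha_1-\alpha_2|$ is bounded by a constant, so that $\|\alpha_1-\alpha_2\|_{L^2}^2\le C\|\alpha_1-\alpha_2\|_{L^1}$, which is comparable to $\|h_1-h_2\|_{L^1}$. I would then estimate
\[
\int|\alpha_1-\alpha_2|\,|\nabla u_{h_2,t}|\,|\nabla e| \le C\,\|\alpha_1-\alpha_2\|_{L^2}\,\|\nabla u_{h_2,t}\|_{L^4}\,\|\nabla e\|_{L^4},
\]
using the regularity just established together with $L^p$-gradient (Meyers-type) estimates for the piecewise-constant elliptic operator in 2D. This step justifies requiring $f\in C^2$. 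The boundary term on $\partial\Omega^-$ is treated with the bound on $u_{h_2,tt}$ in $L^2(L^2(\partial\Omega^-))$ from Theorem~\ref{thm:A1}, together with Young's inequality.

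\emph{Closing the argument.} Gronwall then gives bounds on $\|e_t\|^2_{L^2(\Omega)}$ and $\|\nabla e\|^2_{L^2(\Omega)}$ in terms of the stated norms of $\alpha_1-\alpha_2$. Since $e(0)=0$, the bound on $\|e\|_{L^2}$ follows from the one on $e_t$, so we obtain the $H^1$ bound. Finally, the trace theorem gives the bound on $\|e\|_{L^2(\partial\Omega^+)}$.
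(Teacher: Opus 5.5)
Your overall route is the paper's: gain time regularity by differentiating the equation, then test the equation for $e=u_{h_1}-u_{h_2}$ with $e_t$, integrate $\int_\Omega(\alpha_1-\alpha_2)\nabla u_{h_2}\cdot\nabla e_t$ by parts in time, use extra integrability of $\nabla u_{h_2}$ and $\nabla u_{h_2,t}$, and finish with Gronwall and the trace theorem. Two steps, however, fail as written.

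\emph{Higher regularity.} The estimate you set up for $\tilde u=u_t$ has initial energy built from $\tilde u_t(0)=u_{tt}(0)\in L^2(\Omega)$ and $\nabla\tilde u(0)=0$, and dissipation $\int_{\partial\Omega^-}(\alpha/c)\tilde u_t^2$. That is the energy identity for $\tilde u$ tested with $\tilde u_t$, so it only controls $u_{tt}$, $\nabla u_t$ and $u_{tt}|_{\partial\Omega^-}$; it reproduces Theorem~\ref{thm:A1}. Testing the $\tilde u$-equation with $\tilde u_{tt}$ instead does not close, because $\int_\Omega\alpha|\nabla\tilde u_t|^2$ then appears with the wrong sign. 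To reach $u_{ttt}$, $\nabla u_{tt}$ and $u_{ttt}|_{\partial\Omega^-}$ you must differentiate twice and run the energy argument for $z=u_{tt}$, as the paper does. The initial energy is then $\frac12\int_\Omega\rho|u_{ttt}(0)|^2+\frac12\int_\Omega\alpha|\nabla u_{tt}(0)|^2$, so you need $u_{tt}(0)\in H^1(\Omega)$ and $u_{ttt}(0)\in L^2(\Omega)$; $u_{tt}(0)\in L^2(\Omega)$ is not enough. With your normalization $u_{tt}(0)=f(0)g/\rho$, the $H^1$ requirement fails in general, because $\rho$ jumps across $y=h(x)$ and $g$ is only assumed to lie in $L^2$. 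The paper instead uses the source $\rho g$ of \eqref{eq:varw}, so $u_{tt}(0)=f(0)g\in H^1(\Omega)$ for the smooth Gaussian $g$, and $u_{ttt}(0)=f'(0)g$. Strictly, $u_{ttt}(0)\in L^2(\Omega)$ also needs the compatibility condition $f(0)g=0$ on $\partial\Omega^-$, which is forced by $\partial_{\mathbf n}u_t=-u_{tt}/c$ at $t=0$; the paper does not address this either.

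\emph{Continuous dependence.} The split $\|\alpha_1-\alpha_2\|_{L^2}\|\nabla u_{h_2,t}\|_{L^4}\|\nabla e\|_{L^4}$ cannot be closed by Gronwall, since your energy controls only $\|\nabla e\|_{L^2}$. Bounding $\|\nabla e\|_{L^4}\le\|\nabla u_{h_1}\|_{L^4}+\|\nabla u_{h_2}\|_{L^4}$ instead would need uniform $W^{1,4}$ bounds. Meyers' theorem does not give these: it yields $\nabla u\in L^{2+\delta}$ only for some $\delta>0$ depending on the contrast $\alpha_{\max}/\alpha_{\min}$.

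The missing device is the paper's treatment of $I_1,I_2,I_3$. Keep $\nabla e$ (and $e_t$ on $\partial\Omega^-$) in $L^2$ and absorb it by Young's inequality. Then move all the extra integrability onto the coefficient difference, using its boundedness: $\int_\Omega|\alpha_1-\alpha_2|^{2q'}\le\alpha_{\max}^{2q'-2}\int_\Omega|\alpha_1-\alpha_2|^2$. Your inequality $\|\alpha_1-\alpha_2\|_{L^2}^2\le C\|\alpha_1-\alpha_2\|_{L^1}$ goes in the opposite direction.

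With this device you need only $\nabla u_{h_2},\nabla u_{h_2,t}\in L^{2q}$ for some $q>1$. Meyers-type estimates do supply this, with constants uniform in $h_2$. That is better than the paper, whose boundary-localized $H^2$ argument gives constants depending on $h_2$ and which never justifies $\nabla u_2\in L^{2q}$ in the interior. Be aware, though, that the outcome bounds the squared norms by $\|\alpha_1-\alpha_2\|^{2/q'}$, as in \eqref{eq:energy2}. This implies the linear bound claimed in the statement only when $2/q'\ge1$, i.e.\ $q\ge2$ (gradients in $L^4$). With $2q=2+\delta$ you get only H\"older-type dependence on $\alpha_1-\alpha_2$.
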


\begin{pf} We differentiate twice with respect to time (\ref{eq:wave-eq})
formulated as (\ref{eq:varw})
to get a similar boundary value problem for $u_{tt}$ with initial data
$u_{tt}(\mathbf x,0)= f(0) g(\mathbf x) \in H^1(\Omega)$ and
$u_{ttt}(\mathbf x,0)= f'(0) g(\mathbf x) \in L^2(\Omega)$ and
% divide by rho both
right hand side $f'' g \in C([0,T];L^2(\Omega))$. This yields the augmented 
regularity, i.e., $u_{tt}$ inherits the regularity according to Theorem 1 and we get the stated bounds.
As a consequence, for each $t$ the function $u_t$ is a solution of an 
elliptic boundary value problem 
$- \nabla \cdot (\alpha \nabla u_t) = - \rho u_{ttt} +  \rho f' g \in L^2(\Omega)$
with zero Neumann boundary conditions on $\partial \Omega^+$
and $\nabla u_t \cdot \mathbf n = - u_{tt}/c$ in $L^2(\partial \Omega^-).$

We achieve $H^2$ regularity near the boundaries as follows. The boundary
$\partial \Omega$ being compact, we can find a finite set of open sets $U_j$, 
$j=1,\ldots,N$ covering it. In this case, we can take $N=6$, and open sets 
such that $U_j \cap \Omega$ is fully contained in the regions $y>h(x)$ or 
$y<h(x)$, with boundaries each of the $6$ associated segments of 
$\partial \Omega$.
Consider a function $\eta \in C^\infty(U_j)$ vanishing on a layer around
$\partial U_j \cap \Omega$ but not on $U_j \cap \partial \Omega$.
Then,  $w = u _t\eta$ satisfies
\begin{eqnarray*}\begin{array}{ll}
\alpha \Delta w = \eta [- \rho u_{ttt} +  \rho f' g] + 2 \alpha \nabla u_t \nabla \eta 
+ \alpha u _t \Delta \eta = q, & \mathbf x \in U_j \cap \Omega, \\ [1ex]
w = \nabla w \cdot \mathbf n = 0, & \mathbf x \in \partial U_j \cap \Omega, 
\end{array}\end{eqnarray*}
and either $\nabla w \cdot \mathbf n = - \eta  u_{tt}/c + u \nabla \eta \cdot
\mathbf n$, $ \mathbf x \in \partial \Omega^- \cap U_j $ or
$\nabla w \cdot \mathbf n = 0$, $\mathbf x \in \partial \Omega^+ \cap U_j $.
Given piecewise $C^1$ sets $R$, for any $g \in L^2(\partial R)$, there exists a 
function $w_g \in H^2(U_j \cap \Omega)$ such that $\nabla w_g \cdot \mathbf n 
= g$. Setting $w = \tilde w + w_g$, we find
\begin{eqnarray*}\begin{array}{l}
\alpha \Delta \tilde w(t)  =  q  -  \alpha \Delta w_g(t) \in L^2(U_j \cap \Omega), 
\\ [1ex]
\nabla \tilde w(t) \cdot \mathbf n = 0,  \quad {\rm on }  \, \partial (U_j \cap \Omega). 
\end{array}\end{eqnarray*}
By the regularity results in \cite{grisvard}, 
Ch. 3, the solution $w \in H^2(U_j \cap \Omega)$.  By Sobolev's injections 
\cite{adams}, in dimension $n=2$, $u$ is continuous up to the borders, with
possible exception of merging points.
%and takes on $L^2$ Dirichlet boundary conditions on $\partial \Omega$
%and on the curve $y=h(x)$. 
%{\bf FALSE Then,
%elliptic regularity \cite{grisvard} implies that $u_t(t)$ has $H^2$
%regularity in each of the regions $y>h(x)$  and $y<h(x)$. We will
%need this regularity for the continuous dependence proof.}

Now, let $u_1$ and $u_2$ be two solutions in 
$C^2([0,T]; H^1(\Omega)) \cap C^3([0,T]; L^2(\Omega))$
to (\ref{eq:wave-eq}) corresponding to coefficients $\alpha_{h_1}$ 
and $\alpha_{h_2}$ defined by curves $y=h_1(x)$ and $y=h_2(x)$.
We set $u=u_1-u_2$.  Notice that $c_{h_i} = \sqrt{ \alpha_{h_i}/\rho}$
and $\alpha_{h_i}/c_{h_i} = \sqrt{\rho \alpha_{h_i}}$, $i=1,2$.
Then, $u$ is a solution to
\begin{equation*}
    \begin{aligned}
        &\frac{d^2}{dt^2} \int_{\Omega}  \rho u(t)  w  \, d \mathbf x +
\int_{\Omega}  \alpha_{h_1} \,\nabla  u(t)  \nabla  w  \, d \mathbf x + 
 \frac{d}{dt} \int_{\partial \Omega^-}  \sqrt{\rho} \sqrt{\alpha_{h_1}}
u(t)  w \, d s = \\
    & - \int_{\Omega}  (\alpha_{h_1} -  \alpha_{h_2}) \,
\nabla  u_2(t)  \nabla  w  \, d \mathbf x  -
 \int_{\partial \Omega^-}  \sqrt{\rho} (\sqrt{\alpha_{h_1}}-
\sqrt{\alpha_{h_2}}) u_{2,t}(t)  w \, d s  
    \end{aligned}
\end{equation*}

%\begin{eqnarray*} 
%\begin{array}{l} \displaystyle
%\frac{d^2}{dt^2} \int_{\Omega}  \rho u(t)  w  \, d \mathbf x +
%\int_{\Omega}  \alpha_{h_1} \,\nabla  u(t)  \nabla  w  \, d \mathbf x + 
% \frac{d}{dt} \int_{\partial \Omega^-}  \sqrt{\rho} \sqrt{\alpha_{h_1}}
%u(t)  w \, d s =
% \\[2ex]
%\displaystyle
%- \int_{\Omega}  (\alpha_{h_1} -  \alpha_{h_2}) \,
%\nabla  u_2(t)  \nabla  w  \, d \mathbf x  -
% \int_{\partial \Omega^-}  \sqrt{\rho} (\sqrt{\alpha_{h_1}}-
%\sqrt{\alpha_{h_2}}) u_{2,t}(t)  w \, d s  
%\end{array}  \label{eq:diff}
%\end{eqnarray*}
for all $w \in  H^1(\Omega)$ and $t \in [0,T]$, with $u(0) = u_t(0) = 0$. 
We set $u= e^{\mu t} v$, $\mu>0$, and write down the equation satisfied
by $v$, as before. Setting $w=v_{t}$, we get 
\begin{equation*}
\begin{aligned}
    &\frac{1}{2}\frac{d}{dt} \int_{\Omega} \rho |v_t(t)|^2  d \mathbf x  
+ \int_{\partial \Omega^-}   \sqrt{\rho} \sqrt{\alpha_{h_1}} |v_t(t)|^2  
d s
+ 2 \mu \int_{\Omega} \rho |v_t(t)|^2  d \mathbf x  + \\
    &\frac{1}{2}\frac{d}{dt}\left [ \int_{\Omega} \alpha_{h_1} |\nabla v(t)|^2  d
 \mathbf x  
+  \mu^2 \int_{\Omega}  \rho  |v(t)|^2    \, d \mathbf x
+  \mu \int_{\partial \Omega^-}  \sqrt{\rho} \sqrt{\alpha_{h_1}}
 |v(t)|^2    \, d s
\right] = \\
&\int_{\Omega} e^{-\mu t} (\alpha_{h_2} \! -  \! \alpha_{h_1}) \,
\nabla  u_2(t)  \nabla  v_t(t)  \, d \mathbf x  -
 \int_{\partial \Omega^-} \hskip -4mm
  e^{-\mu t} \sqrt{\rho} (\sqrt{\alpha_{h_1}} \! - \!
  \sqrt{\alpha_{h_2}}) u_{2,t}(t) v_t(t)  \, d s  
\end{aligned}
\end{equation*}

%\begin{eqnarray*} 
%\begin{array}{l} \displaystyle
%\frac{1}{2}\frac{d}{dt} \int_{\Omega} \rho |v_t(t)|^2  d \mathbf x  
%+ \int_{\partial \Omega^-}   \sqrt{\rho} \sqrt{\alpha_{h_1}} |v_t(t)|^2  
%d s
%+ 2 \mu \int_{\Omega} \rho |v_t(t)|^2  d \mathbf x  +
%\\[2ex]  \displaystyle 
%\frac{1}{2}\frac{d}{dt}\left [ \int_{\Omega} \alpha_{h_1} |\nabla v(t)|^2  d
% \mathbf x  
%+  \mu^2 \int_{\Omega}  \rho  |v(t)|^2    \, d \mathbf x
%+  \mu \int_{\partial \Omega^-}  \sqrt{\rho} \sqrt{\alpha_{h_1}}
% |v(t)|^2    \, d s
%\right] = 
%\\[2ex]  \displaystyle 
%\int_{\Omega} e^{-\mu t} (\alpha_{h_2} \! -  \! \alpha_{h_1}) \,
%\nabla  u_2(t)  \nabla  v_t(t)  \, d \mathbf x  -
% \int_{\partial \Omega^-} \hskip -4mm
%  e^{-\mu t} \sqrt{\rho} (\sqrt{\alpha_{h_1}} \! - \!
%  \sqrt{\alpha_{h_2}}) u_{2,t}(t) v_t(t)  \, d s  
%\end{array} 
%\end{eqnarray*}
We have the identity
\begin{equation*}
    \begin{aligned}
            &\int_{\Omega} e^{-\mu t} (\alpha_{h_2} \! -  \! \alpha_{h_1}) \,
            \nabla  u_2(t)  \nabla  v_t(t)  \, d \mathbf x  = \frac{d}{dt}
            \int_{\Omega} e^{-\mu t} (\alpha_{h_2} \! -  \! \alpha_{h_1}) \,
            \nabla  u_2(t)  \nabla  v(t)  \, d \mathbf x \\
            &+ \mu \int_{\Omega} e^{-\mu t} (\alpha_{h_2} \! -  \! \alpha_{h_1}) \,
            \nabla  u_2(t)  \nabla  v(t)  \, d \mathbf x 
            - \int_{\Omega} e^{-\mu t} (\alpha_{h_2} \! -  \! \alpha_{h_1}) \,
            \nabla  u_{2,t}(t)  \nabla  v(t)  \, d \mathbf x.
    \end{aligned}
\end{equation*}

%\begin{eqnarray*}
% \int_{\Omega} e^{-\mu t} (\alpha_{h_2} \! -  \! \alpha_{h_1}) \,
%\nabla  u_2(t)  \nabla  v_t(t)  \, d \mathbf x  = \frac{d}{dt}
%\int_{\Omega} e^{-\mu t} (\alpha_{h_2} \! -  \! \alpha_{h_1}) \,
%\nabla  u_2(t)  \nabla  v(t)  \, d \mathbf x \\
%+ \mu \int_{\Omega} e^{-\mu t} (\alpha_{h_2} \! -  \! \alpha_{h_1}) \,
%\nabla  u_2(t)  \nabla  v(t)  \, d \mathbf x  
%- \int_{\Omega} e^{-\mu t} (\alpha_{h_2} \! -  \! \alpha_{h_1}) \,
%\nabla  u_{2,t}(t)  \nabla  v(t)  \, d \mathbf x.
%\end{eqnarray*}
Integrating we find
\begin{equation} \label{eq:energy1}
    \begin{aligned}
        &\frac{1}{2} \int_{\Omega}  \rho |v_t(t)|^2  d \mathbf x  
         + \int_0^t \int_{\partial \Omega^-}   \sqrt{\rho} \sqrt{\alpha_{h_1}} |v_t(\tau)|^2  
        d s d\tau \\
        & + 2 \mu \int_0^t  \int_{\Omega} \rho |v_t(\tau)|^2  d \mathbf x  d\tau +    \frac{1}{2}\left [ \int_{\Omega} \alpha_{h_1} |\nabla v(t)|^2  d  \mathbf x \right. \\  
        & \left.+  \mu^2 \int_{\Omega}  \rho  |v(t)|^2    \, d \mathbf x +  \mu \int_{\partial \Omega^-}  \sqrt{\rho} \sqrt{\alpha_{h_1}} |v(t)|^2    \, d s \right] = \\
        &   \int_{\Omega} e^{-\mu t} (\alpha_{h_2} \! -  \! \alpha_{h_1}) \,
        \nabla  u_2(t)  \nabla  v(t)  \, d \mathbf x \\  
        &- \int_0^t  \int_{\partial \Omega^-} \hskip -4mm
        e^{-\mu s}  \rho^{1/2}  ( \alpha_{h_1}^{1/2} \! - \!
        \alpha_{h_2}^{1/2} ) u_{2,t}(\tau) v_t(\tau)  \, d s  d\tau \\
        & +  \int_0^t  \int_{\Omega} e^{-\mu \tau}  (\alpha_{h_2} \! -  \! \alpha_{h_1}) \, 
        (\mu -\nabla  u_{2,t}(t)  )\nabla  v(\tau)  \, d \mathbf x d\tau  \\
        &= I_1 + \int_0^t I_2(\tau) d\tau 
        + \int_0^t I_3(\tau) d\tau.
    \end{aligned}
\end{equation}

%\begin{eqnarray} 
%\begin{array}{l} \displaystyle
%\frac{1}{2} \int_{\Omega} \rho |v_t(t)|^2  d \mathbf x  
%+ \int_0^t \int_{\partial \Omega^-}   \sqrt{\rho} \sqrt{\alpha_{h_1}} |v_t(\tau)|^2  
%d s d\tau
%+ 2 \mu \int_0^t  \int_{\Omega} \rho |v_t(\tau)|^2  d \mathbf x  d\tau +
%\\[2ex]  \displaystyle 
%\frac{1}{2}\left [ \int_{\Omega} \alpha_{h_1} |\nabla v(t)|^2  d
% \mathbf x  
%+  \mu^2 \int_{\Omega}  \rho  |v(t)|^2    \, d \mathbf x
%+  \mu \int_{\partial \Omega^-}  \sqrt{\rho} \sqrt{\alpha_{h_1}}
% |v(t)|^2    \, d s 
%\right] = 
%\\[2ex]  \displaystyle 
%  \int_{\Omega} e^{-\mu t} (\alpha_{h_2} \! -  \! \alpha_{h_1}) \,
%\nabla  u_2(t)  \nabla  v(t)  \, d \mathbf x  
%- \int_0^t  \int_{\partial \Omega^-} \hskip -4mm
%  e^{-\mu s}  \rho^{1/2}  ( \alpha_{h_1}^{1/2} \! - \!
%  \alpha_{h_2}^{1/2} ) u_{2,t}(\tau) v_t(\tau)  \, d s  d\tau
%\\[2ex]  \displaystyle 
%+  \int_0^t  \int_{\Omega} e^{-\mu \tau}  (\alpha_{h_2} \! -  \! \alpha_{h_1}) \, 
%(\mu -\nabla  u_{2,t}(t)  )\nabla  v(\tau)  \, d \mathbf x d\tau 
%= I_1 + \int_0^t I_2(\tau) d\tau 
%+ \int_0^t I_3(\tau) d\tau.
%\end{array}  \label{eq:energy1}
%\end{eqnarray}
Using Young's inequality and Sobolev embeddings \cite{brezis} the integrals in the 
right hand side are bounded as follows. First, 
\begin{eqnarray}
\begin{array}{l}
|I_1| \leq 
C_\epsilon\int_{\Omega} | (\alpha_{h_1} - \alpha_{h_2})|^2 |\nabla u_2|^2 d\mathbf x 
+ \epsilon \| \nabla v \|_2^2  \leq  
\\[2ex]  \displaystyle 
C_\epsilon\left( \int_{\Omega} | \alpha_{h_1} - \alpha_{h_2}|^{2q'} \right)^{1/q'} 
\left( \int_{\Omega} |\nabla u_2|^{2q} \right)^{1/q} 
+ \epsilon \| \nabla v \|_2^2,
\end{array} 
\label{eq:i1_1}
\end{eqnarray}
with $\frac{1}{q} + \frac{1}{q'} = 1$, $2q > 2$, with $q>1$ and $q'=\frac{q}{q-1}$.
By Sobolev's injections, $H^1(\Omega) \subset L^q$, $q \leq q^*$, $q^* = \frac{2 n}{n-2}$
if $n>2$ and $H^1(\Omega) \subset L^q$, $2 \leq q <\infty$ if $n=2$. Here, $n=2$, thus
we can take any $1< q < \infty$. Notice that
\begin{eqnarray}
\begin{array}{l}
\int_{\Omega} | (\alpha_{h_1} - \alpha_{h_2})|^{2q'}  d \mathbf x \leq
\| \alpha_{h_1} - \alpha_{h_2} \|_{\infty}^{2q'-2} 
\int_{\Omega} | (\alpha_{h_1} - \alpha_{h_2})|^{2} d \mathbf x
\\[2ex]  \displaystyle 
\leq \alpha_{\rm max}^{2q'-2} \int_{\Omega} | \alpha_{h_1} - \alpha_{h_2}|^{2} d \mathbf x. 
\end{array} \label{eq:aux}
\end{eqnarray}
Inserting (\ref{eq:aux}) in (\ref{eq:i1_1}), we find
\begin{eqnarray}
\begin{array}{l}
|I_1(t)| \leq 
C_{\epsilon_1} \alpha_{\rm max}^{(2q'-2)/q'} \|\alpha_{h_1} - \alpha_{h_2}\|_2^{2/q'}
\| \nabla u_2 (t) \|_{2q}^2  + \epsilon_1 \| \nabla v (t)\|_2^2, \quad t>0, 
\end{array} \label{eq:i1_2}
\end{eqnarray}
where $C_{\epsilon_1} >0$ is adjusted to each $\varepsilon_1 >0$\cite{brezis}.
In a similar way, we find
\begin{eqnarray}
\begin{array}{l}
 | I_2(s)|  \leq C_{\epsilon_2} \rho_{\rm max}^{1/2} 
 \alpha_{\rm max}^{(2q'-2)/2q'} 
\| \alpha_{h_1}^{1/2}   -  \alpha_{h_2}^{1/2}  \|_{L^2(\partial \Omega^-)}^{2/q'}
  \| u_{2,t}(t) \|_{L^{2q}(\partial \Omega^-)}^2
\\[2ex]  \displaystyle   
+  \epsilon_2 \rho_{\rm max}^{1/2} \|v_t(s)  \|_{L^2(\partial \Omega^-)}^2,
\end{array} \label{eq:i2_1}
\end{eqnarray}
\begin{eqnarray}
\begin{array}{l}
 | I_3(s)|   \leq C_{\epsilon_3}  \rho_{\rm max}^{1/2} 
\alpha_{\rm max}^{(2q'-2)/q'} \|\alpha_{h_1} - \alpha_{h_2}\|_2^{2/q'}
\| \mu -\nabla  u_{2,t}(s) \|_{2q}^2 
\\[2ex]  \displaystyle  
+ \epsilon_3 \rho_{\rm max}^{1/2} \| \nabla v(s)\|_2^2.
\end{array} \label{eq:i3}
\end{eqnarray}
Since $\alpha_{h_1}$ and $\alpha_{h_2}$ remain bounded away from zero, we can write
\begin{eqnarray}
\| \alpha_{h_1}^{1/2}   -  \alpha_{h_2}^{1/2}  \|_{L^2(\partial \Omega^-)}
\leq C(\alpha_{\rm min}, \alpha_{\rm max})  \| \alpha_{h_1}  -  \alpha_{h_2} \|_{L^2(\partial \Omega^-)}.
\label{eq:i2_3}
\end{eqnarray}
Since $u_{2,t}$ has $H^2$ regularity in neighboorhods (contained in $y>h_2(x)$ or $y<h_2(x)$) of the $6$ boundary segments $u_{2,t}$ belongs
to $W^{1,2q}$ in them by Sobolev's embeddings and has a $L^{2q}(\partial \Omega^-)$ 
traces bounded by the norms $\|u_{2, t}\|_{2}$, $\|\nabla u_{2, t}\|_2$, $\|u_{2, ttt}\|_2$ and $\| f'g \|_2$:
\begin{eqnarray}
\| u_{2,t}(t) \|_{L^{2q}(\partial \Omega^-)}
\leq C(\Omega, h_2, \|u_{2, t}\|_{2},  \|\nabla u_{2, t}\|_2,  \|u_{2, ttt}\|_2,  \| f'g \|_2).
\label{eq:i2_4}
\end{eqnarray}

Putting together (\ref{eq:energy1})-(\ref{eq:i2_4}) and (\ref{est_2}) we get
\begin{equation} \label{eq:energy2}
    \begin{aligned}
        &\frac{\rho_{\min}}{4} \int_{\Omega}|v_t(t)|^2  d \mathbf x  
        + \frac{1}{2}(\rho_{\rm min} \alpha_{\rm min})^{1/2}
        \int_0^t \int_{\partial \Omega^-}  |v_t(\tau)|^2  d s d\tau \\
        & + 2 \mu \rho_{\rm min} \int_0^t  \int_{\Omega} |v_t(\tau)|^2  d \mathbf x  d\tau + \frac{1}{2}\left[  \alpha_{\rm min} \int_{\Omega} |\nabla v(t)|^2  d \mathbf x   \right. \\
        &\left.+  \mu^2 \rho_{\rm min} \int_{\Omega}  |v(t)|^2    \, d \mathbf x +  \mu (\rho_{\rm min} \alpha_{\rm min})^{1/2} \int_{\partial \Omega^-}   |v(t)|^2    \, d s \right] = \\
        &  C_1 \|\alpha_{h_1} - \alpha_{h_2}\|_{L^2(\Omega)}^{2/q'} +
        C_2 \| \alpha_{h_1}  -  \alpha_{h_2} \|_{L^2(\partial \Omega^-)}^{2/q'} \\
        &+  \epsilon_3  \rho_{\rm max}^{1/2}
        \int_0^t  \int_{\Omega} | \nabla v(\tau) |^2 d \mathbf x d\tau.
        \end{aligned}
\end{equation}

%\begin{eqnarray} 
%\begin{array}{l} \displaystyle
%\frac{\rho_{\min}}{4} \int_{\Omega}|v_t(t)|^2  d \mathbf x  
%+ \frac{1}{2}(\rho_{\rm min} \alpha_{\rm min})^{1/2}
%\int_0^t \int_{\partial \Omega^-}  |v_t(\tau)|^2  d s d\tau
%+ 2 \mu \rho_{\rm min} \int_0^t  \int_{\Omega} |v_t(\tau)|^2  d \mathbf x  d\tau  
%\\[2ex]  \displaystyle 
%+ \frac{1}{2}\left [  \alpha_{\rm min} \int_{\Omega} |\nabla v(t)|^2  d \mathbf x  +  \mu^2 %\rho_{\rm min} \int_{\Omega}  |v(t)|^2    \, d \mathbf x
%+  \mu (\rho_{\rm min} \alpha_{\rm min})^{1/2} \int_{\partial \Omega^-}   |v(t)|^2    \, d s
%\right] = 
%\\[2ex]  \displaystyle 
% C_1 \|\alpha_{h_1} - \alpha_{h_2}\|_{L^2(\Omega)}^{2/q'} +
%5 C_2 \| \alpha_{h_1}  -  \alpha_{h_2} \|_{L^2(\partial \Omega^-)}^{2/q'} 
%+  \epsilon_3  \rho_{\rm max}^{1/2}
%\int_0^t  \int_{\Omega} | \nabla v(\tau) |^2 d \mathbf x d\tau.
%\end{array}  \label{eq:energy2}
%\end{eqnarray}
where we have set $\epsilon_1= \frac{\rho_{\rm min}}{4}$, 
$\epsilon_2= (\rho_{\min}  \alpha_{\min })^{1/2} / (2 \rho_{\max}^{1/2})$,
and the constants $C_1$, $C_2$ depend on $u_2$, $\mu$, $\rho_{\rm max}$,
$\rho_{\rm min}$, $\alpha_{\rm max}$, $\alpha_{\rm min}$ and $q$. 
Gr{\"o}nwall's inequality for $\nabla v$ implies that $\|\nabla v (t) \|_2^2$ is bounded
in terms of  $C_1 \|\alpha_{h_1} - \alpha_{h_2}\|_{L^2(\Omega)}^{2/q'} +
 C_2 \| \alpha_{h_1}  -  \alpha_{h_2} \|_{L^2(\partial \Omega^-)}^{2/q'} $ for $t\in[0,T]$.
 Then, (\ref{eq:energy2}) implies that $\|v_t (t) \|_2^2$, $\|v (t) \|_{H^1}^2$ and 
 $\|v (t) \|_{L^2(\partial \Omega^+)}$ (thanks to trace inequalities) too.
 \end{pf}

 \begin{corollary} \label{thm:A3}
 In our particular setting, the conclusion of Theorem 2
 remains true replacing the norms of $\alpha_{h_1}-\alpha_{h_2}$ by 
 $\|h_1-h_2\|_\infty$.
 \end{corollary}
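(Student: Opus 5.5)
The plan is to reduce the corollary to Theorem \ref{thm:A2} by bounding the two coefficient norms that appear there, $\|\alpha_{h_1}-\alpha_{h_2}\|_{L^2(\Omega)}$ and $\|\alpha_{h_1}-\alpha_{h_2}\|_{L^2(\partial\Omega^-)}$, by a power of $\|h_1-h_2\|_\infty$. The conclusion then follows by substitution into the final Gr\"onwall bound of that proof. In Theorem \ref{thm:A2} these norms actually enter raised to $2/q'$, and the bound is on squared norms, so the dependence on $\|h_1-h_2\|_\infty$ will be Hölder with an exponent we track rather than literally linear.

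\textbf{Interior term.} By \eqref{coefficients}, $\alpha_{h_1}(x,y)\neq\alpha_{h_2}(x,y)$ only when $y$ lies between $h_1(x)$ and $h_2(x)$. There the difference is at most $\lambda^-+2\mu^--\min_y\lambda^+(y)\le \alpha_{\max}$. Integrating in $y$ for each fixed $x\in[a,b]$ gives
\begin{equation*}
\|\alpha_{h_1}-\alpha_{h_2}\|_{L^2(\Omega)}^2\le \alpha_{\max}^2\int_a^b |h_1(x)-h_2(x)|\,dx\le \alpha_{\max}^2 (b-a)\,\|h_1-h_2\|_\infty .
\end{equation*}
So the interior norm is controlled by $\|h_1-h_2\|_\infty^{1/2}$. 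The dependence of $\lambda^+$ on $y$ is harmless, since it is identical for both profiles on the water side.

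\textbf{Boundary term.} $\partial\Omega^-$ consists of the bottom edge and the two lateral edges. On the bottom edge $y=-1.5$ both profiles lie strictly above it, since the curves are contained in $\Omega$, so both coefficients equal the rock value and the difference vanishes. On a lateral edge $x=a$ (and similarly $x=b$), the difference is nonzero only on the segment of $y$ between $h_1(a)$ and $h_2(a)$. This segment has length at most $\|h_1-h_2\|_\infty$, so
\begin{equation*}
\|\alpha_{h_1}-\alpha_{h_2}\|^2_{L^2(\partial\Omega^-)}\le 2\alpha_{\max}^2\|h_1-h_2\|_\infty .
\end{equation*}
The same reasoning covers $\|\sqrt{\alpha_{h_1}}-\sqrt{\alpha_{h_2}}\|$ through \eqref{eq:i2_3}.

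\textbf{Main obstacle.} The delicate point is uniformity of the constants. The constants $C_1, C_2$ in \eqref{eq:energy2} depend on $u_2$, and in particular through \eqref{eq:i2_4} on $h_2$, via the boundary $H^2$ neighbourhoods. To obtain constants independent of $h_1,h_2$, I would restrict to profiles staying a fixed distance from $\partial\Omega^+$ and from the bottom. This is the paper's setting, because the curves meet the lateral edges horizontally and lie inside $\Omega$. Under that restriction the covering sets $U_j$ can be chosen once and for all, and the bounds \eqref{est_2} are uniform by Theorem \ref{thm:A1}. With that fixed, substituting the two estimates above into the Gr\"onwall conclusion of Theorem \ref{thm:A2} yields the stated bounds in terms of $\|h_1-h_2\|_\infty$.
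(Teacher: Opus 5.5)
Your proposal is correct and follows the paper's route. The paper's entire proof is the strip estimate $\int_\Omega|\alpha_{h_1}-\alpha_{h_2}|^2\,d\mathbf x\le 2\alpha_{\max}^2(b-a)\|h_1-h_2\|_\infty$, obtained by integrating over the region between the two curves. Two of your additions fill points the paper leaves implicit: the $L^2(\partial\Omega^-)$ term (zero on the bottom edge, a segment of length at most $\|h_1-h_2\|_\infty$ on each lateral edge), and the observation that the resulting dependence is H\"older (exponent $1/2$, then $2/q'$) rather than literally linear.

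Your final paragraph is not needed, since Theorem~\ref{thm:A2} as stated already asserts constants independent of $h_1,h_2$. Its key claim is also not right: the lateral boundary pieces carrying the $U_j$ are cut at the moving points $(a,h_2(a))$ and $(b,h_2(b))$. So the covering along the lateral edges cannot be fixed independently of $h_2$, and the paper's $H^2$ argument excludes those merging points anyway.

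If you want $h_2$-independent control of \eqref{eq:i2_4}, bypass $H^2$ regularity altogether. Theorem~\ref{thm:A1} gives $u_{2,t}(t)\in H^1(\Omega)$ with bounds depending only on the coefficient bounds. On the fixed rectangle, $H^1(\Omega)\to H^{1/2}(\partial\Omega)\hookrightarrow L^{2q}(\partial\Omega)$ for every finite $q$, which yields the trace bound uniformly in $h_2$.
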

 \begin{pf} Notice that
\begin{equation}
    \begin{aligned}
        \int_{\Omega} | \alpha_{h_1} -\alpha_{h_2} |^2 d \mathbf x &\leq 2 \alpha_{\rm max}^2 
\int_a^b dx \int_{{\rm min}(h_1(x),h_2(x))}^{{\rm max}(h_1(x),h_2(x))} dy \\
        &\leq 2 \alpha_{\rm max}^2 \int_a^b |{\rm max}(h_1(x),h_2(x)) - {\rm min}(h_1(x),h_2(x))| dx \\
        &\leq 2 \alpha_{\rm max}^2 (b-a) \| h_1-h_2 \|_\infty.
    \end{aligned}
\end{equation}
 
% \begin{eqnarray*}
%\int_{\Omega} | \alpha_{h_1} -\alpha_{h_2} |^2 d \mathbf x \leq 2 \alpha_{\rm max}^2 
%\int_a^b dx \int_{{\rm min}(h_1(x),h_2(x))}^{{\rm max}(h_1(x),h_2(x))} dy  
%\\
%\leq 2 \alpha_{\rm max}^2 \int_a^b |{\rm max}(h_1(x),h_2(x)) - {\rm min}(h_1(x),h_2(x))| dx
%\leq 2 \alpha_{\rm max}^2 (b-a) \| h_1-h_2 \|_\infty. 
% \end{eqnarray*}
c \end{pf}

\begin{rmk} \label{thm:A4}
Combining Theorem 2 with Corollary 3 in our case we can conclude that 
\[
    \|u_{h_1} - u_{h_2} \|_{L^2(\partial \Omega^+)}  \leq C \| h_1 - h_2 \|_{L^2},
\]
for some $C>0$.
\end{rmk}

% To print the credit authorship contribution details
\printcredits

%% Loading bibliography style file
%\bibliographystyle{model1-num-names}
\bibliographystyle{cas-model2-names}

% Loading bibliography database
\bibliography{cas-refs}

@article{adams,
  title={{S}obolev {S}paces},
  author={Adams, Robert A.},
  publisher = {Academic Press},
  year = {1975},
}

@article{lionsmagenes,
  title={Problemes aux limites non-homogenes et applications},
  author={Lions, Jacques Louis and Magenes, Enrico},
  publisher = {Dunod},
  volume={1},
  year = {1968},
}

@article{grisvard,
  title={ Elliptic Problems in Nonsmooth Domains},
  author={Grisvard, Pierre},
  publisher = {Pitman},
  year = {1985},
}

@article{brezis,
  title={Analyse fonctionnelle},
  author={Brezis, Haim},
  publisher = {Masson},
  year = {1987},
}

@article{tromp2020earth,
  title={Seismic wavefield imaging of Earth’s interior across scales},
  author={Tromp, Jeroen},
  journal={Nature Reviews Earth \& Environment},
  volume={1},
  number={},
  pages={40-53},
  year={2020},
  publisher={}
}

@article{tsogka2002time,
  title={Time reversal through a solid--liquid interface and super-resolution},
  author={Tsogka, Chrysoula and Papanicolaou, George C},
  journal={Inverse problems},
  volume={18},
  number={6},
  pages={1639},
  year={2002},
  publisher={IOP Publishing}
}

@article{borcea2021reduced,
  title={Reduced order model approach for imaging with waves},
  author={Borcea, Liliana and Garnier, Josselin and Mamonov, Alexander V and Zimmerling, J{\"o}rn},
  journal={Inverse Problems},
  volume={38},
  number={2},
  pages={025004},
  year={2021},
  publisher={IOP Publishing}
}

@book{Quarteroni2014,
  title = {Numerical Models for Differential Problems},
  ISBN = {9788847055223},
  url = {http://dx.doi.org/10.1007/978-88-470-5522-3},
  DOI = {10.1007/978-88-470-5522-3},
  publisher = {Springer Milan},
  author = {Quarteroni,  Alfio},
  year = {2014}
}

@book{Edsberg2015-rk,
  title = {{I}ntroduction {T}o {C}omputation and {M}odeling {F}or {D}ifferential {E}quations},
  author = {Edsberg, Lennart},
  publisher = {John Wiley \& Sons},
  edition = {2},
  year = {2015},
}

@book{hairer2006structure,
  title = {{S}tructure-{P}reserving {A}lgorithms {F}or {O}rdinary {D}ifferential {E}quations},
  author = {Hairer, Ernst and Lubich, Christian and Wanner, Gerhard},
  year = {2006},
  publisher = {Springer-Verlag Berlin}
}

@book{gatto2022mathematical,
  title={{M}athematical {F}oundations of {F}inite {E}lements and {I}terative {S}olvers},
  author={Gatto, Paolo},
  year={2022},
  publisher={SIAM}
}

@book{ibragimov2012gaussian,
  title={{G}aussian {R}andom {P}rocesses},
  author={Ibragimov, Ildar Abdulovich and Rozanov, Yurii Antol'evich},
  volume={9},
  year={2012},
  publisher={Springer Science \& Business Media}
}

@book{dudley2018real,
  title={{R}eal {A}nalysis and {P}robability},
  author={Dudley, Richard M},
  year={2018},
  publisher={CRC Press}
}

@article{dunlop2017hierarchical,
  title={Hierarchical {B}ayesian level set inversion},
  author={Dunlop, Matthew M and Iglesias, Marco A and Stuart, Andrew M},
  journal={Statistics and Computing},
  volume={27},
  pages={1555--1584},
  year={2017},
  publisher={Springer}
}

@Inbook{Dashti2017,
    author={Dashti, Masoumeh and Stuart, Andrew M.},
    title={The {B}ayesian Approach to Inverse Problems},
    bookTitle={{H}andbook of {U}ncertainty {Q}uantification},
    year={2017},
    publisher={Springer International Publishing},
    pages={311--428},
    doi={10.1007/978-3-319-12385-1_7},
}

@book{tolstov2012fourier,
  title={{F}ourier {S}eries},
  author={Tolstov, Georgi P},
  year={2012},
  publisher={Courier Corporation}
}

@book{briggs1995dft,
  title={The DFT: an owner's manual for the discrete Fourier transform},
  author={Briggs, William L and Henson, Van Emden},
  year={1995},
  publisher={SIAM}
}

@article{carpio2023shear,
    doi = {10.1088/1361-6420/acd5f8},
    url = {https://dx.doi.org/10.1088/1361-6420/acd5f8},
    year = {2023},
    month = {jun},
    publisher = {IOP Publishing},
    volume = {39},
    number = {7},
    pages = {075007},
    author = {Ana Carpio and Elena Cebrián and Andrea Gutiérrez},
    title = {{O}bject based {B}ayesian full-waveform inversion for shear elastography},
    journal = {Inverse Problems},
}

@article{maboudi2024,
  title={Inferring Object Boundaries and Their Roughness with Uncertainty Quantification},
  author={Afkham, Babak Maboudie and Riis, Nicolai Andr{\'e} Brogaard and Dong, Yiqiu and Hansen, Per Christian},
  journal={Journal of Mathematical Imaging and Vision},
  pages={1--16},
  year={2024},
  publisher={Springer},
  doi={https://doi.org/10.1007/s10851-024-01207-9}
}

@article{roininen2014whittle,
  title={Whittle-{M}at{\'e}rn priors for {B}ayesian statistical inversion with applications in electrical impedance tomography.},
  author={Roininen, Lassi and Huttunen, Janne MJ and Lasanen, Sari},
  journal={Inverse Problems \& Imaging},
  volume={8},
  number={2},
  year={2014}
}

@book{schervish2012theory,
  title={{T}heory of {S}tatistics},
  author={Schervish, Mark J},
  year={2012},
  publisher={Springer Science \& Business Media}
}

@article{iglesias2016bayesian,
  title={A {B}ayesian level set method for geometric inverse problems},
  author={Iglesias, Marco A and Lu, Yulong and Stuart, Andrew M},
  journal={Interfaces and free boundaries},
  volume={18},
  number={2},
  pages={181--217},
  year={2016}
}

@book{le2000asymptotics,
  title={{A}symptotics in {S}tatistics: {S}ome {B}asic {C}oncepts},
  author={Le Cam, Lucien Marie and Yang, Grace Lo},
  year={2000},
  publisher={Springer Science \& Business Media}
}

@article{e6c3f9b1-115e-37a9-b4d4-010ab9364de8,
  ISSN = {08834237, 21688745},
  URL = {http://www.jstor.org/stable/43288425},
  author = {S. L. Cotter and G. O. Roberts and A. M. Stuart and D. White},
  journal = {Statistical Science},
  number = {3},
  pages = {424--446},
  publisher = {Institute of Mathematical Statistics},
  title = {{MCMC} Methods for Functions: Modifying Old Algorithms to Make Them Faster},
  urldate = {2024-03-20},
  volume = {28},
  year = {2013}
}

@article{feldman1958equivalence,
  title={{E}quivalence and perpendicularity of {G}aussian processes},
  author={Feldman, Jacob},
  journal={Pacific J. Math},
  volume={8},
  number={4},
  pages={699--708},
  year={1958}
}

@book{kaipio2006statistical,
  title={{S}tatistical and {C}omputational {I}nverse {P}roblems},
  author={Kaipio, Jari and Somersalo, Erkki},
  volume={160},
  year={2006},
  publisher={Springer Science \& Business Media}
}

@article{karamehmedovic2013efficient,
  title={An efficient rough-interface scattering model for embedded nano-structures},
  author={Karamehmedovi{\'c}, Mirza and Hansen, Poul-Erik and Wriedt, Thomas},
  journal={Thin Solid Films},
  volume={541},
  pages={51--56},
  year={2013},
  publisher={Elsevier}
}

@article{schroder2011modeling,
  title={Modeling of light scattering in different regimes of surface roughness},
  author={Schr{\"o}der, Sven and Duparr{\'e}, Angela and Coriand, Luisa and T{\"u}nnermann, Andreas and Penalver, Dayana H and Harvey, James E},
  journal={Optics express},
  volume={19},
  number={10},
  pages={9820--9835},
  year={2011},
  publisher={Optica Publishing Group}
}

@article{sym13091702,
  author = {Carpio, Ana and Rapún, María-Luisa},
  title = {Multifrequency Topological Derivative Approach to Inverse Scattering Problems in Attenuating Media},
  journal = {Symmetry},
  volume = {13},
  year = {2021},
  number = {9},
  issn = {2073-8994},
  doi = {10.3390/sym13091702}
}

@misc{sherlock2009optimal,
  title={Optimal scaling of the random walk Metropolis on elliptically symmetric unimodal targets},
  author={Sherlock, Chris and Roberts, Gareth},
  year={2009}
}

@article{afkham2024bayesian,
  title={A Bayesian approach for consistent reconstruction of inclusions},
  author={Afkham, Babak Maboudi and Knudsen, Kim and Rasmussen, Aksel Kaastrup and Tarvainen, Tanja},
  journal={Inverse Problems},
  volume={40},
  number={4},
  pages={045004},
  year={2024},
  publisher={IOP Publishing}
}

@book{robert1999monte,
  title={{M}onte {C}arlo {S}tatistical {M}ethods},
  author={Robert, Christian P and Casella, George and Casella, George},
  volume={2},
  year={1999},
  publisher={Springer}
}

@article{fan1996study,
  title={A study of variable bandwidth selection for local polynomial regression},
  author={Fan, Jianqing and Gijbels, Ir{\`e}ne and Hu, Tien-Chung and Huang, Li-Shan},
  journal={Statistica Sinica},
  pages={113--127},
  year={1996},
  publisher={JSTOR}
}

@article{arviz_2019,
  doi = {10.21105/joss.01143},
  url = {https://doi.org/10.21105/joss.01143},
  year = {2019},
  publisher = {The Open Journal},
  volume = {4},
  number = {33},
  pages = {1143},
  author = {Ravin Kumar and Colin Carroll and Ari Hartikainen and Osvaldo Martin},
  title = {ArviZ a unified library for exploratory analysis of Bayesian models in Python},
  journal = {Journal of Open Source Software}
}

@book{jackson2007high,
  title={{H}igh-frequency seafloor acoustics},
  author={Jackson, Darrell and Richardson, Michael},
  year={2007},
  publisher={Springer Science \& Business Media}
}

@article{anderson2008acoustic,
  title={Acoustic seabed classification: current practice and future directions},
  author={Anderson, John T and Van Holliday, D and Kloser, Rudy and Reid, Dave G and Simard, Yvan},
  journal={ICES Journal of Marine Science},
  volume={65},
  number={6},
  pages={1004--1011},
  year={2008},
  publisher={Oxford University Press}
}

@book{hansen2021computed,
  title={{C}omputed {T}omography:\ {A}lgorithms, {I}nsight, and {J}ust {E}nough {T}heory},
  editor={Hansen, Per Christian and J{\o}rgensen, Jakob and Lionheart, William RB},
  year={2021},
  publisher={SIAM}
}

@article{frederick2020seabed,
  title={Seabed classification using physics-based modeling and machine learning},
  author={Frederick, Christina and Villar, Soledad and Michalopoulou, Zoi-Heleni},
  journal={The Journal of the Acoustical Society of America},
  volume={148},
  number={2},
  pages={859--872},
  year={2020},
  publisher={AIP Publishing}
}

@article{borcea2023data,
  title={When data driven reduced order modeling meets full waveform inversion},
  author={Borcea, Liliana and Garnier, Josselin and Mamonov, Alexander V and Zimmerling, J{\"o}rn},
  journal={arXiv preprint arXiv:2302.05988},
  year={2023}
}

@book{koller2009probabilistic,
  title={{P}robabilistic {G}raphical {M}odels: {P}rinciples and {T}echniques},
  author={Koller, Daphne and Friedman, Nir},
  year={2009},
  publisher={MIT press}
}

@article{suuronen2022cauchy,
  title={{C}auchy {M}arkov random field priors for {B}ayesian inversion},
  author={Suuronen, Jarkko and Chada, Neil K and Roininen, Lassi},
  journal={Statistics and computing},
  volume={32},
  number={2},
  pages={33},
  year={2022},
  publisher={Springer}
}

@article{bardsley2012laplace,
  title={Laplace-distributed increments, the {L}aplace prior, and edge-preserving regularization},
  author={Bardsley, Johnathan M},
  journal={Journal of Inverse and Ill-Posed Problems},
  volume={20},
  number={3},
  pages={271--285},
  year={2012},
  publisher={Walter de Gruyter GmbH \& Co. KG}
}

@article{stuart2010inverse,
  title={Inverse problems: a {B}ayesian perspective},
  author={Stuart, Andrew M},
  journal={Acta numerica},
  volume={19},
  pages={451--559},
  year={2010},
  publisher={Cambridge University Press}
}

@book{bjorno2017applied,
  title={{A}pplied {U}nderwater {A}coustics},
  author={Bj{\o}rn{\o}, Leif},
  year={2017},
  publisher={Elsevier}
}

@inproceedings{valentine2005classification,
  title={Classification of Marine Sublittoral Habitats, with Application to the Northeastern North Application to the Northeastern North {A}merica Region},
  author={Valentine, Page C and Todd, Brian J and Kostylev, Vladimir E},
  booktitle={American Fisheries Society Symposium},
  volume={41},
  pages={183--200},
  year={2005}
}

@book{adams2003sobolev,
  title={Sobolev spaces},
  author={Adams, Robert A and Fournier, John JF},
  volume={140},
  year={2003},
  publisher={Elsevier}
}

@book{landau2012theory,
  title={Theory of elasticity: volume 7},
  author={Landau, Lev Davidovich and Pitaevskii, LP and Kosevich, Arnolʹd Markovich and Lifshitz, Evgenii Mikhailovich},
  volume={7},
  year={2012},
  publisher={Elsevier}
}

@article{ABUGATTAS2025129453,
title = {Quantifying uncertainty in inverse scattering problems set in layered environments},
journal = {Applied Mathematics and Computation},
volume = {500},
pages = {129453},
year = {2025},
issn = {0096-3003},
doi = {https://doi.org/10.1016/j.amc.2025.129453},
author = {Carolina Abugattas and Ana Carpio and Elena Cebrián and Gerardo Oleaga},
}

@article{WANG2025104383,
title = {{CUG-STCN}: A seabed topography classification framework based on knowledge graph-guided vision mamba network},
journal = {International Journal of Applied Earth Observation and Geoinformation},
volume = {136},
pages = {104383},
year = {2025},
issn = {1569-8432},
doi = {https://doi.org/10.1016/j.jag.2025.104383},
url = {https://www.sciencedirect.com/science/article/pii/S1569843225000305},
author = {Haoyi Wang and Weitao Chen and Xianju Li and Qianyong Liang and Xuwen Qin and Jun Li}
}

@article{lurton2015backscatter,
  title={Backscatter measurements by seafloor-mapping sonars},
  author={Lurton, Xavier and Lamarche, Geoffroy and Brown, Craig and Lucieer, Vanessa and RIce, Glen and Schimel, Alexandre and Weber, T},
  journal={Guidelines and recommendations},
  volume={200},
  year={2015},
  publisher={GeoHab Brest, France}
}

@inproceedings{clarke2015multispectral,
  title={Multispectral acoustic backscatter from multibeam, improved classification potential},
  author={Clarke, John E Hughes},
  booktitle={Proceedings of the United States Hydrographic Conference, San Diego, CA, USA},
  pages={16--19},
  year={2015}
}

@article{jia2021underwater,
  title={Underwater reverberation suppression based on non-negative matrix factorisation},
  author={Jia, Hongjian and Li, Xiukun},
  journal={Journal of Sound and Vibration},
  volume={506},
  pages={116166},
  year={2021},
  publisher={Elsevier}
}

@article{jenserud2015measurements,
  title={Measurements and modeling of effects of out-of-plane reverberation on the power delay profile for underwater acoustic channels},
  author={Jenserud, Trond and Ivansson, Sven},
  journal={IEEE Journal of Oceanic Engineering},
  volume={40},
  number={4},
  pages={807--821},
  year={2015},
  publisher={IEEE}
}

@article{fofonoff1983algorithms,
  title={Algorithms for the computation of fundamental properties of seawater.},
  author={Fofonoff, Nicholas Paul and Millard Jr, RC},
  year={1983},
  publisher={Unesco}
}

@article{roberts1998optimal,
  title={Optimal scaling of discrete approximations to {L}angevin diffusions},
  author={Roberts, Gareth O and Rosenthal, Jeffrey S},
  journal={Journal of the Royal Statistical Society: Series B (Statistical Methodology)},
  volume={60},
  number={1},
  pages={255--268},
  year={1998},
  publisher={Wiley Online Library}
}

@article{duane1987hybrid,
  title={Hybrid monte carlo},
  author={Duane, Simon and Kennedy, Anthony D and Pendleton, Brian J and Roweth, Duncan},
  journal={Physics letters B},
  volume={195},
  number={2},
  pages={216--222},
  year={1987},
  publisher={Elsevier}
}

@article{coullon2021ensemble,
  title={Ensemble sampler for infinite-dimensional inverse problems},
  author={Coullon, Jeremie and Webber, Robert J},
  journal={Statistics and Computing},
  volume={31},
  number={3},
  pages={28},
  year={2021},
  publisher={Springer}
}

@article{goodman2010ensemble,
  title={Ensemble samplers with affine invariance},
  author={Goodman, Jonathan and Weare, Jonathan},
  journal={Communications in applied mathematics and computational science},
  volume={5},
  number={1},
  pages={65--80},
  year={2010},
  publisher={Mathematical Sciences Publishers}
}

@article{emcee,
   author = {{Foreman-Mackey}, D. and {Hogg}, D.~W. and {Lang}, D. and {Goodman}, J.},
    title = {emcee: The MCMC Hammer},
  journal = {PASP},
     year = 2013,
   volume = 125,
    pages = {306-312},
   eprint = {1202.3665},
      doi = {10.1086/670067}
}

@article{yamamoto,
author= {Yamamoto, Tokuo and Torii, Tsuyoshi},
title= {Seabed shear modulus profile inversion using surface gravity (water) wave-induced bottom motion},
journal= {Geophysical Journal International},
volume= {85},
number ={2},
year={1986}, 
pages={413--431},
doi = {10.1111/j.1365-246X.1986.tb04521.x}
}

@article{waveguide,
title= {Passive geoacoustic inversion using waveguide characteristic impedance: a sensitivity study Free},
author= {Ren, Qunyan and Hermand, Jean-Pierre},
journal= {Proc. Mtgs. Acoust.},
volume= {17}, 
pages={070095},
year={2012},
doi = {10.1121/1.4795530}
}

@article{multistep,
author= {Yan, X. and Yang, C. and Zhang, T. et al.},
title= {Multi-step inversion of seabed geoacoustic and scattering parameters using joint propagation and reverberation data},
journal= {Mar Geophys Res},
volume= {46}, 
number={13},
year={2025},
doi = {10.1007/s11001-025-09575-6}
}

@incollection{bayesian1,
  author    = {Dosso, S. E. and Holland, C. W.},
  title     = {Bayesian inversion of seabed reflection data},
  booktitle = {Acoustic Sensing Techniques for the Shallow Water Environment},
  editor    = {Caiti, A. and Chapman, N. R. and Hermand, J.-P. and Jesus, S. M.},
  publisher = {Springer},
  address   = {Dordrecht},
  year      = {2006},
  doi       = {10.1007/978-1-4020-4386-4_2}
}

@article{bayesian2,
  title   = {Bayesian geoacoustic parameter inversion for multi-layer seabed in shallow sea using underwater acoustic fields},
  author  = {Xue, Yangyang and Zhu, Hanhao and Wang, Xiaohan and Zheng, Guangxue and Liu, Xu and Wang, Jiahui},
  journal = {Frontiers in Marine Science},
  volume  = {10},
  year    = {2023},
  doi     = {10.3389/fmars.2023.1058542}
}

@article{tomography,
  title   = {From geoacoustic inversion to seabed tomography using a distributed network of sources and receivers},
  author  = {Bonnel, Julien and Vardi, Ariel and Leonard, John and Dosso, Stan},
  journal = {The Journal of the Acoustical Society of America},
  volume  = {157},
  number  = {4},
  pages   = {A196--A196},
  year    = {2025},
  doi     = {10.1121/10.0037871}
}

@book{jensen2011computational,
  title={Computational ocean acoustics},
  author={Jensen, Finn B and Kuperman, William A and Porter, Michael B and Schmidt, Henrik and Tolstoy, Alexandra},
  volume={2011},
  year={2011},
  publisher={Springer}
}

@article{gelbaum2012white,
  title={White noise representation of Gaussian random fields},
  author={Gelbaum, Zachary},
  journal={arXiv preprint arXiv:1201.5635},
  year={2012}
}

@book{brigham1988fast,
  title={The fast Fourier transform and its applications},
  author={Brigham, E Oran},
  year={1988},
  publisher={Prentice-Hall, Inc.}
}

% Biography
%\bio{}
% Here goes the biography details.
%\endbio

%\bio{pic1}
% Here goes the biography details.
%\endbio

\end{document}